\let\square\relax
\pgfplotsset{compat=1.14}
\newcommand\xqed[1]{%
  \leavevmode\unskip\penalty9999 \hbox{}\nobreak\hfill
  \quad\hbox{#1}}
\newenvironment{proof}{\textit{Proof.\;\;}}{\xqed{$\square$}\\}
\newcommand{\integer}{\mathbb{Z}}
\renewcommand{\natural}{\mathbb{N}}
\newcommand{\Hi}{\mathcal{H}}
\renewcommand{\P}{P}
\newcommand{\U}{U}
\newcommand{\D}{D}
\newcommand{\I}{I}
\newcommand{\R}{R}
\newcommand{\V}{\mathcal{V}}
\newcommand{\E}{\mathcal{E}}
\renewcommand{\S}{\mathcal{S}}
\newcommand{\T}{\mathcal{T}}
\renewcommand{\Pr}{{\mathit{\Pi}}}
\newcommand{\0}{\flat}
\def\h #1{\hat{ #1 }}
\newcommand{\amax}{\overline{a}}
\newcommand{\amin}{\underline{a}}
\newcommand{\phin}{\Phi_{\mathrm{in}}}
\newcommand{\phout}{\Phi_{\mathrm{out}}}
\begin{document}
\setlength{\textheight}{8.0truein}

\runninghead{Quantum Fast-Forwarding: Markov Chains and Graph Property Testing}
            {Simon Apers and Alain Sarlette}

\normalsize\textlineskip
\thispagestyle{empty}
\setcounter{page}{1}

\copyrightheading{0}{0}{2018}{000--000}

\vspace*{0.88truein}

\fpage{1}

\centerline{\bf
QUANTUM FAST-FORWARDING:}
\vspace*{.1truein}
\centerline{\bf
MARKOV CHAINS AND GRAPH PROPERTY TESTING}
\vspace*{0.37truein}
\centerline{\footnotesize
SIMON APERS}
\vspace*{0.015truein}
\centerline{\footnotesize\it
Team SECRET, INRIA Paris, France}
\vspace*{10pt}
\centerline{\footnotesize 
ALAIN SARLETTE}
\vspace*{0.015truein}
\centerline{\footnotesize\it 
QUANTIC lab, INRIA Paris, France}
\baselineskip=10pt
\centerline{\footnotesize\it
Department of Electronics and Information Systems, Ghent University, Belgium}
\baselineskip=10pt
\vspace*{.1truein}
\centerline{\footnotesize\it \{simon.apers@inria.fr,alain.sarlette@inria.fr\}}

\vspace*{0.21truein}

\abstracts{
We introduce a new tool for quantum algorithms called quantum fast-forwarding (QFF).
The tool uses quantum walks as a means to quadratically fast-forward a reversible Markov chain.
More specifically, with $P$ the Markov chain transition matrix and $D = \sqrt{P\circ P^T}$ its discriminant matrix ($D=P$ if $P$ is symmetric), we construct a quantum walk algorithm that for any quantum state $\ket{v}$ and integer $t$ returns a quantum state $\epsilon$-close to the state $D^t\ket{v}/\|D^t\ket{v}\|$.
The algorithm uses $O\Big(\|D^t\ket{v}\|^{-1}\sqrt{t\log(\epsilon\|D^t\ket{v}\|)^{-1}}\Big)$ expected quantum walk steps and $O(\|D^t\ket{v}\|^{-1})$ expected reflections around $\ket{v}$.
This shows that quantum walks can accelerate the transient dynamics of Markov chains, complementing the line of results that proves the acceleration of their limit behavior.
}{We show that this tool leads to speedups on random walk algorithms in a very natural way.
Specifically we consider random walk algorithms for testing the graph expansion and clusterability, and show that we can quadratically improve the dependency of the classical property testers on the random walk runtime.
Moreover, our quantum algorithm exponentially improves the space complexity of the classical tester to logarithmic.
As a subroutine of independent interest, we use QFF for determining whether a given pair of nodes lies in the same cluster or in separate clusters.
This solves a robust version of $s$-$t$ connectivity, relevant in a learning context for classifying objects among a set of examples.
The different algorithms crucially rely on the quantum speedup of the transient behavior of random walks.
}{}

\vspace*{10pt}

\keywords{quantum algorithms, quantum walks, property testing}
\vspace*{3pt}

\vspace*{1pt}\textlineskip

\section{Introduction and Summary}
Quantum walks (QWs) have been shown to provide a speedup over classical Markov chains in a variety of settings.
In the class of search problems, there exist quantum walk algorithms that accelerate tasks such as detecting element distinctness \cite{ambainis2007quantum}, finding triangles \cite{magniez2007quantum}, and hitting marked elements \cite{childs2003exponential,szegedy2004quantum,krovi2016quantum}.
In the class of sampling problems, there exist quantum walk algorithms that speed up mixing on graphs \cite{ambainis2001one,aharonov2001quantum,richter2007quantum} and simulated annealing \cite{somma2008quantum,wocjan2008speedup}, and allow for quantum state generation \cite{aharonov2003adiabatic}.
A broader overview is given in the surveys by Ambainis \cite{ambainis2003quantum} and Santha \cite{santha2008quantum}.

In this work we further develop this list by showing that quantum walks can be used in a very natural way to speed up random walk algorithms for graph property testing.
Central to this result is a new tool which we call quantum walk fast-forwarding, allowing to quadratically fast-forward the full dynamics of a reversible Markov chain.
Whereas most existing quantum walk algorithms build on a quadratic speedup towards the Markov chain limit behavior, quantum fast-forwarding allows to accelerate the transient dynamics as well.
This feature is crucial towards speeding up the classical algorithms for property testing.

\subsection*{Quantum Walk Fast-Forwarding}
Many of the above mentioned algorithms are to some extent preceded and inspired by the work of Watrous \cite{watrous2001quantum}.
In this work, he introduced quantum walks as a means to \textit{quantum simulate} random walks as a superposition on a quantum computer, without resorting to intermediate measurements.
With $P$ the transition matrix of a random walk on a regular graph, and $\ket{v}$ some arbitrary initial quantum state, he shows that it is possible to create the quantum state
\[
\ket{P^t v}
= P^t\ket{v}/\|P^t\ket{v}\|
\]
using $O(\|P^t\ket{v}\|^{-2}\, t)$ expected QW steps, and $O(\|P^t\ket{v}\|^{-2})$ expected copies of $\ket{v}$.
This allowed him to quantum simulate the famous random walk algorithm by Aleliunas et al \cite{aleliunas1979random} for undirected graph connectivity, thereby proving that the complexity class symmetric logspace is contained in a quantum analogue of randomized logspace.

In this work we show that quantum walks can create the state $\ket{P^t v}$ quadratically faster.
Indeed, we show that \textit{quantum walks can quadratically fast-forward a general reversible Markov chain}.
More specifically, let $P$ be the transition matrix of a reversible Markov chain on a finite state space with discriminant matrix $D = \sqrt{P\circ P^T}$, where the square root and ``$\circ$''-product are elementwise.
Note that if $P$ is symmetric, as in the work of Watrous, then $D = P$.
Following the work of Szegedy \cite{szegedy2004quantum}, generalizing the approach in \cite{watrous2001quantum}, we can associate a quantum walk to $P$ whose spectral properties are closely tied to those of $P$.
These results provide the ground for most existing quantum walk algorithms, building on a quadratic speedup of the Markov chain limit behavior.
For intermediate times however the behavior of these quantum walks will in general be unrelated to the Markov chain behavior.
We prove that applying a technique called \textit{linear combination of unitaries} \cite{childs2012hamiltonian,berry2015simulating,van2017quantum} on the QW operator allows to mediate this shortcoming.
Indeed, combining this technique with a truncated Chebyshev expansion of the Markov chain eigenvalue function allows to simulate and accelerate the (spectral) dynamics of the Markov chain.
We name this scheme \textit{quantum walk fast-forwarding} (QFF), and it condenses into the following theorem:
\begin{theorem}[Quantum walk fast-forwarding with reflection] \label{thm:QFFg-intro}
Given any quantum state $\ket{v}$, $t\geq 0$ and $\epsilon>0$, QFFg (Algorithm \ref{alg:QFFg}) outputs a quantum state $\epsilon$-close to $\ket{D^t v}$ using
\[
O\Big( \|D^t\ket{v}\|^{-1} \sqrt{t\log(\epsilon \|D^t\ket{v}\|)^{-1}} \Big)
\]
expected QW steps and $O(\|D^t\ket{v}\|^{-1})$ expected reflections around $\ket{v}$.
\end{theorem}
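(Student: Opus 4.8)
\noindent\emph{Proof idea.}
The plan is to realize the operator $D^{t}$ approximately as a function of the quantum walk operator $W$ associated to $P$, by combining a short polynomial approximation with the linear‑combination‑of‑unitaries (LCU) technique, and then to post‑select and amplify. Recall the Szegedy‑type correspondence: there is an isometry $A$ embedding the state space into the enlarged walk space, with $\ket{\bar v}:=A\ket v$, such that $W$ decomposes into (at most) two‑dimensional invariant subspaces on each of which it acts as a rotation by an angle $\theta_{\lambda}$ with $\cos\theta_{\lambda}=\lambda$, for $\lambda$ an eigenvalue of $D$; consequently $A^{\dagger}W^{n}A=T_{n}(D)$, where $T_{n}$ is the $n$‑th Chebyshev polynomial. (If the chosen walk step corresponds to $T_{2}$ rather than $T_{1}$, one additionally uses one reflection $R_{A}$ to reach odd degrees; this is inessential.) Since every eigenvalue of $D$ lies in $[-1,1]$, it suffices to (i) approximate $x\mapsto x^{t}$ on $[-1,1]$ by a short Chebyshev series, (ii) implement that series in $D$ on $\ket{\bar v}$ via LCU on powers of $W$, and (iii) post‑select, amplify, and map back through $A^{\dagger}$.

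For (i) I take the truncated Chebyshev expansion $p_{L}(x)=\sum_{n=0}^{L}a_{n}T_{n}(x)$ of $x^{t}$. Two classical facts suffice: the Chebyshev coefficients of a monomial are nonnegative with $\ell_{1}$‑norm $O(1)$ (it is essentially $1$, and truncation only decreases it), and a Chernoff bound on the binomial coefficients that make up those coefficients gives $\sup_{x\in[-1,1]}\lvert x^{t}-p_{L}(x)\rvert\le\delta$ already for $L=O\!\big(\sqrt{t\log(1/\delta)}\big)$. Choosing $\delta=\Theta(\epsilon\|D^{t}\ket v\|)$ makes $\|p_{L}(D)\ket v-D^{t}\ket v\|\le\delta$, hence the corresponding normalized states differ by at most $\epsilon$; this fixes $L=O\!\big(\sqrt{t\log(\epsilon\|D^{t}\ket v\|)^{-1}}\big)$.

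For (ii) the key observation is that $\tfrac12(W^{n}+W^{-n})$ acts as the \emph{scalar} $\cos(n\theta_{\lambda})=T_{n}(\lambda)$ on each invariant subspace (the sine parts of the two rotations cancel), so $M:=\sum_{n=0}^{L}\tfrac{a_{n}}{2}(W^{n}+W^{-n})$ sends $\ket{\bar v}$ \emph{exactly} to $A\,p_{L}(D)\ket v$, with no component outside $\operatorname{Im}A$. Thus $M$ is a linear combination of at most $2L+1$ unitaries (powers of $W$ and of $W^{\dagger}$) with nonnegative coefficients of $\ell_{1}$‑weight $s=O(1)$, and the standard LCU circuit — PREPARE on an $O(\log L)$‑qubit ancilla, a SELECT applying the controlled walk powers $W^{n}$ with $n\le L$ at cost $O(L)$ walk steps, then $\mathrm{PREPARE}^{\dagger}$ — outputs, conditioned on the ancilla being measured in $\ket 0$, the normalized state $A\,p_{L}(D)\ket v/\|p_{L}(D)\ket v\|$, with success probability $\|p_{L}(D)\ket v\|^{2}/s^{2}=\Theta(\|D^{t}\ket v\|^{2})$.

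Finally, for (iii), since $\|D^{t}\ket v\|$ is in general unknown I wrap the LCU circuit in the expected‑time (exponential‑search) form of amplitude amplification, which uses $O(1/\|D^{t}\ket v\|)$ expected iterations; each iteration consists of one reflection about the initial state $\ket{\bar v}$ — obtained from $O(1)$ reflections about $\ket v$, also used to re‑prepare $\ket{\bar v}$ — and one forward‑and‑back run of the LCU circuit, i.e. $O(L)$ walk steps, and a concluding $A^{\dagger}$ (an uncompute of the walk embedding) returns $p_{L}(D)\ket v/\|p_{L}(D)\ket v\|$, which is $\epsilon$‑close to $\ket{D^{t}v}$ by the choice of $\delta$. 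The totals are $O(\|D^{t}\ket v\|^{-1})$ expected reflections and $O(L\,\|D^{t}\ket v\|^{-1})=O\!\big(\|D^{t}\ket v\|^{-1}\sqrt{t\log(\epsilon\|D^{t}\ket v\|)^{-1}}\big)$ expected walk steps, as claimed. I expect the real work to be (a) verifying the exact no‑leakage identity $M\ket{\bar v}=A\,p_{L}(D)\ket v$ and the spectral correspondence in the precise walk model used, and (b) the careful bookkeeping that combines a bounded‑probability LCU step with amplitude amplification at unknown amplitude while propagating the error $\epsilon$ through the final normalization; the Chebyshev estimates in (i) are routine once the two bounds — $\ell_{1}$‑norm $O(1)$ and degree $O(\sqrt{t\log(1/\delta)})$ — are isolated.
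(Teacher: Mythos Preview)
Your proposal is correct and follows essentially the same route as the paper: a truncated Chebyshev expansion of $x^{t}$ at degree $O\big(\sqrt{t\log(1/\delta)}\big)$ with $\delta=\Theta(\epsilon\|D^{t}\ket v\|)$, implemented via LCU on powers of the walk operator, followed by (unknown-amplitude) amplitude amplification. The only technical deviation is that you realize $T_{n}(D)$ through the symmetrized operator $\tfrac12(W^{n}+W^{-n})$, which indeed acts as the scalar $\cos(n\theta)$ on each two-dimensional invariant subspace in the paper's walk model and hence keeps the state inside $\operatorname{Im}A$, whereas the paper uses the projection identity $\Pi_{\flat}W^{n}\ket{v,\flat}=T_{n}(D)\ket{v,\flat}$ (its Proposition~1) and simply folds the flat-subspace projection into the final LCU ancilla measurement; both choices yield the same asymptotic cost and success probability.
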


Much of the previous work that builds on Szegedy's quantum walk, such as \cite{szegedy2004quantum,somma2008quantum}, relies on the quadratic improvement of the spectral gap when compared to the original Markov chain.
This suffices when one is interested in the limit behavior of the dynamics.
Our result, however, captures the transient dynamics which are governed by the complete spectrum of eigenvalues and corresponding eigenvectors.
Similarly to both the preceding work and the existing classical algorithms, our algorithm makes use of only local information on the graph and Markov chain.
Indeed we show that our algorithm allows quantum walks to simulate the dynamics of this entire classical spectrum, all the while retaining a quadratic acceleration\footnote{This is reminiscent of the work by Miclo and Diaconis \cite{diaconis2013spectral} on second order Markov chains, where they show that decreasing the probability that a Markov chain backtracks improves not only the spectral gap, but the entire spectrum.
In contrast to quantum walks, however, this improvement will generally only be a constant factor, rather than quadratic.}.

Upon completion of this work, we became aware of the recent work on quantum singular value transformation by Gily\'en, Su, Low and Wiebe \cite{gilyen2018quantum}.
This work generalizes a wide range of advances in quantum algorithms for Hamiltonian simulation, Gibbs sampling and others.
In Section \ref{subsec:QSVT} we discuss how our algorithm and its properties can alternatively be proved using this framework.

\subsection*{Quantum Graph Property Testing}

We will show that QFF allows to very naturally speed up random walk algorithms for graph property testing.
Given query access to a graph, property testing aims to determine whether it has a certain property, or whether it is far from having this property.
Among the graphs with degree bound $d$, as we will be focusing on, two $N$-node graphs are said to be $\epsilon$-far from each other if at least $\epsilon d N$ edges have to be removed or added to turn one graph into the other.
As an example, one can ask whether a given graph is bipartite, or whether it is at least $\epsilon$-far from any bipartite graph.
Testing bipartiteness is a relaxation as compared to effectively deciding whether the graph is bipartite or not (but possibly very close to bipartite), allowing for algorithms to work in sublinear time, i.e., scale as $o(N)$ with $N$ the number of nodes in the graph.
This is in contrast to the complexity of deciding properties exactly, which typically requires a number of queries at least linear in the graph size.
In many realistic settings, see for instance the discussion of \textit{massive graphs} in \cite{spielman2013local}, linear in the graph size is no longer computationally feasible, hence sparking the interest in sublinear time algorithms.

We will consider property testers for the expansion and the clusterability of graphs.
We start by discussing the expansion tester of Goldreich and Ron (GR) \cite{goldreich2011testing}, and we prove how QFF allows to accelerate this tester.
Specifically the problem is to determine whether the given graph has vertex expansion $\geq \Upsilon$, or whether it is $\epsilon$-far from any such graph.
The expansion of a graph forms a measure for the random walk mixing time over the graph.
The idea behind the GR tester is therefore to run a number of random walks and count the number of pairwise collisions between the end points.
If a random walk is congested in some low expansion set, then this number will be greater than when the random walk mixes efficiently.
It thus forms a measure for the mixing behavior and expansion of the random walk.
The runtime of their algorithm is
\[
O(N^{1/2+\mu} \Upsilon^{-2} d^2 \epsilon^{-1} \log N),
\]
with the $d^2\Upsilon^{-2}$-factor determined by the random walk runtime.

We show that QFF very naturally allows to speed up this algorithm by fast-forwarding the random walk, and then using quantum amplitude estimation to estimate the 2-norm of the random walk probability distribution.
This 2-norm will similarly be large if the random walk congests and small otherwise, thus allowing to detect whether the random walk is able to efficiently spread out or not.
The runtime of our quantum algorithm is
\[
O(N^{1/2+\mu} \Upsilon^{-1} d^{3/2} \epsilon^{-1} \log N),
\]
which basically follows from quadratically improving the random walk runtime.
In addition, our algorithm only requires $\mathrm{polylog}(N)$ space, as compared to the $\mathrm{poly}(N)$ space requirements of the GR tester.
We note that in preceding work Ambainis, Childs and Liu \cite{ambainis2011quantum} have also used quantum walks to speed up the GR tester, be it in an indirect way.
Roughly they apply Ambainis' element distinctness algorithm \cite{ambainis2007quantum} to speed up the search of collisions between random walk end points from $N^{1/2}$ to $N^{1/3}$.
Compared to our result, they find a complimentary speedup to $O(N^{1/3+\mu} \Upsilon^{-2} d^2 \epsilon^{-1} \log N)$.
Due to the use of the element distinctness algorithm, their algorithm does require $\mathrm{poly}(N)$ space.

We continue by discussing the more recent line of algorithms for testing graph clusterability \cite{czumaj2015testing,chiplunkar2018testing}, forming a natural generalization of the work of Goldreich and Ron.
We discuss how these techniques make use of algorithms for classifying nodes in clusters, and show how QFF allows to accelerate these algorithms.
Such node classification is of relevance beyond the setting of property testing, allowing for instance nearest-neighbor classification of nodes in a learning problem.

We remark that work by Valiant and Valiant \cite{valiant2011testing} shows that estimating the distance in 2-norm between given probability distributions is much easier and more stable than estimating the distance in 1-norm, which would otherwise be the natural choice.
This underlies the fact that many graph property testing algorithms estimate the 2-distance between random walk distributions.
QFF allows to cast a probability distribution $p$ as a quantum state $\ket{p} = p/\|p\|$, which is naturally associated to the 2-norm.
As a consequence, QFF very naturally leads to quantum algorithms for estimating the 2-norm distance between random walk distributions, directly leading to the quantization and speedup of the above graph property testers.

\section{Quantum Walk Fast-Forwarding} \label{sec:quantumFF}

In this section we elaborate the details of the quantum walk fast-forwarding scheme.
First, we formally introduce the concept and characteristics of a quantum walk associated to a reversible Markov chain.
These results provide the ground for most existing quantum walk algorithms, building on a quadratic speedup of the Markov chain limit behavior.
We discuss how these results fall short for speeding up any transient behavior of the Markov chain.
Second, we prove how a technique called \textit{linear combinations of unitaries} can be used to mediate this shortcoming.
By combining this technique with a truncated Chebyshev expansion of the general Markov chain eigenvalue function, we arrive at our quantum algorithm for quantum walk fast-forwarding.

\subsection{Preliminaries: Quantum Walk Schemes}
In this section we review the aforementioned quantum walk scheme by Watrous \cite{watrous2001quantum}, and show how it gives rise to the subsequent work on quantum walk speedups by Ambainis \cite{ambainis2007quantum}, Szegedy \cite{szegedy2004quantum}, Magniez et al \cite{magniez2011search}, and many others.
Apart from a new proof of Proposition \ref{prop:QW-szegedy}, the results in this section are known, and if necessary a reader could skip the section.
For the rest of this paper we will only consider simple graphs $G=(\V,\E)$ with node set $\V$ and edge set $\E \subseteq \V \times \V$.
We will also refer to a Markov chain by its stochastic transition matrix $\P$.

\subsubsection{Watrous Scheme}
Consider a Markov chain $P$ on a graph $G=(\V,\E)$, and an initial probability distribution $v$ over $\V$.
In early work, Watrous \cite{watrous2001quantum} proposed a quantum walk scheme for creating the quantum state $\ket{P^t v}$ associated to the classical distribution $P^t v$, defined by
\begin{equation} \label{eq:def-qs}
\Ket{P^t v}
= \frac{1}{\|P^t v\|} \sum (P^t v)(j) \, \ket{j},
\end{equation}
where $(P^t v)(j)$ denotes the $j$-th component of the probability vector $P^t v$.
For a general nonzero vector $w$, we will use the notation $\ket{w} = \frac{1}{\|w\|} \sum w(j) \ket{j}$ which associates a quantum state $\ket{w}$ to $w$.
The quantum walk associated to $P$ takes places on the extended or ``coined'' node space $\hat{\V} = \V \times \{\0,\V\} = \{ (i,j) \mid i\in\V, j\in\{\0,\V\} \}$, where ``$\0$'' denotes some canonical initialization state.
The associated Hilbert space is $\Hi = \mathrm{span} \{ \ket{i,j} \mid (i,j) \in \hat{\V}\}$.
We will call the subspace $\Hi_\0 = \mathrm{span} \{ \ket{i,\0} \mid i\in\V \}$ the \textit{flat subspace}, associated to the projector $\Pr_\0 = \I \otimes \ket{\0}\bra{\0}$, with $I$ the identity operator on the first register.
The discrete-time quantum walk is described by a unitary operator $\U_{\P}$ on $\Hi$, defined by a \textit{shift operator} $S$ and a \textit{coin toss operator} $V$,
\begin{equation} \label{eq:QW-watrous}
\U_{\P}
= V^\dag S V.
\end{equation}
We will often write $\U$ instead of $\U_{\P}$ when the context allows it.
The coin toss operator is defined as $V = \sum_i \ket{i}\bra{i} \otimes V_i$, where $V_i$ is such that
\begin{equation} \label{eq:watrous-coin-toss}
V \ket{i,\0}
= \ket{i} \otimes V_i \ket{\0}
= \ket{i} \otimes \ket{\psi_i}
= \ket{i} \otimes \sum_j \sqrt{\P(j,i)} \ket{j}.
\end{equation}
By the design of the QW scheme, as we will see later, it suffices to characterize the action of $V_i$ on the state $\ket{\0}$.
The operators $V_i$ can then be arbitrarily completed into unitary matrices.
The shift operator is defined by the permutation
\[
\ket{i,j}
\mapsto
S \ket{i,j} =
\begin{cases}
\ket{j,i},\; &(i,j) \in \E, \\
\ket{i,j},\; &\text{otherwise,}
\end{cases}
\]
and $S \ket{i,\0} = S\ket{i,\0}$.
It is now easy to prove the below lemma, stating that the restriction of $\U$ to the flat subspace implements the \textit{discriminant matrix}
\[
\D = \sqrt{\P \circ \P^T},
\]
with the square root and ``$\circ$''-product elementwise.
The discriminant matrix is closely related to the original Markov chain $\P$, and if $\P$ is reversible then they share the same eigenvalues.
We will often write $\D\ket{v,\0}$ as shorthand for $(\D\otimes\I)\ket{v,\0}$.

\begin{lemma}[\cite{watrous2001quantum}] \label{lem:QW-watrous}
For any quantum state $\ket{v,\0}$, it holds that
\[
\Pr_{\0} \U \ket{v,\0}
= \D\ket{v,\0}.
\]
\end{lemma}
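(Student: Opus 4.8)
The plan is to unfold the definition $\U = V^\dag S V$ and push the flat state $\ket{v,\0} = \sum_i v(i)\ket{i,\0}$ through the three factors one at a time, then read off the amplitude of each flat basis vector $\ket{k,\0}$ in $\U\ket{v,\0}$. All three operators act on the orthonormal basis $\{\ket{i,j}\}$, so the whole argument reduces to bookkeeping of inner products.

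First I would apply the coin toss: by (\ref{eq:watrous-coin-toss}), $V\ket{i,\0} = \sum_j \sqrt{\P(j,i)}\,\ket{i,j}$, hence $V\ket{v,\0} = \sum_{i,j} v(i)\sqrt{\P(j,i)}\,\ket{i,j}$, which is supported entirely off the flat subspace (second register in $\V$). Next I would apply the shift $S$. Since the Markov chain is carried by $G$, every term with $\P(j,i)\neq 0$ has either $(i,j)\in\E$ or $i=j$, and in both cases $S\ket{i,j}=\ket{j,i}$; therefore $SV\ket{v,\0} = \sum_{i,j} v(i)\sqrt{\P(j,i)}\,\ket{j,i}$. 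Finally I would extract the flat part after $V^\dag$ using $\Pr_\0 V^\dag\ket{\psi} = \sum_k \ket{k,\0}\,\overline{\bra{\psi}V\ket{k,\0}}$ together with $V\ket{k,\0} = \sum_l \sqrt{\P(l,k)}\,\ket{k,l}$. Orthonormality of $\{\ket{i,j}\}$ collapses the resulting triple sum to the diagonal $j=k$, $l=i$, leaving the coefficient of $\ket{k,\0}$ equal to $\sum_i v(i)\sqrt{\P(k,i)\,\P(i,k)}$. Since $\D(k,i) = \sqrt{(\P\circ\P^T)(k,i)} = \sqrt{\P(k,i)\,\P(i,k)}$, this is exactly $(\D v)(k)$, so $\Pr_\0\U\ket{v,\0} = \sum_k (\D v)(k)\,\ket{k,\0} = \D\ket{v,\0}$.

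The one place that needs care — and essentially the only obstacle — is the handling of the shift operator: one must observe that off-diagonal, non-edge pairs $(i,j)$ contribute nothing because such transitions have $\P(j,i)=0$, and that self-loops are harmless because $\ket{j,i}$ and $\ket{i,j}$ coincide when $i=j$, so writing $S\ket{i,j}=\ket{j,i}$ is legitimate for every surviving term. Normalization requires no separate argument: $\ket{v,\0}$ is a unit vector because $\ket{v}$ is, and $V$ preserves this since $\|V_i\ket{\0}\|^2 = \sum_j \P(j,i) = 1$, so $\Pr_\0\U\ket{v,\0}=\D\ket{v,\0}$ is read as an identity of (generally unnormalized) vectors. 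The computation also goes through verbatim for complex amplitudes $v(i)$, the conjugate in $\overline{\bra{\psi}V\ket{k,\0}}$ simply restoring $v(i)$ from $\overline{v(i)}$.
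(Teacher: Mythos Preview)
Your proof is correct and follows exactly the same route as the paper: compute the flat component of $\U\ket{i,\0}=V^\dag S V\ket{i,\0}$ and extend by linearity. The paper simply asserts the outcome $\U\ket{i,\0}=\sum_j\sqrt{\P(i,j)\P(j,i)}\ket{j,\0}+\ket{\psi^\perp}$ in one line, whereas you carry out the three-factor bookkeeping explicitly (including the careful handling of the shift on non-edge and diagonal pairs), but the underlying argument is identical.
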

\begin{proof}
This directly follows from the fact that for any node $i$ it holds that $\U \ket{i,\0} = \sum_j \sqrt{\P(i,j)\P(j,i)} \ket{j,\0} + \ket{\psi^\perp}$, where $\ket{\psi^\perp}$ is some state perpendicular to the flat subspace $\Hi_\0$.
By linearity, the proposition follows for general $\ket{v,\0}$.
\end{proof}

\noindent
This gives rise to an easy QW algorithm for creating $\ket{D^t v}$.
Namely, do $t$ times: (i) apply a single step of the QW $\U$, (ii) perform the measurement corresponding to the measurement operators $\Pr_\0$ and $\I-\Pr_\0$.
If each of the measurements returns ``$\0$'', which happens with a probability
\[
\frac{\|\D^t\ket{v}\|^2}{\|\D^{t-1}\ket{v}\|^2}\frac{\|\D^{t-1}\ket{v}\|^2}{\|\D^{t-2}\ket{v}\|^2}
	\dots \frac{\|\D\ket{v}\|^2}{\|\ket{v}\|^2}
= \|\D^t\ket{v}\|^2,
\]
then the output state is $\ket{\D^t v,\0}$.
For symmetric $\P$, as in Watrous' original paper, it holds that $\D = \P$, and so this approach effectively returns the quantum state $\ket{\P^t v}$ that we were looking for.
It requires $t$ QW steps and succeeds with a probability $\|\D^t\ket{v}\|^2$.

\subsubsection{Quadratically Improved Spectrum}
The main idea of our new QFF tool is that we can quadratically accelerate the number of QW steps required: we can create the state $\ket{\D^t v}$ using $O(\sqrt{t})$ QW steps, succeeding with the same probability $\|\D^t\ket{v}\|^2$.
To do so we make use of work that followed up on the QW approach by Watrous, mainly initiated by Ambainis \cite{ambainis2007quantum} and Szegedy \cite{szegedy2004quantum} with the aim of accelerating classical search problems.
We will discuss how in a certain sense this operator quadratically improves the Markov chain spectrum, yet falls short of speeding up its full dynamics.
In the next section we then present our more fine-grained scheme that resolves this issue.

They proposed an alternative QW, essentially adding a reflection around the flat subspace $\Hi_\0$ to the QW operator $\U$ by Watrous:
\begin{equation} \label{eq:QW-ambainis}
W
= \R_\0 \U
= (2\Pr_\0 - \I) \U.
\end{equation}
Their key insight is captured in the following proposition, for which we provide a new proof which explicitly builds on the insight from Watrous' work.
We will denote by $T_t$ the $t$-th Chebyshev polynomial of the first kind.

\begin{proposition}[\cite{szegedy2004quantum}] \label{prop:QW-szegedy}
For any $\ket{v,\0}$, it holds that
\[
\Pr_\0 W^t \ket{v,\0}
= T_t(\D) \ket{v,\0}.
\]
As a consequence, if $(\cos\theta,\ket{v})$ is an eigenpair of $\D$, then
\[
\Pr_\0 W^t \ket{v,\0}
= T_t(\cos\theta) \ket{v,\0}
= \cos(t\theta) \ket{v,\0}.
\]
\end{proposition}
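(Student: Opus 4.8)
The plan is to prove the operator identity $\Pr_\0 W^t\ket{v,\0}=T_t(\D)\ket{v,\0}$ by induction on $t$, using the Chebyshev three–term recurrence $T_{t+1}(x)=2xT_t(x)-T_{t-1}(x)$. The mechanism that produces this recurrence is that $W=\R_\0\U$ is a product of two reflections, so the symmetrized operator $\tfrac12(W+W^{-1})$, read on the flat subspace $\Hi_\0$, is exactly the discriminant $\D$ by Lemma \ref{lem:QW-watrous}. The one thing one must avoid is tracking how $W^t$ mixes into $\Hi_\0^\perp$; the $W+W^{-1}$ identity sidesteps this because it only ever involves the action of $\U$ on flat states, which is precisely what Watrous' lemma controls.

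I would first record two structural facts. (i) $W^{-1}=\U\R_\0$: the operator $\R_\0=2\Pr_\0-\I$ is a reflection, and $\U=V^\dag S V$ is both unitary and Hermitian since $S=S^\dag=S^{-1}$ (on a simple graph the shift swaps the symmetric pair $\ket{i,j},\ket{j,i}$ and fixes everything else), hence $\U^{-1}=\U$ and $W^{-1}=(\R_\0\U)^{-1}=\U^{-1}\R_\0^{-1}=\U\R_\0$. (ii) For any $\ket{w}\in\Hi_\0$ we have $\R_\0\ket{w}=\ket{w}$ and $\Pr_\0\R_\0=\Pr_\0$, so $\Pr_\0 W\ket{w}=\Pr_\0\U\ket{w}=\D\ket{w}$ by the linear extension of Lemma \ref{lem:QW-watrous}, and moreover $(W+W^{-1})\ket{w}=(\R_\0\U+\U\R_\0)\ket{w}=(\R_\0+\I)\U\ket{w}=2\Pr_\0\U\ket{w}=2\D\ket{w}$, which again lies in $\Hi_\0$. (Throughout, $\D$ abbreviates $\D\otimes\I$, so polynomials in $\D$ commute and $\D\ket{w}\in\Hi_\0$.)

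Next I would strengthen the claim to: for every $\ket{w}\in\Hi_\0$ and every $t\ge 0$, $\Pr_\0 W^t\ket{w}=T_t(\D)\ket{w}$. This strengthening is what closes the induction, since it lets me reapply the hypothesis to the flat vector $\D\ket{w}$. The base cases $t=0$ and $t=1$ are immediate from fact (ii). For the step, write $W^{t+1}+W^{t-1}=W^t(W+W^{-1})$, apply it to $\ket{w}$, and use fact (ii): $(W+W^{-1})\ket{w}=2\D\ket{w}\in\Hi_\0$, so $\Pr_\0(W^{t+1}+W^{t-1})\ket{w}=2\,\Pr_\0 W^t(\D\ket{w})=2\D\,T_t(\D)\ket{w}$ by the induction hypothesis applied to $\D\ket{w}$; subtracting $\Pr_\0 W^{t-1}\ket{w}=T_{t-1}(\D)\ket{w}$ gives $\Pr_\0 W^{t+1}\ket{w}=2\D T_t(\D)\ket{w}-T_{t-1}(\D)\ket{w}=T_{t+1}(\D)\ket{w}$. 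Specializing $\ket{w}=\ket{v,\0}$ yields the proposition, and the eigenpair consequence is then just $T_t(\D)\ket{v}=T_t(\cos\theta)\ket{v}=\cos(t\theta)\ket{v}$ when $\D\ket{v}=\cos\theta\ket{v}$.

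The main obstacle is conceptual rather than computational: one must spot that $W$ should be treated as a product of two reflections and that the right quantity to iterate is $W+W^{-1}$ on $\Hi_\0$ (so that no information about $\U$ outside the flat subspace is needed), and one must strengthen the induction hypothesis to all of $\Hi_\0$ so the flat vector $\D\ket{w}$ can be fed back in. Everything after that is bookkeeping with $\Pr_\0$, $\R_\0$, and the linearity of Lemma \ref{lem:QW-watrous}.
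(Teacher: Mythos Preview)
Your proof is correct and follows essentially the same approach as the paper: both establish the Chebyshev three-term recurrence by induction, using that $\U^2=\I$ (so $\U$ is self-inverse) together with $\R_\0=2\Pr_\0-\I$. The only difference is packaging --- the paper expands $\Pr_\0 W^t=\Pr_\0\R_\0\U(2\Pr_\0-\I)\U W^{t-2}$ to obtain the operator recurrence $\Pr_\0 W^t=2(\Pr_\0\U)(\Pr_\0 W^{t-1})-\Pr_\0 W^{t-2}$ directly (and then identifies $T_t(\Pr_\0\U)$ with $T_t(\D)$ on flat states via Lemma~\ref{lem:QW-watrous}), whereas you encode the same computation through the identity $(W+W^{-1})\big|_{\Hi_\0}=2\D$ and factor $W^{t+1}+W^{t-1}=W^t(W+W^{-1})$.
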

\begin{proof}
We easily find a recursion formula for $\Pr_\0 W^t$:
\[
\Pr_\0 W^t
= \Pr_\0 \R_\0 \U (2\Pr_\0 - \I) \U W^{t-2}
= 2\Pr_\0 \U (\Pr_\0 W^{t-1}) - \Pr_\0 W^{t-2}
\]
using the fact that $\Pr_\0\R_\0 = \Pr_\0$, and $\U^\dag = \U$ so that $\U^2 = \U\U^\dag =\I$.
Since $\Pr_\0 W^0 = \Pr_\0$ and $\Pr_\0 W = \Pr_\0 \U$, this shows that we can express $\Pr_\0 W^t$ as a polynomial in $\Pr_\0 \U$.
The Chebyshev polynomials of the first kind $T_t$ are defined by
\[
T_0(x) = 1,\quad
T_1(x) = x,\quad
T_t(x) = 2xT_{t-1}(x) - T_{t-2}(x).
\]
Setting $x = \Pr_\0 \U$ and $T_0(\Pr_\0 \U) = \Pr_\0$, this shows that we can express $\Pr_\0 W^t$ as $\Pr_\0 W^t
= T_t(\Pr_\0 \U)$.
From Lemma \ref{lem:QW-watrous} we know that $(\Pr_\0 \U)^t \ket{v,\0} = \D^t \ket{v,\0}$, and therefore
\[
\Pr_\0 W^t \ket{v,\0}
= T_t(\D) \ket{v,\0}.
\]
Using the geometric definition of $T_t$, $T_t(\cos\theta) = \cos(t\theta)$, we see that if $\D \ket{v,\0} = \cos\theta \ket{v,\0}$ then
\[
\Pr_\0 W^t \ket{v,\0} = T_t(\cos\theta) \ket{v,\0} = \cos(t\theta) \ket{v,\0}.
\]
\end{proof}
This proposition constitutes the basis from which most of the aforementioned quantum algorithms start, and it will be the basis from which this work starts.
The gist of the speedup comes from comparing the original action of $\D^t$ on an eigenpair $(\cos\theta,\ket{v})$, $\D^t \ket{v,\0} = \cos^t(\theta) \ket{v,\0}$, with the action of $\Pr_\0 W^t$, $\Pr_\0 W^t \ket{v,\0} = \cos(t\theta) \ket{v,\0}$.
Taylor expanding the respective \textit{eigenvalue functions} $g^t(\theta) = \cos^t(\theta)$ and $f_{t'}(\theta) = \cos(t'\theta)$ yields
\[
g^t(\theta) = 1 - \frac{t\theta^2}{2} + O(t^2\theta^4),
\quad \text{ whereas} \quad
f_{t'}(\theta) = 1 - \frac{t'^2\theta^2}{2} + O(t'^4\theta^4).
\]
Setting $t' = \sqrt{t}$, we see that both expressions are equal up to second order in $t$.
This suggests that the quantum walk \textit{quadratically fast-forwards} the Markov chain, and so $\Pr_\0 W^{\sqrt{t}} \approx \Pr_\0 \D^t$.

This observation underlies a range of quantum walk speedup results which are mainly concerned with accelerating the Markov chain asymptotics, where one is interested in the limit regime $\lim_{t\to\infty}\P^tv = \pi$ and one wishes to approximate the quantum state $\ket{\pi}$.
In these cases, the timescale for the classical Markov chain is for instance set by the inverse of the spectral gap $\frac{1}{\delta} = \frac{1}{1-\lambda_2}$ (for mixing tasks and Gibbs sampling, see \cite{somma2008quantum,poulin2009sampling}), or by the sum of the inverses $\sum \frac{1}{1-\lambda_k}$ (for hitting tasks, see \cite{szegedy2004quantum}), where $\{\lambda_k = \cos\theta_k\}$ denotes the set of eigenvalues of $P$.
For these purposes, the low order conclusions from the above expansion generally suffice to achieve a quantum walk speedup in generating $\ket{\pi}$.

The main issue with the above analysis is that it breaks down for $t$ and eigenvalues $\theta$ such that $t\theta \approx 1$: $g^t(\theta)$ and $f_t(\theta)$ start to diverge from each other, thus preventing the quantum walk from simulating the full dynamics of the Markov chain.
As the main contribution in the following section we will construct a more involved and fine-grained quantum walk scheme whose eigenvalue function closely approximates the Markov chain eigenvalue function $g^t(\theta)$ for all values of $t$ and $\theta$, without losing the quadratic fast-forward.

\subsection{Quantum Fast-Forwarding Algorithm}

In this section we develop our main tool: a quantum walk algorithm for quantum simulating Markov chains quadratically faster than the original dynamics.
Thereto we will make use of the concept of \textit{linear combinations of unitaries}.
We will use this technique to manipulate the eigenvalues of the quantum walk such that they better approximate the Markov chain eigenvalues.

\subsubsection{LCU and Chebyshev Expansion}
We can create some wiggle room on the implementation of the quantum walk $\Pr_\0 W^t$, and therefore on its eigenvalue function, by implementing linear combinations of $\Pr_\0 W^t$ for different $t$.
A similar approach has been used in for instance \cite{childs2012hamiltonian,berry2015simulating} for Hamiltonian simulation and in \cite{van2017quantum} for optimizing quantum SDP solvers, where they call this technique \textit{linear combination of unitaries} (LCU).
We extract the below Lemma \ref{lem:LCU} from this existing work, and elaborate its details for completeness.
Below the lemma we discuss how it can be used for our purpose.

The lemma shows how to implement a linear combination
\[
\sum_{l=0}^\tau q_l \Pr_\0 W^l,
\]
where we assume that $q_l \geq 0$ and $\sum q_l = 1$.
To do so, we will again enlarge the state space from $\Hi_{\h\V}$ to $\Hi_{\h\V \times [\tau]}$, with $[\tau] = \{0,1,2,\dots,\tau\}$.
We will identify $\ket{0} = \ket{\0}$ and call the span of states $\ket{j,\0,\0} \equiv \ket{j,\0\0}$, $j\in\V$, the flat subspace of $\Hi_{\h\V \times [\tau]}$.
The projector $\Pr_\0$ will either denote the projector on the flat subspace of $\Hi_{\h\V}$ or $\Hi_{\h\V \times [\tau]}$, whichever it is will be clear from the context.
The construction is very similar to the Watrous quantum walk scheme.
It builds on a coin toss $V_q$ on $\Hi_{\h\V \times [\tau]}$, defined by the coefficients $q_l$ as
\[
V_q \ket{\psi,\0}
= \sum_{l=0}^\tau \sqrt{q_l} \ket{\psi,l}.
\]
Then the \textit{controlled W-operator} $W_{\mathrm{ctrl}} = \sum_{l=0}^\tau W^l \otimes \ket{l}\bra{l}$ is applied which, conditioned on the integer $l$ in the last register, applies the operator $W^l$:
\[
W_{\mathrm{ctrl}} \, V_q \ket{\psi,\0}
= W_{\mathrm{ctrl}} \sum_{l=0}^\tau \sqrt{q_l} \ket{\psi,l}
= \sum_{l=0}^\tau \sqrt{q_l} W^l \ket{\psi,l}.
\]
Finally, as in the Watrous QW, the operator $V_q^\dag$ is applied, returning a state
\begin{equation} \label{eq:W-tau}
V_q^\dag W_{\mathrm{ctrl}} V_q \ket{\psi,\0}
= \sum_{l=0}^\tau q_l W^l \ket{\psi,\0} + \ket{\psi^\perp},
\end{equation}
where $\ket{\psi^\perp}$ is some quantum state perpendicular to the flat subspace.
This leads to the following lemma, where we set $W_\tau = V^\dag_q W_{\mathrm{ctrl}} V_q$.
\begin{lemma}[LCU] \label{lem:LCU}
For any $\ket{v,\0\0}$, it holds that
\[
\Pr_\0 W_\tau \ket{v,\0\0}
= \Big( \sum_{l=0}^\tau q_l \Pr_\0 W^l \ket{v,\0} \Big) \otimes \ket{\0}
= \Big( \sum_{l=0}^\tau q_l T_l(\D) \ket{v} \Big) \otimes \ket{\0\0}.
\]
Implementing the operator $W_\tau$ requires $O(\tau)$ QW steps.
\end{lemma}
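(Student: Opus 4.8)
The plan is to read off the first equality directly from the identity~\eqref{eq:W-tau} that was already established above, to obtain the second equality from Proposition~\ref{prop:QW-szegedy}, and then to count QW steps in $W_\tau = V_q^\dag W_{\mathrm{ctrl}} V_q$. Starting from~\eqref{eq:W-tau} we have
\[
W_\tau \ket{v,\0\0}
= \Big( \sum_{l=0}^\tau q_l W^l \ket{v,\0} \Big) \otimes \ket{\0} + \ket{\psi^\perp},
\]
with $\ket{\psi^\perp}$ perpendicular to the flat subspace of $\Hi_{\h\V \times [\tau]}$. Applying $\Pr_\0$ annihilates the $\ket{\psi^\perp}$ term, and since the flat subspace of $\Hi_{\h\V \times [\tau]}$ is the flat subspace of $\Hi_{\h\V}$ tensored with $\ket{\0}$, the projector acts trivially on the final (counter) register and as $\Pr_\0$ on $\Hi_{\h\V}$, leaving $\big( \sum_{l=0}^\tau q_l \Pr_\0 W^l \ket{v,\0} \big) \otimes \ket{\0}$. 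This is the first equality.

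For the second equality I would invoke Proposition~\ref{prop:QW-szegedy}, which gives $\Pr_\0 W^l \ket{v,\0} = T_l(\D) \ket{v,\0}$. Writing $\ket{v,\0} = \ket{v} \otimes \ket{\0}$ and using the shorthand $\D \ket{v,\0} = (\D \otimes \I)\ket{v,\0}$ introduced before Lemma~\ref{lem:QW-watrous}, one has $T_l(\D)\ket{v,\0} = \big( T_l(\D)\ket{v} \big) \otimes \ket{\0}$; substituting into the sum and regrouping the two $\ket{\0}$ registers yields $\big( \sum_{l=0}^\tau q_l T_l(\D)\ket{v} \big) \otimes \ket{\0\0}$, as claimed.

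For the complexity claim: the coin operators $V_q$ and $V_q^\dag$ act only on the coin and counter registers and cost no walk step on the graph, so the whole cost sits in $W_{\mathrm{ctrl}} = \sum_{l=0}^\tau W^l \otimes \ket{l}\bra{l}$. The point that needs care is that $W_{\mathrm{ctrl}}$ should not be implemented by applying $W^l$ separately in each branch (which would cost $O(\tau^2)$ steps), but rather as a product of $\tau$ controlled single steps — for $k = 1,\dots,\tau$, apply $W$ conditioned on the counter register holding a value $\geq k$ — so that the branch labelled $l$ receives exactly $l$ copies of $W$. Since each $W = \R_\0 \U$ costs one QW step, the reflection $\R_\0$ being elementary, $W_\tau$ uses $O(\tau)$ QW steps in total. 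Beyond this observation and the bookkeeping of how $\Pr_\0$ factorizes across the enlarged state space, there is no real obstacle: the analytic content was already packaged into~\eqref{eq:W-tau}.
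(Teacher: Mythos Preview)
Your argument is correct and follows essentially the same route as the paper: read off the first equality from~\eqref{eq:W-tau}, get the second from Proposition~\ref{prop:QW-szegedy} by linearity, and note that $V_q,V_q^\dag$ cost no walk steps while $W_{\mathrm{ctrl}}$ costs $O(\tau)$. The only difference is cosmetic: the paper defers the $O(\tau)$ implementation of $W_{\mathrm{ctrl}}$ to references, whereas you spell out the ``apply $W$ conditioned on the counter being $\geq k$'' trick explicitly.
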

\begin{proof}
From \eqref{eq:W-tau} we see that
\[
\Pr_\0 V_q^\dag W_{\mathrm{ctrl}} V_q \ket{v,\0\0}
= \Big( \sum_{l=0}^\tau q_l \Pr_\0 W^l \ket{v,\0} \Big) \otimes \ket{\0}.
\]
Combined with Proposition \ref{prop:QW-szegedy}, and by linearity, this proves the inequality.
In for instance \cite{nielsen2002quantum,berry2009black} it is discussed that the operator $W_{\mathrm{ctrl}}$ can be implemented in $O(\tau)$ QW steps, and the local coin tosses $V_q$ and $V_q^\dag$ require no QW steps.
\end{proof}

This lemma shows that if we apply the operator $W_\tau$ on $\ket{v,\0\0}$, and we perform a measurement $\{\Pr_{\0},\I-\Pr_{\0}\}$, then we retrieve the state $\Big( \sum_{l=0}^\tau q_l T_l(\D) \ket{v} \Big) \otimes \ket{\0\0}$ (up to normalization) with a probability $\|\sum_{l=0}^\tau q_l T_l(\D) \ket{v}\|^2$.
The corresponding eigenvalue function is then
\[
\tilde{f}_t(\theta)
= \sum_{l=0}^\tau q_l \cos(l\theta).
\]
In the following we will choose the coefficients $q_l$ so that $\tilde{f}_{t'}(\theta)$ approximates the original eigenvalue function $g^t(\theta) = \cos^t(\theta)$ for $t' \in O(\sqrt{t})$.
For this purpose we can use the Chebyshev expansion of $g^t$.
Indeed, from for instance \cite{gil2007numerical} we know that
\[
x^t
= \sum_{l=0}^t p_l T_l(x),
\]
where $p_l$ represents the probability that $|X_t| = l$ for $X_t$ a $t$ step random walk starting in the origin:
\begin{equation} \label{eq:coeff}
p_l
= \mathbb{P}(|X_t|=l)
= \begin{cases}
\frac{1}{2^{t-1}} \dbinom{t}{\frac{t-l}{2}} &\; l > 0,\, t = l \,\mathrm{mod}\, 2,\\
\frac{1}{2^t} \dbinom{t}{\frac{t}{2}} &\; l = 0,\, t = 0 \,\mathrm{mod}\, 2,\\
0 & \text{elsewhere}.
\end{cases}
\end{equation}
Again using the geometric definition of the Chebyshev polynomials, $T_t(\cos(\theta)) = \cos(t\theta)$, and setting $x = \cos(\theta)$, this implies that $g^t$ can be exactly expanded into the eigenfunctions $f_t$:
\begin{equation} \label{eq:cheb}
g^t(\theta) = \cos^t(\theta)
= \sum_{l=0}^t p_l \cos(l\theta)
= \sum_{l=0}^t p_l f_t(l\theta).
\end{equation}
Using the above lemma we can now choose $q_l=p_l$ to exactly simulate the original dynamics.
The problem is that in this case $\tau = t$, and implementing $W_\tau$ therefore requires $O(t)$ QW steps, giving no speedup with respect to the simple quantum simulation scheme.
We can resolve this by noting that $p_l$ approaches a normal distribution with standard deviation $\Theta(\sqrt{t})$, so that we can approximate it exponentially well by its support on a $O(\sqrt{t})$ interval, as we elaborate in the below lemma.

\begin{lemma} \label{lem:appr-cheb}
Let $\epsilon>0$.
If $C \geq 2\ln\frac{2}{\epsilon}$ then for all $\theta$
\[
\bigg| \cos^t(\theta) - \sum_{l=0}^{\lceil \sqrt{C t} \rceil} p_l \cos(l\theta) \bigg|
\leq \epsilon.
\]
\end{lemma}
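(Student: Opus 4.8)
The plan is to bound the tail of the Chebyshev expansion \eqref{eq:cheb} by the tail probability of a simple random walk, and then to apply a Chernoff-type estimate. Since $\cos^t(\theta) = \sum_{l=0}^t p_l \cos(l\theta)$ exactly, we have
\[
\bigg| \cos^t(\theta) - \sum_{l=0}^{\lceil \sqrt{Ct}\rceil} p_l \cos(l\theta) \bigg|
= \bigg| \sum_{l > \lceil \sqrt{Ct}\rceil} p_l \cos(l\theta) \bigg|
\leq \sum_{l > \lceil \sqrt{Ct}\rceil} p_l
= \mathbb{P}\big(|X_t| > \lceil \sqrt{Ct}\rceil\big),
\]
where the middle inequality uses $|\cos(l\theta)| \leq 1$ and $p_l \geq 0$, and the last equality is exactly the interpretation of $p_l$ in \eqref{eq:coeff}. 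So everything reduces to showing $\mathbb{P}(|X_t| \geq \sqrt{Ct}) \leq \epsilon$ whenever $C \geq 2\ln\frac{2}{\epsilon}$.

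Next I would write $X_t = \sum_{i=1}^t \xi_i$ with $\xi_i$ i.i.d.\ uniform in $\{-1,+1\}$, and apply the standard Hoeffding/Chernoff bound for sums of bounded independent variables: $\mathbb{P}(X_t \geq a) \leq e^{-a^2/(2t)}$ for $a \geq 0$, and symmetrically for $\mathbb{P}(X_t \leq -a)$, so that $\mathbb{P}(|X_t| \geq a) \leq 2e^{-a^2/(2t)}$. Substituting $a = \sqrt{Ct}$ gives $\mathbb{P}(|X_t| \geq \sqrt{Ct}) \leq 2e^{-C/2}$, and this is $\leq \epsilon$ precisely when $C \geq 2\ln\frac{2}{\epsilon}$. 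The ceiling $\lceil\sqrt{Ct}\rceil$ only makes the tail event smaller, so the bound is unaffected; one should also note the degenerate case $t = 0$ (or $\sqrt{Ct}=0$), where the sum is simply $p_0 = 1 = \cos^0(\theta)$ and the claim is trivial.

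There is essentially no serious obstacle here — the only point requiring a little care is making sure the Chernoff bound is invoked in a clean self-contained form (either citing Hoeffding's inequality or deriving $\mathbb{P}(X_t \geq a) \leq e^{-a^2/(2t)}$ from the moment generating function $\mathbb{E}[e^{s\xi_i}] = \cosh s \leq e^{s^2/2}$ and optimizing over $s$), and in tracking the factor $2$ from the two-sided tail so that the threshold comes out as $2\ln\frac{2}{\epsilon}$ rather than $2\ln\frac{1}{\epsilon}$. The bound is uniform in $\theta$ because the only place $\theta$ entered was through $|\cos(l\theta)|\leq 1$, which was discarded at the first step.
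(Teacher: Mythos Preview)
Your proposal is correct and follows essentially the same route as the paper: bound the truncation error by the tail mass $p_{>\lceil\sqrt{Ct}\rceil}=\mathbb{P}(|X_t|>\lceil\sqrt{Ct}\rceil)$ using $|\cos(l\theta)|\le 1$, then apply Hoeffding's inequality to get $2e^{-C/2}\le\epsilon$ when $C\ge 2\ln(2/\epsilon)$. Your additional remarks on the ceiling, the $t=0$ case, and uniformity in $\theta$ are fine but not needed beyond what the paper does.
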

\begin{proof}
Let $t' = \lceil \sqrt{Ct}\rceil$.
The proof comes down to bounding the quantity $p_{>t'} = \sum_{l=t'+1}^t p_l$.
Indeed, by \eqref{eq:cheb} we can easily calculate that
\[
\bigg| \cos^t(\theta) - \sum_{l=0}^{t'} p_l \cos(l\theta) \bigg|
\leq p_{>t'},
\]
so that it suffices to prove that $p_{>t'} \leq \epsilon$.
We can bound $p_{>t'}$ since it represents the probability that $|X_t|> t'$ where $X_t$ is a $t$ step random walk $X_t$.
By Hoeffding's inequality we know that $p_{>t'} \leq 2\exp\big(-t^{\prime 2}/(2t)\big)$.
For $t' = \lceil \sqrt{Ct}\rceil$ and $C \geq 2\ln\frac{2}{\epsilon}$ this shows that $p_{>\lceil\sqrt{Ct}\rceil}\leq\epsilon$, which proves the lemma.
\end{proof}

This lemma shows that it is possible to \textit{pointwise} approximate the original eigenvalue function $\cos^t(\theta)$, up to an error $\epsilon$, using the truncated Chebyshev expansion
\[
g^t_\tau(\theta)
= \sum_{l=0}^\tau p_l \cos(l\theta)
\]
for $\tau \in O(\sqrt{t}\log\frac{1}{\epsilon})$.
We note that a similar approximation in combination with LCU was used for a different purpose in \cite{berry2017exponential}.
In the next section we combine this approximation lemma with the LCU lemma, leading to our quantum fast-forwarding scheme.

\subsubsection{Quantum Fast-Forwarding Algorithm} \label{sec:QFF}
Combining the above lemmas, we can propose our QFF algorithm.
It builds on the operator $W_\tau$ from Lemma \ref{lem:LCU}, with coefficients $q_l$ derived from Lemma \ref{lem:appr-cheb}, so that
\begin{equation} \label{eq:W-tau-pl}
\begin{aligned}
\Pr_\0 W_\tau \ket{v,\0\0}
&= \frac{1}{1-p_{>\tau}} \sum_{l=0}^\tau p_l \Pr_\0 W^l \ket{v,\0\0} \\
&= \bigg( \frac{1}{1-p_{>\tau}} \sum_{l=0}^\tau p_l T_l(\D) \ket{v} \bigg) \otimes \ket{\0\0},
\end{aligned}
\end{equation}
where the $p_l$ are defined in \eqref{eq:coeff}.

\begin{algorithm}[H]
\caption{Quantum Fast-Forwarding $\mathbf{QFF}(\ket{v},\P,t,\epsilon)$} \label{alg:QFF}
\normalsize
\textbf{Input:} quantum state $\ket{v}\in\Hi_\V$, reversible Markov chain $\P$, $t\in\natural$, $\epsilon>0$ \\
\textbf{Do:}
\begin{algorithmic}[1]
\State set $\epsilon' = \|\D^t\ket{v}\|\epsilon/2$ and $\tau = \big\lceil \sqrt{2t\ln(2/\epsilon')} \big\rceil$
\State initialize the registers $\mathbf{R_1R_2R_3}$ with the state $\ket{v,\0\0}$
\State apply the LCU operator $W_\tau$ as in \eqref{eq:W-tau-pl} on $\mathbf{R_1R_2R_3}$
\State perform the measurement $\{\Pr_{\0\0},\I-\Pr_{\0\0}\}$
\State \textbf{if} outcome $\neq$ ``$\0\0$'' \textbf{then} output ``\texttt{Fail}'' and stop
\end{algorithmic}
\textbf{Output:} registers $\mathbf{R_1R_2R_3}$
\end{algorithm}

\begin{theorem}[Quantum Fast-Forwarding] \label{thm:QFF-success}
The QFF algorithm $\mathbf{QFF}(\ket{v},\P,t,\epsilon)$ outputs a state $\epsilon$-close to $\ket{\D^t v,\0\0}$ with success probability at least $(1-\epsilon) \|\D^t\ket{v}\|^2$.
Otherwise it outputs ``\texttt{Fail}''.
The algorithm uses a number of QW steps
\vspace{-.2cm}
\[
\tau
\in O\bigg( \sqrt{t} \, \log^{1/2} \frac{1}{\epsilon \|\D^t\ket{v}\|} \bigg).
\]
\end{theorem}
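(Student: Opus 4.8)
The plan is to prove the three claims of Theorem~\ref{thm:QFF-success} --- correctness (output $\epsilon$-close to $\ket{\D^t v,\0\0}$), success probability lower bound, and the query cost --- by combining Lemma~\ref{lem:LCU}, Lemma~\ref{lem:appr-cheb}, and the eigendecomposition of $\D$. Since $\P$ is reversible, $\D$ is symmetric, so it admits an orthonormal eigenbasis $\{\ket{v_k}\}$ with real eigenvalues $\cos\theta_k \in [-1,1]$. Writing $\ket{v} = \sum_k c_k \ket{v_k}$, the key quantities factor spectrally: $\D^t\ket{v} = \sum_k c_k \cos^t(\theta_k)\ket{v_k}$, while by \eqref{eq:W-tau-pl} the (unnormalized) output in the flat subspace is $\sum_k c_k\, g^t_\tau(\theta_k)\,\ket{v_k}$ up to the factor $(1-p_{>\tau})^{-1}$, where $g^t_\tau(\theta) = \sum_{l=0}^\tau p_l\cos(l\theta)$.

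\textbf{Correctness.} First I would observe that with the algorithm's choice $\epsilon' = \|\D^t\ket{v}\|\epsilon/2$ and $\tau = \lceil\sqrt{2t\ln(2/\epsilon')}\rceil$, Lemma~\ref{lem:appr-cheb} (applied with $C = 2\ln(2/\epsilon')$, so $C \ge 2\ln\frac{2}{\epsilon'}$) gives the pointwise bound $|\cos^t\theta - g^t_\tau(\theta)| \le \epsilon'$ for every $\theta$, hence in particular at each $\theta_k$. Therefore, using orthonormality of the $\ket{v_k}$,
\[
\Big\| \D^t\ket{v} - \sum_k c_k\, g^t_\tau(\theta_k)\ket{v_k} \Big\|^2
= \sum_k |c_k|^2 \,\big|\cos^t\theta_k - g^t_\tau(\theta_k)\big|^2
\le \epsilon'^2 \sum_k |c_k|^2 = \epsilon'^2.
\]
So the unnormalized pre-measurement flat component $\ket{w} := \sum_k c_k g^t_\tau(\theta_k)\ket{v_k}$ satisfies $\|\D^t\ket{v} - \ket{w}\| \le \epsilon'$, and consequently $\big|\,\|\ket{w}\| - \|\D^t\ket{v}\|\,\big| \le \epsilon'$. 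A standard lemma (that two nonzero vectors at $\ell_2$-distance $\delta$ have normalizations at distance $\le 2\delta/\max(\|\cdot\|,\|\cdot\|)$, or simply the triangle-inequality estimate $\| \ket{w}/\|\ket{w}\| - \ket{u}/\|\ket{u}\| \| \le 2\|\ket{w}-\ket{u}\|/\|\ket{u}\|$) then gives that the normalized output $\ket{w}/\|\ket{w}\|$ is within $2\epsilon'/\|\D^t\ket{v}\| = \epsilon$ of $\ket{\D^t v}$; tensoring with $\ket{\0\0}$ preserves this. I'd note the scalar factor $(1-p_{>\tau})^{-1}$ is irrelevant after normalization.

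\textbf{Success probability and cost.} The measurement $\{\Pr_{\0\0}, \I-\Pr_{\0\0}\}$ returns ``$\0\0$'' with probability equal to the squared norm of the flat component, which by \eqref{eq:W-tau-pl} is $(1-p_{>\tau})^{-2}\|\ket{w}\|^2 \ge \|\ket{w}\|^2 \ge (\|\D^t\ket{v}\| - \epsilon')^2 = \|\D^t\ket{v}\|^2(1-\epsilon/2)^2 \ge \|\D^t\ket{v}\|^2(1-\epsilon)$, giving the claimed bound. Finally, by Lemma~\ref{lem:LCU}, $W_\tau$ costs $O(\tau)$ QW steps, and substituting the algorithm's $\tau$: $\tau = \lceil\sqrt{2t\ln(2/\epsilon')}\rceil$ with $\epsilon' = \epsilon\|\D^t\ket{v}\|/2$ gives $\tau \in O\big(\sqrt{t}\,\log^{1/2}(1/(\epsilon\|\D^t\ket{v}\|))\big)$, as claimed.

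\textbf{Anticipated obstacle.} The routine spectral estimates are straightforward; the one point deserving care is the passage from ``$\ket{w}$ is $\epsilon'$-close to $\D^t\ket{v}$'' to ``the \emph{normalized} output is $\epsilon$-close to $\ket{\D^t v}$,'' since normalization can amplify errors when $\|\D^t\ket{v}\|$ is small --- this is exactly why the algorithm scales $\epsilon'$ by $\|\D^t\ket{v}\|$, and I would make sure the constant $2$ in $\epsilon' = \|\D^t\ket{v}\|\epsilon/2$ is what makes the normalization lemma close with the stated $\epsilon$ rather than $2\epsilon$. A secondary subtlety is that the bound in Lemma~\ref{lem:appr-cheb} is stated with the \emph{normalized} coefficients implicitly via $\sum_{l=0}^{\tau} p_l \cos(l\theta)$ (not renormalized by $1-p_{>\tau}$), so I'd keep the two conventions --- the truncated-but-unrenormalized $g^t_\tau$ used in the approximation lemma versus the renormalized $(1-p_{>\tau})^{-1}g^t_\tau$ produced by $W_\tau$ --- carefully distinguished, noting again that renormalization washes out in the final state.
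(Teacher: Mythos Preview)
Your proposal is correct and matches the paper's proof essentially step for step: eigendecomposition of $\D$, Lemma~\ref{lem:appr-cheb} for the pointwise bound, Parseval to promote it to a vector bound, the normalization inequality $\|w/\|w\| - u/\|u\|\| \le 2\|w-u\|/\|u\|$, the reverse triangle inequality for the success probability, and Lemma~\ref{lem:LCU} for the cost. If anything, your handling of the renormalization factor $(1-p_{>\tau})^{-1}$ is slightly more explicit than the paper's, which silently conflates the $p_l$ and the normalized $q_l = p_l/(1-p_{>\tau})$ in a couple of places.
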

\begin{proof}
Let $\{(\cos\theta_k,\ket{v_k}), 1\leq k \leq |\V|\}$ be a complete orthonormal set of eigenpairs of $\D$.
Then we can write $\ket{v} = \sum_k \alpha_k \ket{v_k}$ and the goal state $\ket{\D^t v} = \sum_k \alpha_k \cos(\theta_k)^t\ket{v_k}/\|\D^t\ket{v}\|$.
From Lemma \ref{lem:LCU} we know that if we apply the operator $W_\tau$ on $\ket{v,\0\0}$, and we perform a measurement $\{\Pr_\0,\I-\Pr_\0\}$, then we retrieve the state
\begin{align*}
\frac{1}{\big\| \frac{1}{1-p_{>\tau}} \sum_{l=0}^\tau q_l T_l(\D) \ket{v} \big \|}
&\bigg( \frac{1}{1-p_{>\tau}} \sum_{l=0}^\tau q_l T_l(\D) \ket{v} \bigg) \otimes \ket{\0\0} \\
&= \frac{1}{\big\| \sum_{l=0}^\tau q_l T_l(\D) \ket{v} \big \|}
\bigg( \sum_{l=0}^\tau q_l T_l(\D) \ket{v} \bigg) \otimes \ket{\0\0}
\end{align*}
with a success probability $\|\frac{1}{1-p_{>\tau}} \sum_{l=0}^\tau p_lT_l(\D) \ket{v}\|^2$.
We will denote the state $\ket{\psi_\tau} = \sum_{l=0}^\tau q_l T_l(\D) \ket{v}$.
By the approximation from Lemma \ref{lem:appr-cheb} we know that if $\tau = \bigg\lceil \sqrt{2t\ln\frac{2}{\epsilon'}} \bigg\rceil$ then $\ket{\psi_\tau}$ will be $\epsilon'$-close to $\D^t\ket{v}$:
\begin{align*}
\left\| \ket{\psi_\tau} - \D^t \ket{v} \right\|
&= \sqrt{\sum_k \bigg| \sum_{l=0}^\tau q_l \cos(l\theta)
	- \cos^t(\theta_k) \bigg|^2 \cdot |\alpha_k|^2} \\
&\leq \sqrt{ \epsilon^{\prime 2} \smash[b]{\sum_k |\alpha_k|^2} } = \epsilon'.
\end{align*}
Using standard manipulations we know that for any two nonzero vectors it holds that $\| v/\|v\| - w/\|w\| \| \leq 2\|v-w\|/\|w\|$.
As a consequence we can bound
\[
\bigg\| \frac{\ket{\psi}}{\|\ket{\psi}\|} - \frac{\D^t\ket{v}}{\|\D^t\ket{v}\|} \bigg\|
\leq \frac{2\epsilon'}{\|\D^t\ket{v}\|} = \epsilon,
\]
using the fact that we chose $\epsilon' = \|\D^t\ket{v}\| \epsilon/2$.
We can now also bound the success probability using the reverse triangle inequality:
\begin{align*}
\bigg\| \frac{1}{1-p_{>\tau}} \sum_{l=0}^\tau p_lT_l(\D) \ket{v} \bigg\|^2
&\geq \bigg\| \sum_{l=0}^\tau p_lT_l(\D) \ket{v} \bigg\|^2 \\
&\geq ( \|D^t\ket{v}\| - \epsilon' )^2
\geq (1-\epsilon) \|\D^t\ket{v}\|^2.
\end{align*}
By Lemma \ref{lem:LCU} we know that implementing the operator $W_\tau$ requires a number of QW steps
\[
\tau
= \bigg\lceil \sqrt{2t} \ln^{1/2} \frac{4}{\epsilon\|\D^t\ket{v}\|} \bigg\rceil
\in O\bigg( \sqrt{t} \, \log^{1/2} \frac{1}{\epsilon \|\D^t\ket{v}\|} \bigg).
\]
This finalizes the proof.
\end{proof}

This theorem establishes our algorithm for quantum fast-forwarding Markov chains.
It winds back the quantum walk speedup of the Ambainis-Szegedy scheme on the Markov chain asymptotic behavior to the original problem of quantum simulating Markov chains for any fixed timestep, showing that we can achieve the same quadratic acceleration that is characteristic for this scheme.
The success probability is proportional to $\|D^t\ket{v}\|^2$, so that $\|D^t\ket{v}\|^{-2}$ expected iterations are necessary for the scheme to succeed.
In the next section we show how to quadratically improve this.
We mention that the norm $\|D^t\ket{v}\|$ will be small when the Markov chain spreads out from a small set to a large set, e.g., going from a single node to the uniform distribution yields $\|D^t\ket{v}\| = N^{-1/2}$.
This reflects the fact that it is costly for the quantum algorithm to create a superposition over a large number of queried elements (see \cite{ambainis2011symmetry} for a discussion and a more rigorous analysis of this phenomenon).

\subsubsection{Amplitude Amplification}

We can improve the success probability to a constant by replacing the final measurement in the algorithm with amplitude amplification, provided that we can reflect around the initial state $\ket{v,\0\0}$ ($\equiv \ket{v,\0}$, we will use the shorthand $\0$ to denote $\0\0$), i.e., implement the reflection operator
\[
R_v
= 2\ket{v,\0}\bra{v,\0} - \I.
\]
Thereto we will use the following proposition by Brassard et al \cite{brassard2002quantum}, demonstrating how we can retrieve a component $\Pr_\0\ket{\psi}$ of some quantum state $\ket{\psi}$ by performing reflections around $\ket{\psi}$ and around the image of $\Pr_\0$.
\begin{proposition}[Amplitude amplification \cite{brassard2002quantum}] \label{prop:ampl-ampl}
Given an initial state $\ket{\psi}$ and a projection operator $\Pr_\0$, with $\Pr_\0 \ket{\psi} \neq 0$.
Define the reflections $R_\psi = 2\ket{\psi}\bra{\psi} - \I$ and $R_\0 = 2\Pr_\0 - \I$, and set $m = \lfloor \pi/(4\theta) \rfloor$ with $\theta\in(0,\pi/2]$ such that $\sin\theta = \| \Pr_\0 \ket{\psi} \|$.
If we apply the operator $(-\R_\psi\R_\0)$ $m$ times on the state $\ket{\psi}$, and we perform a measurement $\{\Pr_\0,\I-\Pr_\0\}$, then we find back the state $\Pr_\0\ket{\psi}/\|\Pr_\0\ket{\psi}\|$ with probability at least $\max\{1-\sin^2\theta,\sin^2\theta\} \geq 1/2$.
\end{proposition}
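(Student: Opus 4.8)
The plan is to collapse the analysis to the two-dimensional subspace on which all the action happens, and then read off the standard Grover-style ``rotation by $2\theta$'' behaviour.

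First I would set $\ket{a} = \Pr_\0\ket{\psi}/\|\Pr_\0\ket{\psi}\|$ and, whenever $(\I-\Pr_\0)\ket{\psi}\neq 0$, $\ket{b} = (\I-\Pr_\0)\ket{\psi}/\|(\I-\Pr_\0)\ket{\psi}\|$; the excluded case $(\I-\Pr_\0)\ket{\psi}=0$ corresponds to $\theta=\pi/2$, $m=0$, and is immediate. Since $\|\Pr_\0\ket{\psi}\|=\sin\theta$ and $\ket{\psi}$ is a unit vector, $\ket{a}$ and $\ket{b}$ are orthonormal and $\ket{\psi}=\sin\theta\,\ket{a}+\cos\theta\,\ket{b}$, so the real plane $\mathcal{P}=\mathrm{span}_{\real}\{\ket{a},\ket{b}\}$ contains $\ket{\psi}$. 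Restricted to $\mathcal{P}$, the operator $R_\0=2\Pr_\0-\I$ acts as $2\ket{a}\bra{a}-\I$ — the reflection about the $\ket{a}$-axis — and $R_\psi=2\ket{\psi}\bra{\psi}-\I$ is the reflection about the $\ket{\psi}$-axis; in particular $\mathcal{P}$ is invariant under both, and hence under $-R_\psi R_\0$.

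Next I would invoke the elementary plane-geometry fact that the composition of two reflections is a rotation by twice the angle between their mirror lines, which together with the extra factor $-\I$ shows that, restricted to $\mathcal{P}$, the map $-R_\psi R_\0$ is the rotation carrying $\ket{b}$ towards $\ket{a}$ by angle $2\theta$. Since $\ket{\psi}$ makes angle $\theta$ with the $\ket{b}$-axis, applying $(-R_\psi R_\0)^m$ produces, up to an irrelevant overall sign, the state $\sin((2m+1)\theta)\ket{a}+\cos((2m+1)\theta)\ket{b}$. Hence the measurement $\{\Pr_\0,\I-\Pr_\0\}$ returns outcome ``$\0$'' with probability $\sin^2((2m+1)\theta)$, and conditioned on that outcome the state is $\pm\ket{a}=\pm\Pr_\0\ket{\psi}/\|\Pr_\0\ket{\psi}\|$, as required.

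It remains to lower-bound $\sin^2((2m+1)\theta)$ for $m=\lfloor\pi/(4\theta)\rfloor$. If $\theta\le\pi/4$ then $m\ge 1$ and the defining inequalities $\pi/(4\theta)-1<m\le\pi/(4\theta)$ give $(2m+1)\theta\in(\tfrac{\pi}{2}-\theta,\tfrac{\pi}{2}+\theta]$, an interval on which $\sin$ is minimized at its endpoints, so $\sin((2m+1)\theta)\ge\cos\theta$ and the success probability is at least $\cos^2\theta=\max\{1-\sin^2\theta,\sin^2\theta\}\ge 1/2$. If $\pi/4<\theta\le\pi/2$ then $m=0$, the state is untouched, the success probability equals $\sin^2\theta$, and $\sin^2\theta\ge 1/2\ge 1-\sin^2\theta$, again giving the claimed bound. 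The one step requiring a little care is the orientation bookkeeping in pinning $-R_\psi R_\0$ down to a rotation by exactly $+2\theta$ (a sign error here is harmless but obscures the picture) and the separate treatment of the degenerate angles $\theta\in\{0,\pi/2\}$; the rest is routine trigonometry together with the defining property of the floor function.
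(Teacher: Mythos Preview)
Your proof is correct: the reduction to the two-dimensional invariant plane spanned by $\ket{a}$ and $\ket{b}$, the identification of $-R_\psi R_\0$ with a rotation by $2\theta$, and the resulting trigonometric bound via $m=\lfloor\pi/(4\theta)\rfloor$ all go through as you wrote them. (The ``up to an irrelevant overall sign'' caveat is in fact unnecessary --- with your conventions the state is exactly $\sin((2m+1)\theta)\ket{a}+\cos((2m+1)\theta)\ket{b}$ --- but this does not affect anything.)

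There is nothing to compare against in the paper: the proposition is stated there without proof, as a citation to Brassard, H{\o}yer, Mosca and Tapp \cite{brassard2002quantum}. Your argument is precisely the standard two-dimensional rotation analysis from that source, so you have supplied the proof the paper omitted.
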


This implementation of amplitude amplification requires a very good estimate of the initial success probability $\|\Pi_\0 \ket{\psi}\|^2$ to determine $m$.
If $m$ is chosen either too small or too large, then the guarantees on the success probability are lost, a problem often referred to as the ``souffl\'e problem''.
A remedy is however proposed in \cite{brassard2002quantum}, in which iteratively different guesses for $m$ are used.
They show that this approach also yields a success probability $\geq 1/2$, while still applying the operator $(-\R_\psi\R_\0)$ only $O(1/\|\Pi_\0 \ket{\psi}\|)$ times.
For clarity we will implicitly rely on this fact, and throughout assume that we can appropriately determine the parameter $m$.

In our QFF algorithm we have $\ket{\psi} = W_\tau \ket{v,\0}$, and we wish to retrieve the component $\Pr_\0 \ket{\psi} = \Pr_\0 W_\tau \ket{v,\0}$.
Amplitude amplification shows that we can do so with constant success probability by implementing the operator $(-\R_\psi\R_\0)$ for a number of times in $\Theta(1/\|\Pr_\0\ket{\psi}\|)$.
The reflection $\R_\0 = 2\Pr_\0 - \I$ is considered an elementary operation on the basis states, which we can implement with a negligible cost.
The following lemma shows that we can implement the reflection $\R_\psi = 2\ket{\psi}\bra{\psi} - \I$ using $O(\tau)$ QW steps.
\begin{lemma} \label{lem:cost-refl}
The operator $\R_\psi$ can be implemented using $O(\tau)$ QW steps and a reflection $\R_v$ around the initial state $\ket{v,\0}$.
\end{lemma}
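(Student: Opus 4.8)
The plan is to use the elementary fact that $\ket{\psi}$ is produced from the fixed reference state $\ket{v,\0}$ by a single unitary: as already noted just before the lemma, $\ket{\psi} = W_\tau\ket{v,\0}$ with $W_\tau = V_q^\dag W_{\mathrm{ctrl}} V_q$ (here $\ket{v,\0}$ is shorthand for $\ket{v,\0\0}$ on the full register $\mathbf{R_1R_2R_3}$). For any unitary $U$, conjugation turns a reflection about $\ket{v,\0}$ into a reflection about $U\ket{v,\0}$, so taking $U=W_\tau$ gives
\[
\R_\psi = 2\ket{\psi}\bra{\psi} - \I = W_\tau\big(2\ket{v,\0}\bra{v,\0} - \I\big)W_\tau^\dag = W_\tau\,\R_v\,W_\tau^\dag .
\]
Hence $\R_\psi$ is implemented simply by running $W_\tau^\dag$, applying the reflection $\R_v$, and running $W_\tau$.

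It then remains to count the cost of this circuit. By Lemma \ref{lem:LCU}, $W_\tau$ is realized with $O(\tau)$ QW steps: the coin tosses $V_q$ and $V_q^\dag$ act only on the $[\tau]$-register and use no QW steps, while the controlled operator $W_{\mathrm{ctrl}} = \sum_{l=0}^{\tau} W^l\otimes\ket{l}\bra{l}$ costs $O(\tau)$ applications of the base walk $W$ via the standard controlled-power construction invoked in the proof of that lemma. Since a quantum circuit is reversible gate by gate, $W_\tau^\dag$ has exactly the same $O(\tau)$ cost. Therefore the whole circuit for $\R_\psi$ uses $O(\tau)$ QW steps together with a single call to $\R_v$, which is the claim.

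The argument involves no estimates, so there is no real obstacle; the only thing to watch is the bookkeeping of the Hilbert space. One must make sure that $\R_v$ is understood as the reflection $2\ket{v,\0\0}\bra{v,\0\0} - \I$ on the extended space $\Hi_{\h\V\times[\tau]}$ (consistent with initializing $\mathbf{R_2R_3}$ in $\ket{\0\0}$), and note that the $W_\tau^\dag\,\R_v\,W_\tau$ sandwich automatically returns the ancilla registers to $\ket{\0\0}$, so that $\R_\psi$ is a genuine reflection on the working space. Everything else is immediate from Lemma \ref{lem:LCU}.
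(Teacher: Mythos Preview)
Your proof is correct and follows essentially the same approach as the paper: both use the identity $\R_\psi = W_\tau \R_v W_\tau^\dag$ and then bound the cost of $W_\tau$ and $W_\tau^\dag$ by $O(\tau)$ via Lemma~\ref{lem:LCU}. The only cosmetic difference is that the paper verifies the cost of $W_\tau^\dag$ by writing it out explicitly (using $\U^\dag=\U$ to see that the controlled part of $W_\tau^\dag$ again has the form $\sum_l \ket{l}\bra{l}\otimes(\U\R_\0)^l$), whereas you appeal to gate-by-gate reversibility; both are valid.
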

\begin{proof}
We can rewrite the reflection $\R_\psi = 2\ket{\psi}\bra{\psi} - \I = W_\tau \R_v W_\tau^\dag$, so that we can implement the reflection by implementing the operators $W_\tau$, $W_\tau^\dag$, and $\R_v$.
To implement the operator $W_\tau^\dag$, we recall that $W_\tau = V_q^\dag \Big[ \sum_{l=0}^\tau \ket{l}\bra{l} \otimes (\R_\0\U)^l \Big] V_q$ and so $W_\tau^\dag = V_q^\dag \Big[ \sum_{l=0}^\tau \ket{l}\bra{l} \otimes (\U\R_\0)^l \Big] V_q$.
Here we also used the fact that $\U = V^\dag S V$ with $S = S^\dag$, as in \eqref{eq:QW-watrous}, so that $\U^\dag = \U$.
We already discussed in Lemma \ref{lem:LCU} that a controlled operator $\sum_{l=0}^\tau \ket{l}\bra{l}\otimes \U \R_\0$ can be implemented in $O(\tau)$ QW steps, so that both $W_\tau$ and $W_\tau^\dag$ can be implemented in $O(\tau)$ QW steps.
\end{proof}

\noindent
It follows that the total operator $(-\R_\psi\R_\0)$ can be implemented using $O(\tau)$ QW steps, a reflection around the initial state $\ket{v,\0}$, and a number of elementary operations.
In many cases the initial state will be an elementary basis state, so that the reflection $\R_v$ will also be elementary, and the main cost boils down to $O(\tau R)$ QW steps.
We can now propose the improved QFF algorithm, $\mathbf{QFFg}$, in Algorithm \ref{alg:QFFg}.
Theorem \ref{thm:QFFg} proves its correctness and complexity.
We note that this describes the Monte Carlo version of QFF.
We can easily retrieve the Las Vegas version, as stated in Theorem \ref{thm:QFFg-intro} in the introduction, by repeating the below algorithm until it is successful.
As mentioned at the end in previous section, we improve the expected number of QW steps for the QFF algorithm to succeed from $\widetilde{\Theta}(\|D^t \ket{v}\|^{-2} \sqrt{t})$ to $\widetilde{\Theta}(\|D^t \ket{v}\|^{-1} \sqrt{t})$.

\begin{algorithm} \label{alg:QFFg}
\caption{Quantum Fast-Forward with Reflections $\mathbf{QFFg}(\ket{v},\P,t,\epsilon)$} \label{alg:QFFg}
\normalsize
\textbf{Input:}
quantum state $\ket{v}\in\Hi_\V$, Markov chain $\P$, $t\in\natural$, $\epsilon>0$ \\
\textbf{Do:}
\begin{algorithmic}[1]
\State set $\epsilon' = \|\D^t\ket{v}\|\epsilon/2$ and $\tau = \Big\lceil \sqrt{2t\ln(2/\epsilon')} \Big\rceil$
\State set $m = \lfloor \pi/(4\theta)\rfloor$, where $0<\theta\leq \pi/2$ s.t.~$\sin\theta = \| \Pr_\0 W_\tau \ket{v} \|$
\State initialize registers $\mathbf{R_1R_2R_3}$ with the state $\ket{v,\0\0}$
\State apply the LCU operator $W_\tau$ on $\mathbf{R_1R_2R_3}$
\State apply the operator $(-\R_\psi\R_\0)^m = (-W_\tau \R_v W^\dag_\tau \R_\0)^m$
\hfill $\triangleright$ \textit{Amplitude Amplification}
\State perform the measurement $\{\Pr_{\0\0},\I-\Pr_{\0\0}\}$
\State \textbf{if} outcome $\neq$ ``$\0\0$'' \textbf{then} output ``\texttt{Fail}'' and stop
\end{algorithmic}
\textbf{Output:}
registers $\mathbf{R_1R_2R_3}$
\end{algorithm}

\begin{theorem}
\label{thm:QFFg}
The QFFg algorithm $\mathbf{QFFg}(\ket{v},\P,t,\epsilon)$ outputs a state $\epsilon$-close to $\ket{\D^t v}$ with success probability at least $1/2$.
Otherwise, it outputs ``\texttt{Fail}''.
The algorithm uses $\Theta(1/\|\D^t\ket{v}\|)$ reflections around the initial state, and a number of QW steps
\vspace{-.2cm}
\[
O(m\tau)
\in O\bigg(\frac{\sqrt{t}}{\|\D^t\ket{v}\|} \, \log^{1/2} \frac{1}{\epsilon \|\D^t\ket{v}\|} \bigg).
\]
\end{theorem}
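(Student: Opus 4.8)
The plan is to derive Theorem~\ref{thm:QFFg} from Theorem~\ref{thm:QFF-success} by replacing the final projective measurement of the plain QFF algorithm with amplitude amplification (Proposition~\ref{prop:ampl-ampl}), and to cost the amplification using Lemma~\ref{lem:cost-refl}. First I would set $\ket{\psi} = W_\tau\ket{v,\0\0}$ and take the projector of Proposition~\ref{prop:ampl-ampl} to be $\Pr_{\0\0}$. The proof of Theorem~\ref{thm:QFF-success} already supplies the two facts we need about the post-measurement behaviour of $W_\tau$: the normalized component $\Pr_{\0\0}\ket{\psi}/\|\Pr_{\0\0}\ket{\psi}\|$ is $\epsilon$-close to $\ket{\D^t v}$, and $\|\Pr_{\0\0}\ket{\psi}\|^2 \geq (1-\epsilon)\|\D^t\ket{v}\|^2$. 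Consequently the angle $\theta\in(0,\pi/2]$ defined by $\sin\theta = \|\Pr_{\0\0}\ket{\psi}\|$ obeys $\sin\theta \geq \sqrt{1-\epsilon}\,\|\D^t\ket{v}\|$, which, assuming $\epsilon$ is bounded away from $1$, is $\Theta(\|\D^t\ket{v}\|)$.

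Second, I would invoke Proposition~\ref{prop:ampl-ampl} verbatim: with $m = \lfloor\pi/(4\theta)\rfloor$ as in line~2 of Algorithm~\ref{alg:QFFg}, applying $(-\R_\psi\R_\0)^m$ to $\ket{\psi}$ and then measuring $\{\Pr_{\0\0},\I-\Pr_{\0\0}\}$ returns the state $\Pr_{\0\0}\ket{\psi}/\|\Pr_{\0\0}\ket{\psi}\|$ with probability at least $1/2$. Since this is precisely the normalized QFF output, and that output is $\epsilon$-close to $\ket{\D^t v}$, the correctness claim follows immediately; on the complementary event the algorithm outputs ``\texttt{Fail}''.

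Third, the cost accounting. By Lemma~\ref{lem:cost-refl}, one application of $\R_\psi = W_\tau\R_v W_\tau^\dag$ costs $O(\tau)$ QW steps together with a single reflection $\R_v$ around $\ket{v,\0}$, while $\R_\0$ and the measurement are elementary. The algorithm applies $W_\tau$ once and then $(-\R_\psi\R_\0)$ exactly $m$ times, with $m = O(1/\theta) = O(1/\sin\theta) = O(1/\|\D^t\ket{v}\|)$; hence it uses $\Theta(1/\|\D^t\ket{v}\|)$ reflections around the initial state and $O(m\tau)$ QW steps. Substituting $\tau = \big\lceil\sqrt{2t\ln(4/(\epsilon\|\D^t\ket{v}\|))}\big\rceil \in O\big(\sqrt{t}\,\log^{1/2}(1/(\epsilon\|\D^t\ket{v}\|))\big)$ from line~1 then gives $O(m\tau)\in O\big(\sqrt{t}\,\|\D^t\ket{v}\|^{-1}\log^{1/2}(1/(\epsilon\|\D^t\ket{v}\|))\big)$, the stated bound.

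The one genuinely delicate point, which I would handle exactly as the surrounding text does, is that $m$ depends on the a~priori unknown quantity $\|\Pr_{\0\0}\ket{\psi}\| = \Theta(\|\D^t\ket{v}\|)$: choosing $m$ too small or too large destroys the success guarantee (the ``souffl\'e problem''). The fix is to appeal to the variant of amplitude amplification in~\cite{brassard2002quantum} that iterates over geometrically growing guesses for $m$, which still attains success probability $\geq 1/2$ using only $O(1/\|\Pr_{\0\0}\ket{\psi}\|)$ applications of $(-\R_\psi\R_\0)$, so the complexity bounds are unchanged. Everything else is bookkeeping on top of Theorem~\ref{thm:QFF-success}, Proposition~\ref{prop:ampl-ampl} and Lemma~\ref{lem:cost-refl}.
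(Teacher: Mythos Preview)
Your proposal is correct and follows essentially the same route as the paper: apply amplitude amplification (Proposition~\ref{prop:ampl-ampl}) to $\ket{\psi}=W_\tau\ket{v,\0\0}$, get the $\geq 1/2$ success guarantee and the $\epsilon$-closeness from the analysis in Theorem~\ref{thm:QFF-success}, and cost the $m$ iterations via Lemma~\ref{lem:cost-refl} together with the lower bound $\|\Pr_\0 W_\tau\ket{v,\0}\|\geq(1-\epsilon/2)\|\D^t\ket{v}\|\in\Theta(\|\D^t\ket{v}\|)$. Your write-up is in fact more explicit than the paper's (you spell out the souffl\'e issue inside the argument rather than relying on the remark preceding the theorem), but the structure and ingredients are identical.
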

\begin{proof}
The algorithm straightforwardly applies the amplitude amplification scheme on the state $W_\tau\ket{v,\0}$.
From Proposition \ref{prop:ampl-ampl} we know that the scheme has a success probability $\geq \max\{1-\sin^2\theta,\sin^2\theta \} \geq 1/2$.
The number of QW steps for implementing $W_\tau$ and $(-\R_\psi\R_\0)^m$ is $O(\tau)$ respectively $O(m\tau)$.
We know that $m \in O(1/\|\Pr_\0W_\tau\ket{v,\0}\|)$, and from the proof of Theorem \ref{thm:QFF-success} we can bound $\|\Pr_\0W_\tau\ket{v,\0}\| \geq \|\D^t\ket{v}\| - \epsilon' = (1-\epsilon/2)\|\D^t\ket{v}\| \in \Theta(\|\D^t\ket{v}\|)$ for all $\epsilon<1/2$.
\end{proof}

\subsection{Quantum Singular Value Transformation} \label{subsec:QSVT}

After completion of this work, we were pointed to the recent work of Gily\'en et al \cite{gilyen2018quantum} on quantum singular value transformation.
They develop a framework that generalizes and unifies the principles underlying a large number of quantum algorithms for problems such as Hamiltonian simulation, Gibbs sampling, and many more.
In the following we note that an alternative derivation of our QFF algorithm and its properties can be constructed from this framework.
Specifically, they consider a general projected unitary encoding of an operator $A = \Pr U \Pr'$, where $\Pr,\Pr'$ are projectors and $U$ is a unitary operator.
We can see the quantum walk encoding $D = \Pr_\0 U \Pr_\0$ of the discriminant matrix of a Markov chain, as in Lemma \ref{lem:QW-watrous}, as a special case of such encoding.
If $A$ has a singular value decomposition $A = W \Sigma V^\dag$, then they show that it is possible to transform the singular values of $A$.
More precisely, given some degree-$d$ polynomial $p$, they construct a quantum circuit that implements the operator $W p(\Sigma) V^\dag$ using the operators $U$ and $U^\dag$ at most $d$ times.
They then cite a result from Sachdeva and Vishnoi \cite{sachdeva2014faster} showing that if $p(\Sigma) = \Sigma^d$, then this polynomial can be efficiently approximated using a polynomial with degree $O(\sqrt{d})$ (this result also follows from our Chebyshev truncation in Lemma \ref{lem:appr-cheb}).
By applying their quantum singular value transformation on this approximating polynomial, we can retrieve our QFF algorithm.

\section{Quantum Property Testing} \label{sec:property-testing}

In this section we show how QFF allows to speed up random walk algorithms for property testing on graphs.
Specifically, we will consider property testers for the expansion and the clusterability of graphs.
We leave it as an open question whether similar speedups can be found for other graph property testers, an interesting example of which is the recent random walk algorithm by \cite{kumar2018fiding} for testing the occurrence of forbidden minors.
In the first section we will discuss the expansion tester of Goldreich and Ron (GR), which they presented in later work \cite{goldreich2011testing}, and we prove how QFF allows to accelerate this tester very naturally.
We compare this speedup to the preceding work by Ambainis, Childs and Liu \cite{ambainis2011quantum}, and discuss how they achieve a complementary speedup to ours.
Then we discuss the more recent line of algorithms for testing graph clusterability, aimed at probing for instance the community structure of a given graph.
We discuss how these testers make use of algorithms for classifying nodes in clusters, and show how QFF allows to accelerate these algorithms.
Such node classification is of relevance beyond the setting of property testing, allowing for instance nearest-neighbor classification of nodes in a learning problem.

\subsection{Classical Expansion Tester}
To formalize the concept of a graph property tester, we must introduce the notion of \textit{oracle or query access} to a graph as used throughout the literature on property testing.
Query access to an $N$-node undirected graph with degree bound $d$ means that we can query the graph with a string $(v,i) \in [N] \times [d]$, upon which we receive either the $i$-th neighbor $w \in [N]$ of $v$, or a special symbol in case $v$ has less than $i$ neighbors.
Clearly this model allows to perform a random walk over the graph.
In addition it allows to generate a uniformly random node by simply generating a random number in $[N]$.
This differs from the more classical Markov chain setting where possibly we are only given a single node, and we must explore the graph in a completely local manner.

Given such query access to a graph, the task of a property tester is to determine whether the graph has a certain property or is far from any graph having that property.
To formalize this, a distance measure between two $N$-node graphs $G$ and $G'$ is defined, equaling the number of edges that have to be added or removed from $G$ to transform it into $G'$.
With $\E$ and $\E'$ the edge sets of $G$ resp.~$G'$, this equals the size of the symmetric difference $|\E \triangle \E'|$.
We say that two $N$-node graphs $G$ and $G'$ with degree bound $d$ are $\epsilon$-far from each other if $|\E \triangle \E'| \geq \epsilon N d$.
Given a graph and a parameter $\epsilon$, a property tester should correctly distinguish between the graph having a certain property, and the graph being at least $\epsilon$-far from any graph with that property (i.e., the distance between the graph and any graph with that property is $\geq \epsilon N d$).
When given a graph that is neither, the algorithm can do whatever.

Goldreich and Ron \cite{goldreich2011testing} studied a property tester for the expansion of a graph\footnote{They actually studied a property tester for the spectral gap of a graph, for which currently there is no known sublinear algorithm. All follow-up work however considers the closely-related graph expansion.}.
The expansion or vertex expansion of a graph $G=(\V,\E)$ is defined by
\[
\min_{|\S| \leq |\V|/2} \frac{|\partial \S^c|}{|\S|},
\]
where $\partial \S^c$ denotes the outer boundary of $\S$, i.e., the set of nodes in $\S^c = \V\backslash\S$ that have an edge going to $\S$.
For some given parameter $\Upsilon$, an expansion tester should determine whether a given graph has expansion $\geq \Upsilon$, or whether it is at least $\epsilon$-far from any such graph.
GR, and the subsequent literature \cite{czumaj2010testing,kale2011expansion,nachmias2010testing,ambainis2011quantum}, have relaxed this setting somewhat.
They propose the following definition:
\begin{definition}
An algorithm is an $(\Upsilon,\epsilon,\mu)$-expansion tester if there exists a constant $c>0$, possibly dependent on $d$, such that given parameters $N$, $d$, and query access to an $N$-node graph with degree bound $d$ it holds that
\begin{itemize}
\item
if the graph has expansion $\geq \Upsilon$, then the algorithm outputs ``\texttt{accept}'' with probability at least $2/3$,
\item
if the graph is $\epsilon$-far from any graph having expansion $\geq c \mu \Upsilon^2$, then the algorithm outputs ``\texttt{reject}'' with probability at least $2/3$.
\end{itemize}
\end{definition}
\noindent
In the strict setting of property testing, the expression ``$\geq c \mu \Upsilon^2$'' in the second bullet should be replaced by ``$\geq \Upsilon$''.
Although unproven, the relaxation in this definition seems necessary to allow for efficient (sublinear) testing.
GR \cite{goldreich1997property} conjectured that the below Algorithm \ref{alg:get} is a $(\Upsilon,\epsilon,\mu)$ expansion tester.
They also proved that any classical expansion tester must make at least $\Omega(\sqrt{N})$ queries to the graph.

\begin{algorithm}
\caption{Graph Expansion Tester} \label{alg:get}
\normalsize
\textbf{Input:}
parameters $N$ and $d$; query access to $N$-node graph $G$ with degree bound $d$; expansion parameter $\Upsilon$; accuracy parameter $\epsilon$; running time parameter $\mu < 1/4$ \\
\textbf{Do:}
\begin{algorithmic}[1]
\For{$T \in \Theta(\epsilon^{-1})$ times}
\State
select a uniformly random starting vertex $s$
\State
perform $m \in \Theta(N^{1/2+\mu})$ independent random walks \newline
\-\ \hspace{5mm} of length $t \in \Theta(d^2 \Upsilon^{-2} \log N)$, starting in $s$
\State
count number of pairwise collisions between the endpoints of the $m$ random walks
\State
if the count is greater than $M \in \Theta(N^{2\mu})$, abort and output ``\texttt{reject}''
\EndFor
\end{algorithmic}
\textbf{Output:} if no ``\texttt{reject}'', output ``\texttt{accept}''
\end{algorithm}

The intuition behind the algorithm is very clear.
It builds on the use of a random walk $P$ on the given graph, which starting from a node $v$ jumps to any of its $d_v$ neighbors with a probability $1/(2d)$, and stays put otherwise:
\begin{equation} \label{eq:symm-RW}
P(u,v)
= \begin{cases}
1/(2d) & (v,u) \in E \\
1 - d_v/(2d) & u = v \\
0 & \text{elsewhere.} \\
\end{cases}
\end{equation}
This walk is lazy and symmetric, and hence converges to the uniform distribution.
If the graph has vertex expansion $\Upsilon$, then one can prove that the mixing time of this random walk is $O(d^2\Upsilon^{-2} \log N)$.
As a consequence, the probability distribution of the random walks in step 2 of the algorithm must be close to uniform.
If $p$ describes the probability distribution of the endpoint of a random walk starting from some fixed node, then the probability of a pairwise collision between two independent endpoints, i.e., the probability that the random walks end in the same node, is given by
\[
\sum p(j)^2
= \| p \|^2.
\]
This will be close to $1/N$ if $p$ is close to uniform.
If on the other hand the expansion of the graph is $\ll \Upsilon$, then random walks can get stuck in a small region, leading to an increase in the 2-norm or collision probability.
It follows that the collision probability of a random walk forms a measure for the expansion of the graph.
The key insight is then that, by a refinement of the birthday paradox, $\Theta(N^{1/2+\mu})$ independent samples of the same random walk suffice to estimate the collision probability to within a multiplicative factor $1 + O(N^{-2\mu})$.
As a consequence, it is possible to estimate the 2-norm of a $t$-step random walk on an $N$-node graph using $\Theta(N^{1/2+\mu} t \epsilon^{-1})$ random walk steps.

The conjecture that Algorithm \ref{alg:get} is an expansion tester was later resolved as the conclusion of a series of papers by Czumaj and Sohler \cite{czumaj2010testing}, Kale and Seshadhri \cite{kale2011expansion} and Nachmias and Shapira \cite{nachmias2010testing}, leading to the following theorem.

\begin{theorem}[\cite{nachmias2010testing}]
If $d \geq 3$, then Algorithm \ref{alg:get} is a $(\Upsilon,\epsilon,\mu)$ expansion tester with runtime
\[
O(N^{1/2+\mu} \Upsilon^{-2} d^2 \epsilon^{-1} \log N).
\]
\end{theorem}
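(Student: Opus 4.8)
\noindent\emph{Proof idea.}
The plan is to verify the two clauses of the definition of an $(\Upsilon,\epsilon,\mu)$ expansion tester separately — completeness (graphs with expansion $\geq\Upsilon$ are accepted) and soundness (graphs $\epsilon$-far from expansion $\geq c\mu\Upsilon^2$ are rejected) — the runtime bound being immediate: Algorithm \ref{alg:get} runs $T\in\Theta(\epsilon^{-1})$ iterations, each performing $m\in\Theta(N^{1/2+\mu})$ walks of length $t\in\Theta(d^2\Upsilon^{-2}\log N)$ and then counting collisions in time $O(m)$, for a total of $O(N^{1/2+\mu}\Upsilon^{-2}d^2\epsilon^{-1}\log N)$ walk steps.

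For completeness I would first record the elementary fact that on a graph with degree bound $d$, vertex expansion $\geq\Upsilon$ implies conductance $\Omega(\Upsilon/d)$ for the lazy symmetric walk $P$ of \eqref{eq:symm-RW} (each boundary vertex contributes an edge of weight $1/(2d)$), hence by the Cheeger bound a spectral gap $\Omega(\Upsilon^2/d^2)$, and — using that $P$ is positive semidefinite because $P(u,u)\geq 1/2$ dominates the off-diagonal row mass — a mixing bound $\|p_s-u\|_\infty\leq N^{-2}$ for $t\in\Theta(d^2\Upsilon^{-2}\log N)$, every start $s$, and $u$ the uniform distribution. Here $p_s$ denotes the distribution of the endpoint of a $t$-step walk from $s$. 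This gives $\|p_s\|^2\leq(1+o(1))/N$, so the expected number of pairwise collisions among the $m$ walks is $\binom{m}{2}\|p_s\|^2\leq(1+o(1))\,m^2/(2N)$, which is at most $M/2$ once the hidden constant in $M\in\Theta(N^{2\mu})$ is chosen large relative to the one in $m\in\Theta(N^{1/2+\mu})$. Bounding the variance of the collision count (a sum over pairs of weakly positively correlated indicators) by $O(m^2\|p_s\|^2+m^3\|p_s\|^3)=o(M^2)$ in this near-uniform regime and applying Chebyshev, the count exceeds $M$ in a given iteration with small probability; a union bound over the $T$ iterations yields acceptance with probability $\geq 2/3$.

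For soundness I would argue the contrapositive via the structural theorem assembled across \cite{czumaj2010testing,kale2011expansion,nachmias2010testing}. Call a start $s$ \emph{bad} if $\|p_s\|^2\geq(1+\gamma)/N$ for a suitable constant $\gamma>0$, equivalently if $p_s$ stays essentially trapped for $t$ steps in a vertex set of size $\leq N/2$ with small boundary. The structural claim is: if at most $(\epsilon/c')N$ vertices are bad, then $G$ can be turned into a degree-$d$ graph of vertex expansion $\geq c\mu\Upsilon^2$ by adding/removing fewer than $\epsilon N d$ edges — one covers the bad vertices by disjoint small sets of small expansion (these are exactly the obstructions certified by a slowly-spreading start) and locally repairs each by splicing in a bounded-degree expander on few edges, the per-set budgets summing below $\epsilon N d$. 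Hence an $\epsilon$-far graph has at least $(\epsilon/c')N$ bad vertices, a uniformly random start is bad with probability $\geq\epsilon/c'$, and over $T\in\Theta(\epsilon^{-1})$ iterations a bad start is drawn with probability $\geq 5/6$ (suitable constant in $T$). Conditioned on a bad $s$, the expected collision count is $\binom{m}{2}\|p_s\|^2\geq(1+\gamma)\binom{m}{2}/N\geq(1+\Omega(1))M$, and the concentration of this collision statistic about its mean — the refined-birthday-paradox estimate of Czumaj–Sohler, which is what pins down $m\in\Theta(N^{1/2+\mu})$ and $M\in\Theta(N^{2\mu})$ — shows the count exceeds $M$ with constant probability, so after rescaling constants the algorithm rejects with probability $\geq 2/3$.

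I expect the soundness structural theorem to be the main obstacle: converting ``few bad starts'' into ``$\epsilon$-close to a $(c\mu\Upsilon^2)$-expander'' is the delicate combinatorial core of \cite{czumaj2010testing,kale2011expansion,nachmias2010testing} — decomposing the graph into sparse cuts witnessed by slowly-mixing starts, repairing each while respecting the degree bound, and accounting for the edge budget — and it is precisely here that the quadratic relaxation ($\Upsilon$ versus $\Upsilon^2$) and the exponent $\mu$ enter. A secondary but nontrivial point is the concentration of the collision count, whose summands over pairs of walks are not independent, so a plain Chernoff bound is unavailable and one must use the second-moment argument underlying the birthday-paradox refinement.
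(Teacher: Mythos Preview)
The paper does not give its own proof of this theorem: it is stated as a cited result from \cite{nachmias2010testing} (building on \cite{czumaj2010testing,kale2011expansion}) and used as a black box. So there is no paper-side argument to compare against. The only place the paper touches the internals of this result is in the proof of Theorem~\ref{thm:QGET}, where it quotes the two key quantitative facts from \cite{nachmias2010testing}: in the positive case $\|P^t\ket{s}\|^2 \le N^{-1}(1+N^{-1})$ for every $s$, and in the negative case at least $\epsilon N/128$ vertices satisfy $\|P^t\ket{s}\|^2 \ge N^{-1}(1+32^{-1}N^{-2\mu})$.

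Your sketch matches the overall architecture of the cited proofs (mixing via Cheeger for completeness; a structural ``few bad starts $\Rightarrow$ close to an expander'' theorem for soundness; second-moment concentration for the collision count). One point is off, though, and it is exactly where the parameter $\mu$ lives. You define a start $s$ to be \emph{bad} when $\|p_s\|^2\ge(1+\gamma)/N$ for a \emph{constant} $\gamma>0$. In the actual argument the gap is only $\gamma=\Theta(N^{-2\mu})$, as the paper's quotation of \cite{nachmias2010testing} makes explicit. If $\gamma$ were a constant, $m=\Theta(N^{1/2})$ samples would already suffice and the exponent $\mu$ would be superfluous; it is precisely because the soundness side only guarantees a $(1+\Theta(N^{-2\mu}))$ multiplicative bump in $\|p_s\|^2$ that one needs $m=\Theta(N^{1/2+\mu})$ walks and a threshold $M=\Theta(N^{2\mu})$ to separate the two cases. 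Your later remark that the birthday-paradox refinement ``pins down $m$ and $M$'' is correct in spirit, but it is inconsistent with taking $\gamma$ constant --- tighten that definition to $\gamma=\Theta(N^{-2\mu})$ and the rest of your outline stands.
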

\noindent
In the following section we show that we can use QFF to accelerate this tester very naturally by quantum simulating the random walks, and using quantum techniques to estimate the 2-norm.

\subsection{Quantum Expansion Tester} \label{sec:quantum-testing}

It is possible to extend the classical query model to the quantum setting, a proper definition of which can be found in \cite{montanaro2016survey,ambainis2011quantum}.
For this work it suffices to know that (i) we can generate a uniformly random node as in the classical case, and (ii) we can implement a single step of the quantum walk operator as defined in \eqref{eq:QW-ambainis} using $O(\sqrt{d})$ queries to the graph, where $d$ is the maximum degree.

To accelerate the classical tester we will quantum simulate the random walks, and then perform quantum amplitude estimation to estimate the 2-norm of the simulated random walks.
Together with the aforementioned amplitude amplification scheme, the amplitude estimation scheme is described in the work by Brassard et al \cite{brassard2002quantum}.
It is captured by the following lemma.
We note that in the original statement in \cite{brassard2002quantum} the number of reflections scales as $1/\delta$ for a success probability $1-\delta$.
We use standard tricks to improve this to $\log(1/\delta)$.
\begin{lemma}[Quantum Amplitude Estimation] \label{lem:QAE}
Consider a quantum state $\ket{\psi}$ and a general projector $\Pr_\0$.
Give some $\delta>0$, there exists a quantum algorithm that outputs an estimate $a$ such that $\big| \|\Pr_\0 \ket{\psi}\| - a \big| \leq \epsilon$ with probability at least $1-\delta$, using $O(\log(1/\delta)\epsilon^{-1})$ reflections around $\ket{\psi}$ and around the image of $\Pr_\0$.
\end{lemma}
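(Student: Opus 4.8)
The approach is to first invoke the standard amplitude estimation result of Brassard et al.\ \cite{brassard2002quantum}, which already gives an estimate $a$ of $\|\Pr_\0\ket{\psi}\|$ (or equivalently of the amplitude $\sin\theta = \|\Pr_\0\ket{\psi}\|$) with additive error $\epsilon$ using $O(1/\epsilon)$ applications of the Grover-type iterate $-R_\psi R_\0$, but only with \emph{constant} success probability (their stated guarantee is $\geq 8/\pi^2$). Each such iterate consists of one reflection $R_\psi$ around $\ket{\psi}$ and one reflection $R_\0 = 2\Pr_\0 - \I$ around the image of $\Pr_\0$, so the cost is $O(1/\epsilon)$ reflections of each type. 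The only gap between this and the claimed statement is the success probability: we want $1-\delta$ for arbitrary $\delta>0$, at a cost that scales only as $\log(1/\delta)$ rather than the naive $1/\delta$ one would get by iterating the original algorithm until success.

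The key step is the standard \emph{median-of-estimates} boosting trick. Run the constant-success-probability amplitude estimation procedure $k = \Theta(\log(1/\delta))$ independent times, obtaining estimates $a_1,\dots,a_k$, and output their median $a = \mathrm{median}(a_i)$. Since each $a_i$ lies within $\epsilon$ of the true value $\|\Pr_\0\ket{\psi}\|$ with probability $\geq 2/3$ (say), a Chernoff bound on the number of ``good'' estimates shows that the median lies within $\epsilon$ of the true value with probability $\geq 1-\delta$ once $k \geq c\log(1/\delta)$ for a suitable absolute constant $c$. Indeed, for the median to be off by more than $\epsilon$, at least half of the $a_i$ must be off by more than $\epsilon$, an event whose probability decays exponentially in $k$. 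The total cost is $k$ times the cost of one run, i.e.\ $O(\log(1/\delta)\,\epsilon^{-1})$ reflections around $\ket{\psi}$ and around the image of $\Pr_\0$, as claimed.

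The main obstacle — though it is more a matter of care than of depth — is making precise what ``the amplitude estimation algorithm of \cite{brassard2002quantum}'' outputs and with what guarantee, since that reference phrases its result in terms of estimating $\sin^2\theta$ with a specific non-constant error bound that improves with the number of iterations. One must extract from it the clean statement that $m = O(1/\epsilon)$ iterations suffice to get additive error $\epsilon$ on $\|\Pr_\0\ket{\psi}\|$ with probability bounded below by an absolute constant strictly above $1/2$ — this is exactly what is needed for the median trick to kick in — and verify that a single iterate uses one reflection of each type so the per-run reflection count is $O(1/\epsilon)$. Everything else is the routine Chernoff/median argument, which I would state in one or two lines rather than belabor.
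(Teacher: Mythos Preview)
Your proposal is correct and follows essentially the same approach as the paper: invoke Brassard et al.\ for an $O(1/\epsilon)$-reflection estimator with constant success probability, then boost to $1-\delta$ by $\Theta(\log(1/\delta))$ independent repetitions and a Chernoff/Hoeffding argument. The only cosmetic difference is that the paper aggregates by finding a $2T/3$-sized subset of estimates lying in a common $2\epsilon'$-window (having first run the base estimator to error $\epsilon' = \epsilon/3$) rather than taking the median, but these are interchangeable variants of the same standard trick.
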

\begin{proof}
We can use the quantum amplitude estimation algorithm from \cite[Theorem 12]{brassard2002quantum} to output an $\epsilon'=\epsilon/3$-close estimate with success probability at least $5/6$.
This algorithm requires $O(1/\epsilon)$ reflections around $\ket{\psi}$ and around the image of $\Pr_\0$.
We can boost the success probability to $1-\delta$ by running their algorithm $T = \lceil 18 \ln \delta^{-1} \rceil$ times, which by Hoeffding's inequality implies that, with a probability at least $1-\delta$, at least $2T/3$ iterations have been successful.
Therefore, with probability $1-\delta$, it holds that (i) at least $2T/3$ estimates lie in the interval $[\|\Pr_\0 \ket{\psi}\| - \epsilon',\|\Pr_\0 \ket{\psi}\| + \epsilon']$, and therefore (ii) we can find a subset $\S$ of estimates, $|\S| \geq 2T/3$, all of which lie in a $2\epsilon'$-interval.
This subset must overlap with the interval $[\|\Pr_\0 \ket{\psi}\| - \epsilon',\|\Pr_\0 \ket{\psi}\|+\epsilon']$.

If now we output any element of this subset $\S$, we know that it lies in the interval $[\|\Pr_\0 \ket{\psi}\| - 3\epsilon',\|\Pr_\0 \ket{\psi}\| + 3\epsilon']$.
This proves that with probability $1-\delta$ we can output an estimate of $\|\Pr_\0 \ket{\psi}\|$ with precision $3\epsilon' = \epsilon$, using $T$ runs of the quantum amplitude estimation algorithm in \cite{brassard2002quantum}, each of which requires $O(1/\epsilon)$ reflections around $\ket{\psi}$ and around the image of $\Pr_\0$.
This proves the claimed statement.
\end{proof}

We will use this amplitude estimation algorithm to estimate the 2-norm of a random walk.
Thereto we recall the QFF scheme as discussed in Section \ref{sec:QFF}.
Note that the random walk \eqref{eq:symm-RW} proposed in the GR tester is symmetric, so that we can simply replace the discriminant matrix $D$ in the QFF algorithm by the random walk matrix $P$.
Given a quantum state $\ket{s,\0\0}$, QFF applies an operator $W_\tau$ so that
\[
\Pr_\0 W_\tau \ket{s,\0\0}
= \bigg( \frac{1}{1-p_{>\tau}} \sum_{l=0}^\tau p_l T_l(P) \ket{s} \bigg) \otimes \ket{\0\0}
\approx (P^t \ket{s}) \otimes \ket{\0\0},
\]
as in \eqref{eq:W-tau-pl}, with the summation corresponding to the truncated Chebyshev expansion of $P^t$.
Implementing this operator requires $O(\tau)$ QW steps and $O(\tau \sqrt{d})$ queries to the graph.
If we set $\tau \in \Theta\big(\sqrt{t \log (N\epsilon^{-1})}\big)$ (replacing $\|P^t\ket{s}\|$ by its lower bound $N^{-1/2}$ in Algorithm \ref{alg:QFFg}) then the 2-norm of $\frac{1}{1-p_{>\tau}} \sum_{l=0}^\tau p_l T_l(P) \ket{v}$ approximates the 2-norm of $P^t \ket{v}$ to precision $O(\epsilon)$.
Applying quantum amplitude estimation on the state $W_\tau \ket{v,\0\0}$ and projector $\Pr_\0$ will therefore allow to estimate the 2-norm of $P^t \ket{v}$, as was our initial goal.
This scheme is easily formalized in the below algorithm and theorem.

\begin{algorithm}[H]
\caption{Quantum 2-norm Estimator} \label{alg:ql2t}
\normalsize
\textbf{Input:} parameters $N$ and $d$; query access to $N$-node graph $G$ with degree bound $d$; starting vertex $s$; running time $t$; accuracy parameter $\epsilon$; confidence parameter $\delta$ \\
\textbf{Do:}
\begin{algorithmic}[1]
\State
set $\tau \in O(\sqrt{t \log (N/\epsilon)})$
\State
apply the QFF operator $W_\tau$ on the quantum state $\ket{s,\0\0}$
\State
use quantum amplitude estimation to create estimate $a$ of $\|\Pr_\0 W_\tau \ket{s,\0\0}\|$ \newline
\-\ to error $\epsilon/2$ with probability $1-\delta$
\end{algorithmic}
\textbf{Output:} estimate $a$

\end{algorithm}

\begin{theorem}[Quantum 2-norm Estimator] \label{thm:ql2t}
With probability at least $1-\delta$, Algorithm \ref{alg:ql2t} outputs an estimate $a$ such that $\big| \| P^t \ket{s} \| - a \big| \leq \epsilon$.
The algorithm requires a number of QW steps bounded by $O\big( \frac{\sqrt{t}}{\epsilon} \log\frac{1}{\delta} \log^{1/2} \frac{N}{\epsilon} \big)$.
\end{theorem}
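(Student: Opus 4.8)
The plan is to split the target accuracy $\epsilon$ into two halves: one for the error incurred by replacing $P^t\ket{s}$ with the truncated Chebyshev expansion realized by $W_\tau$, and one for the error of quantum amplitude estimation. Since the random walk in \eqref{eq:symm-RW} is symmetric, its discriminant matrix is just $P$, so Lemma~\ref{lem:LCU} and \eqref{eq:W-tau-pl} give $\Pr_\0 W_\tau\ket{s,\0\0} = \bigl( \frac{1}{1-p_{>\tau}}\sum_{l=0}^\tau p_l T_l(P)\ket{s} \bigr)\otimes\ket{\0\0}$. Writing $\psi_\tau = \sum_{l=0}^\tau p_l T_l(P)\ket{s}$ and expanding $\ket{s}$ in an eigenbasis of $P$ exactly as in the proof of Theorem~\ref{thm:QFF-success}, Lemma~\ref{lem:appr-cheb} gives $\|\psi_\tau - P^t\ket{s}\| \le p_{>\tau}$; absorbing the renormalization factor $1/(1-p_{>\tau})$ and using $\|P^t\ket{s}\|\le 1$ then yields $\|\Pr_\0 W_\tau\ket{s,\0\0} - P^t\ket{s}\otimes\ket{\0\0}\| = O(p_{>\tau})$, and hence $\bigl|\, \|\Pr_\0 W_\tau\ket{s,\0\0}\| - \|P^t\ket{s}\| \,\bigr| = O(p_{>\tau})$ by the reverse triangle inequality.

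Next I would pin down $\tau$. Choosing $\tau = \bigl\lceil \sqrt{2t\ln(c\sqrt N/\epsilon)} \bigr\rceil$ for a suitable absolute constant $c$ --- which lies in $O(\sqrt{t\log(N/\epsilon)})$ as required by step~1 of Algorithm~\ref{alg:ql2t}, and is exactly the choice from Algorithm~\ref{alg:QFFg} with the unknown $\|P^t\ket{s}\|$ replaced by its worst-case lower bound $N^{-1/2}$ --- the tail bound on $p_{>\tau}$ used in the proof of Lemma~\ref{lem:appr-cheb} shows $p_{>\tau} = O(N^{-1/2}\epsilon)$, small enough that the discrepancy above is at most $\epsilon/2$.

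I would then invoke quantum amplitude estimation, Lemma~\ref{lem:QAE}, with $\ket{\psi}=W_\tau\ket{s,\0\0}$, projector $\Pr_\0$, target accuracy $\epsilon/2$, and confidence $\delta$: with probability at least $1-\delta$ it outputs $a$ with $\bigl|\, \|\Pr_\0 W_\tau\ket{s,\0\0}\| - a \,\bigr| \le \epsilon/2$, using $O(\log(1/\delta)/\epsilon)$ reflections around $\ket{\psi}$ and around the image of $\Pr_\0$. Combining this with the previous bound via the triangle inequality gives $\bigl|\, \|P^t\ket{s}\| - a \,\bigr| \le \epsilon$ with probability at least $1-\delta$, which is the stated correctness claim.

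For the complexity, reflections around the image of $\Pr_\0$ are elementary, the initial preparation of $\ket{\psi}$ costs $O(\tau)$ QW steps by Lemma~\ref{lem:LCU}, and each reflection $\R_\psi = W_\tau \R_s W_\tau^\dag$ around $\ket{\psi}$ costs $O(\tau)$ QW steps by Lemma~\ref{lem:cost-refl}, the reflection $\R_s$ around the fixed basis state $\ket{s,\0\0}$ being elementary. Multiplying the $O(\log(1/\delta)/\epsilon)$ reflections by $O(\tau)$ and substituting $\tau\in O(\sqrt{t}\,\log^{1/2}(N/\epsilon))$ gives the claimed $O\bigl( \frac{\sqrt t}{\epsilon}\log\frac1\delta\log^{1/2}\frac N\epsilon \bigr)$ QW steps. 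I expect the only delicate point to be the error bookkeeping in the first step --- keeping track of the $1/(1-p_{>\tau})$ renormalization and ensuring that the truncation contributes an \emph{additive} error to the 2-norm, rather than the relative error $2\epsilon'/\|D^t\ket{v}\|$ that arises when one normalizes the output state as in Theorem~\ref{thm:QFF-success}. Everything else is a routine assembly of lemmas already proved.
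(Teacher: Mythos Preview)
Your proposal is correct and follows essentially the same approach as the paper: choose $\tau$ by replacing the unknown $\|P^t\ket{s}\|$ with its worst-case lower bound $N^{-1/2}$, use the truncation bound from Lemma~\ref{lem:appr-cheb} to show $\bigl|\,\|\Pr_\0 W_\tau\ket{s,\0\0}\|-\|P^t\ket{s}\|\,\bigr|\le \epsilon/2$, then apply amplitude estimation (Lemma~\ref{lem:QAE}) at precision $\epsilon/2$ and count reflections via Lemma~\ref{lem:cost-refl}. Your explicit bookkeeping of the $1/(1-p_{>\tau})$ renormalization is slightly more detailed than the paper, which simply defers to the proof of Theorem~\ref{thm:QFF-success}, but the argument is the same.
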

\begin{proof}
We will prove the theorem for the algorithm parameter $\tau = \Big\lceil \sqrt{2t \ln\big(8\sqrt{N}/\epsilon\big)} \Big\rceil$.
By this choice, and the fact that $\|P^t\ket{s}\| \geq N^{-1/2}$ on any $N$-node graph, we can deduce from the proof of Theorem \ref{thm:QFF-success} that
\[
\big| \|\Pr_\0 W_\tau \ket{s}\| - \|P^t \ket{s}\| \big|
\leq \epsilon/2.
\]
Applying quantum amplitude estimation on $\Pr_\0 W_\tau \ket{s}$ with a precision $\epsilon/2$ therefore leads to an estimate of $\|P^t \ket{s}\|$ up to error $\epsilon$.
By Lemma \ref{lem:QAE} we can do so with a probability $1-\delta$ using $O(\epsilon^{-1} \log \delta^{-1})$ reflections around $W_\tau \ket{s}$.
We can implement a single such reflection using $2\tau$ QW steps, and a reflection around the initial state (which is a basis state and can hence be neglected).
\end{proof}

We can compare this with the classical 2-norm tester proposed by Czumaj, Peng and Sohler \cite[Lemma 3.2]{czumaj2015testing}, building on the GR tester.
For $\delta = 1/3$ their tester requires $O(t/\epsilon)$ queries to the graph, whereas our algorithm only requires $\widetilde{O}(\sqrt{t}/\epsilon)$ queries.
We can now use our quantum 2-norm tester to create a quantum tester for the graph expansion.
The proof makes use of some details from the classical proof of Nachmias and Shapira \cite{nachmias2010testing}.

\begin{algorithm}[H]
\caption{Quantum Graph Expansion Tester} \label{alg:qget}
\normalsize
\textbf{Input:} parameters $N$ and $d$; query access to $N$-node graph $G$ with degree bound $d$; expansion parameter $\Upsilon$; accuracy parameter $\epsilon$; running time parameter $\mu < 1/4$ \\
\textbf{Do:}
\begin{algorithmic}[1]
\State
set $t \in O(d^2 \Upsilon^{-2} \log N)$, $M \in O(N^{-1/2})$, $\epsilon' \in O(N^{-1/2+\mu})$, $\delta \in O(\epsilon)$
\For{$T \in O(\epsilon^{-1})$ times}
\State
select a uniformly random starting node $s$
\State
use Algorithm \ref{alg:ql2t} to create estimate $a$ of $\| P^t \ket{s} \|$ to precision $\epsilon'$, \newline
\-\ \hspace{5mm} with probability $1-\delta$
\State
if $a > M + \epsilon'$, abort and output ``\texttt{reject}''
\EndFor
\end{algorithmic}
\textbf{Output:} if no ``\texttt{reject}'', output ``\texttt{accept}''
\end{algorithm}

\begin{theorem}[Quantum Graph Expansion Tester] \label{thm:QGET}
If $d \geq 3$ then Algorithm \ref{alg:qget} is a $(\Upsilon,\epsilon,\mu)$ expansion tester.
The runtime and number of queries of the algorithm are bounded by
\[
O(N^{1/2+\mu} d^{3/2} \Upsilon^{-1} \epsilon^{-1} \log(\epsilon^{-1}) \log N).
\]
\end{theorem}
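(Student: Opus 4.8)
The plan is to keep the architecture of the Goldreich--Ron collision tester intact and merely substitute the quantum $\ell_{2}$-estimator of Theorem~\ref{thm:ql2t} for the classical ``run $\Theta(N^{1/2+\mu})$ walks and count collisions'' subroutine; correctness then reduces to the classical analysis of Nachmias and Shapira~\cite{nachmias2010testing}, and the runtime is a short composition of cost bounds. A preliminary remark makes the substitution legitimate: the lazy walk~\eqref{eq:symm-RW} is symmetric, so its discriminant matrix equals $P$, and the QFF scheme of Section~\ref{sec:QFF} applies directly with $D$ replaced by $P$; hence Algorithm~\ref{alg:ql2t} returns an estimate of $\|P^{t}\ket{s}\|$ with the cost asserted in Theorem~\ref{thm:ql2t}, and this is the only step where quantum walks enter — everything else is a faithful translation of the classical argument.

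For correctness I would treat the two cases separately, in both reusing the fact that each of the $T\in\Theta(\epsilon^{-1})$ calls to Algorithm~\ref{alg:ql2t} fails with probability at most $\delta\in O(\epsilon)$, so the constant can be tuned to make $T\delta\le 1/6$ and, by a union bound, all $T$ estimates lie within $\epsilon'$ of the truth with probability $\ge 5/6$. \emph{Completeness:} if $G$ has vertex expansion $\ge\Upsilon$, the mixing-time bound for~\eqref{eq:symm-RW} used in~\cite{goldreich2011testing,nachmias2010testing} gives, for $t\in\Theta(d^{2}\Upsilon^{-2}\log N)$ and \emph{every} starting node $s$, that $P^{t}\ket{s}$ is close enough to uniform that $\|P^{t}\ket{s}\|<M-\epsilon'$, so all estimates fall below $M+\epsilon'$ and the tester accepts with probability $\ge 2/3$. \emph{Soundness:} here I would invoke the structural output of the classical analysis — if $G$ is $\epsilon$-far from every graph with expansion $\ge c\mu\Upsilon^{2}$, a set of starting nodes of density $\Omega(\epsilon)$ has the property that a $t$-step walk from such a node concentrates on a set small enough that $\|P^{t}\ket{s}\|>M+\epsilon'$. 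Drawing $\Theta(\epsilon^{-1})$ independent starts hits this set with constant probability; on that round Algorithm~\ref{alg:ql2t} reports a value exceeding $M+\epsilon'$ and the tester rejects, again with probability $\ge 2/3$ after fixing the hidden constants.

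For the complexity the runtime (dominated by the query count up to the logarithmic overheads already present) is $T$ times the per-round cost of Algorithm~\ref{alg:ql2t}, which by Theorem~\ref{thm:ql2t} is $O\!\big(\tfrac{\sqrt{t}}{\epsilon'}\log\tfrac{1}{\delta}\log^{1/2}\tfrac{N}{\epsilon'}\big)$ quantum walk steps, times the $O(\sqrt{d})$ queries needed to implement one quantum walk step in the query model of Section~\ref{sec:quantum-testing}. Substituting $t\in O(d^{2}\Upsilon^{-2}\log N)$ (so $\sqrt{t}\in O(d\Upsilon^{-1}\sqrt{\log N})$), $\delta\in O(\epsilon)$, and $1/\epsilon'\in O(N^{1/2+\mu})$ — the precision $\epsilon'$ being dictated by the soundness step, at the same resolution with which the classical $\Theta(N^{1/2+\mu})$-sample collision count pins down $\|P^{t}\ket{s}\|$ — together with $\log^{1/2}(N/\epsilon')\in O(\sqrt{\log N})$ and $T\in O(\epsilon^{-1})$, collapses the product to $O(N^{1/2+\mu}d^{3/2}\Upsilon^{-1}\epsilon^{-1}\log(\epsilon^{-1})\log N)$, as claimed.

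I expect the main obstacle to be the soundness step: one must extract from the Nachmias--Shapira argument the precise quantitative trade-off between the density of ``congested'' starting nodes and the size of the set on which their walk concentrates, and then verify that a single additive estimate of $\|P^{t}\ket{s}\|$ to precision $\epsilon'$ — rather than the classical multiplicative collision statistic — still separates the completeness and soundness thresholds, all while keeping $1/\epsilon'$ at $O(N^{1/2+\mu})$ so as not to inflate the runtime. By contrast, the completeness mixing bound, the union bound over the $T$ amplitude-estimation failures, and the arithmetic in the complexity count are routine.
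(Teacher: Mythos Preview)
Your proposal is correct and follows essentially the same route as the paper: keep the Goldreich--Ron architecture, replace collision counting by the quantum $2$-norm estimator (Theorem~\ref{thm:ql2t}), and invoke Nachmias--Shapira~\cite{nachmias2010testing} for the two threshold bounds on $\|P^t\ket{s}\|$; the paper's proof does exactly this with explicit constants ($M=\sqrt{N^{-1}(1+N^{-1})}$, soundness threshold $\sqrt{N^{-1}(1+32^{-1}N^{-2\mu})}$, density $\epsilon N/128$ of bad starts). One small slip: in the completeness case Nachmias--Shapira give $\|P^t\ket{s}\|\le M$, not $<M-\epsilon'$, but since the accept condition is $a\le M+\epsilon'$ this weaker bound already suffices, so nothing breaks.
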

\begin{proof}
We will prove the theorem for the algorithm parameters $t = 16 d^2 \Upsilon^{-2} \log N$, $M = \sqrt{N^{-1}(1+N^{-1})}$, $\epsilon' = N^{-1/2+\mu}/(16\sqrt{2})$, $\delta = \epsilon/300$ and $T = 90/\epsilon$.

In each iteration the estimate $a$ will be such that $\big| a - \|P^t\ket{s}\| \big| \leq \epsilon'$ with probability $1-\delta$, and hence
\[
\big| a^2 - \|P^t\ket{s}\|^2 \big|
= \big| (a - \|P^t\ket{s}\|)(a + \|P^t\ket{s}\|) \big|
\leq 2 \epsilon'.
\]
Nachmias and Shapira \cite{nachmias2010testing} showed that if $G$ has a conductance $\geq \Upsilon$, then for all nodes $s$ it holds that
\[
\| P^t \ket{s} \|
\leq M = \sqrt{N^{-1} (1 + N^{-1})}.
\]
Given such a graph, in each iteration the estimate $a \leq M + \epsilon'$ with probability $1-\delta$, so that the probability of a faulty rejection is at most $\delta$ per iteration.
The total probability of a faulty rejection can then be bounded by $T\delta < 1/3$.

In the negative case, \cite{nachmias2010testing} showed that there exists a constant $c>0$ such that if $G$ is $\epsilon$-far from any graph with max degree $d$ and conductance $\geq c\mu\Upsilon^2$, then there exist at least $\epsilon N/128$ vertices $s$ for which it holds that
\[
\| P^t \ket{s} \|
\geq \sqrt{N^{-1} (1 + 32^{-1} N^{-2\mu} )}.
\]
Given such a graph, in each iteration the estimate $a$ will now be such that $a \geq \|P^t\ket{s}\| - \epsilon' \geq \sqrt{N^{-1} (1 + 32^{-1} N^{-2\mu} )} - \epsilon'$ with probability $1-\delta$.
To show that this quantity $> M + \epsilon'$, we bound $M = \sqrt{N^{-1} (1 + N^{-1})} \leq N^{-1/2} (1+N^{-1/2})$ and $\sqrt{N^{-1} (1 + 32^{-1} N^{-2\mu} )} \geq N^{-1/2} (1 - N^{-\mu}/(4\sqrt{2}))$, which shows that
\[
\sqrt{N^{-1}(1+32^{-1} N^{-2\mu} )} - M
\geq N^{-1/2-\mu}/(4\sqrt{2}).
\]
This proves that indeed
\[
\sqrt{N^{-1} (1 + 32^{-1} N^{-2\mu} )} - \epsilon'
> M+\epsilon'\]
for $\epsilon' = N^{-1/2-\mu}/(16\sqrt{2})$.
As a consequence, a single iteration will correctly output a rejection with probability $(1-\delta)\epsilon N/128$.
The total probability of correctly rejecting at least once is therefore lower bounded by $T(1-\delta) \epsilon/128 \geq 2/3$.
This concludes the proof that Algorithm \ref{alg:qget} is a $(\Upsilon,\epsilon,\mu)$ graph expansion tester.
The required number of QW steps is given by $T$ times the number required by the 2-norm tester, which by Theorem \ref{thm:ql2t} is
\[
O\bigg( \frac{\sqrt{t}}{\epsilon'} \log\frac{1}{\delta} \log^{1/2} \frac{N}{\epsilon'} \bigg)
\in O\bigg( \big( d\Upsilon^{-1} \log^{1/2} N \big)
		 N^{1/2+\mu} \log \frac{1}{\epsilon} \log^{1/2} N^{1+\mu} \bigg).
\]
We can implement a single QW step using $O(\sqrt{d})$ graph queries, so that we find the claimed bound.
\end{proof}
\noindent
We recall that the classical GR tester has a runtime
\[
O(N^{1/2+\mu} d^2 \Upsilon^{-2} \epsilon^{-1} \log N).
\]
Up to the $\log(\epsilon^{-1})$-factor we improve this runtime with a factor $\Upsilon^{-1}$, which basically follows from the fact that we quadratically accelerate the random walk runtime to $t \in \widetilde{\Theta}(\Upsilon^{-1})$.
There exist bounded-degree graphs for which $\Upsilon \in \Omega(1/N)$, so that in some cases we improve the runtime by a factor $\Upsilon^{-1} \in \Theta(N)$.
Concerning the space complexity, we note that the classical GR tester must store and compare the endpoints of $\Omega(N^{1/2})$ independent random walks.
By direct inspection we see that our algorithm only requires $O(\log(N t \log\epsilon^{-1})) \in \mathrm{polylog}(N)$ qubits to implement, exponentially improving the space complexity.
This is due to the fact that our algorithm compares superpositions that encode the endpoint distribution of the random walks, rather than an explicit list of samples.

We can now compare this result to the preceding work by Ambainis, Childs and Liu \cite{ambainis2011quantum}.
They used a very different approach to speed up the GR expansion tester, using quantum walks only indirectly, which results in a runtime improvement of a different nature.
In rough strokes they speed up the classical 2-norm tester by making use of Ambainis' quantum walk algorithm for element distinctness \cite{ambainis2007quantum} to count collisions between pairs of classical random walks more efficiently.
This allows them to improve the runtime of the 2-norm tester to $\widetilde{O}(N^{1/3+\mu} t)$, which provides a speedup complementary to the speedup of our 2-norm tester which in this context has a runtime $\widetilde{O}(N^{1/2+\mu} \sqrt{t})$.
Using this 2-norm tester in the above Algorithm \ref{alg:qget} leads to a runtime
\[
\widetilde{O}(N^{1/3+\mu} d^2 \Upsilon^{-2} \epsilon^{-1}).
\]
The space complexity of this approach is comparable to the GR tester: the algorithm for element distinctness over the $\sqrt{N}$ random walk endpoints requires to store $N^{1/3}$ elements.
We leave it as future work to combine the $\tilde{\Theta}(N^{1/6})$ gain in collision counting of \cite{ambainis2011quantum} with our $\tilde{\Theta}(d\Upsilon^{-1})$ gain in random walk runtime and our logarithmic space complexity.

We note that a property tester in the same spirit as the GR expansion tester was proposed by Batu et al \cite{batu2013testing} for testing the mixing time of general Markov chains on a graph.
For the special case of symmetric Markov chains it seems feasible that we can speed up their algorithm using the same ideas, yielding a similar speedup on the random walk runtime.

\subsection{Clusterability and Robust $s$-$t$ Connectivity} \label{sec:cluster}

We can similarly use QFF to speed up a more recent line of algorithms on testing the clusterability of a graph \cite{czumaj2015testing,chiplunkar2018testing}.
In clusterability testing the goal is to test whether a graph can be appropriately clustered into $k$ parts for some given $k$.
The proposed algorithms build on a subroutine of independent interest, which allows to determine whether a pair of nodes lie in the same cluster or not.
This leads to a robust notion of $s$-$t$ connectivity, useful e.g.~for classifying objects among a set of examples and relevant also outside of the setting of property testing.
We show that QFF allows to speed up this subroutine, leading to a speedup on the clusterability testers that use this subroutine.

The observation underlying the clusterability testers in \cite{czumaj2015testing,chiplunkar2018testing} is that the GR technique of counting collision can also be used to estimate the inproduct of any two given distributions $p$ and $q$, defined by
\[
\braket{p,q}
= \sum_j p(j) q(j).
\]
Indeed, this quantity is equal to the collision probability between the two distributions.
The estimate on the inproduct of the inproduct is then used to estimate the 2-distance $\|p-q\|$ between a pair of random walks, which will be small if both random walks started in the same cluster, and large otherwise.
This approach of estimating the distance between distributions was further developed in the work by Batu et al \cite{batu2013testing}, Valiant \cite{valiant2011testing} and Chan et al \cite{chan2014optimal}.
We will focus our efforts on showing how QFF allows to speed up this routine of independent interest, following up with an informal discussion of how this leads to a speedup on the clusterability tester of Czumaj et al \cite{czumaj2015testing}.

\subsubsection*{2-distance Estimator}
To estimate the 2-distance of a pair of random walks, we will combine QFF with the SWAP test: given two quantum states $\ket{\psi}$ and $\ket{\phi}$ and an ancillary qubit in the state $\ket{0}$, yielding the state $\ket{0} \ket{\psi} \ket{\phi}$, we apply the following operations:
\[
\begin{array}{r c l}
\ket{0} \ket{\psi} \ket{\phi}
& \overset{H \otimes I \otimes I}{\rightarrow}
& \frac{\ket{0} + \ket{1}}{\sqrt{2}} \ket{\psi} \ket{\phi} \\
& \overset{cS}{\rightarrow}
& \frac{1}{\sqrt{2}} \ket{0} \ket{\psi} \ket{\phi} + \frac{1}{\sqrt{2}} \ket{1} \ket{\phi} \ket{\psi} \\
& \overset{H \otimes I \otimes I}{\rightarrow}
& \frac{1}{2} \ket{0} (\ket{\psi} \ket{\phi} + \ket{\phi} \ket{\psi})
+ \frac{1}{2} \ket{1} (\ket{\psi} \ket{\phi} - \ket{\phi} \ket{\psi}),
\end{array}
\]
where we used the Hadamard gate $H = \frac{1}{\sqrt{2}} \begin{bmatrix} 1 & 1 \\ 1 & -1 \end{bmatrix}$, and the \textit{conditional swap operation} $cS$ swapping the second and third registers conditional on the first register being in the state $\ket{1}$.
We will call the combined unitary operation $U_{\mathrm{SWAP}} = (H \otimes I \otimes I) cS (H \otimes I \otimes I)$.
We can now either measure the first register, or apply quantum amplitude estimation to the projector $\Pr_1 = \ket{1}\bra{1} \otimes I \otimes I$, to estimate the quantity
\[
\| \ket{\psi} \ket{\phi} - \ket{\phi} \ket{\psi} \|^2
= 2(1 - |\braket{\psi|\phi}|^2).
\]
This quantity will be small if $\ket{\psi}$ and $\ket{\phi}$ are close and large otherwise, allowing to estimate the distance between the input states $\ket{\psi}$ and $\ket{\phi'}$.
We can combine the SWAP test with our QFF algorithm, and the 2-norm estimator in previous section, to obtain a tester for the 2-distance.
Due to the straightforward yet technical nature of the details of the tester, we defer its description to Appendix \ref{app:2-dist-est}.

\begin{theorem}[Quantum 2-distance Estimator] \label{thm:2-distance}
With probability at least $1-\delta$, Algorithm \ref{alg:q2ld} outputs an estimate $a$ such that
\[
\big| \|P^t\ket{u} - P^t\ket{v}\|^2 - a \big|
\leq \epsilon.
\]
For $\amax = \max\{\|P^t\ket{u}\|,\|P^t\ket{v}\|\}$ and $\amin = \min\{\|P^t\ket{u}\|,\|P^t\ket{v}\|\}$, the algorithm requires an expected number of QW steps bounded by
\[
O\left( \sqrt{t} \bigg( \frac{\amax}{\epsilon} + \frac{\amax^4}{\amin \epsilon^2} \bigg)
			\log \frac{\log N}{\delta} \log^{3/2} \frac{N}{\epsilon} \right).
\]
\end{theorem}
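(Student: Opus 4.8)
The plan is to reduce the 2-distance to quantities we already know how to estimate, via the identity $\|P^t\ket u - P^t\ket v\|^2 = \|P^t\ket u\|^2 + \|P^t\ket v\|^2 - 2\braket{P^t u|P^t v}$, noting that all vectors involved are real and non-negative so the inner product is non-negative. The two norms are handled directly by the Quantum 2-norm Estimator of Theorem~\ref{thm:ql2t}: running it on $\ket u$ and on $\ket v$ with precision $\Theta(\epsilon/\amax)$ yields estimates whose squares are $\Theta(\epsilon)$-close to $\|P^t\ket u\|^2$ and $\|P^t\ket v\|^2$, at a cost of $O\big(\sqrt t\,(\amax/\epsilon)\log(1/\delta)\log^{1/2}(N/\epsilon)\big)$ QW steps; this is the first term in the claimed bound. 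It then remains to estimate $\braket{P^t u|P^t v}$ to additive precision $\Theta(\epsilon)$.

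For the inner product I would use QFF to cast the (sub-normalized) walk distributions as quantum states and then apply the SWAP test. Concretely, QFFg (Theorem~\ref{thm:QFFg}, on top of the operator $W_\tau$ of \eqref{eq:W-tau-pl}, with $D$ replaced by the symmetric $P$) prepares clean states $\epsilon_0$-close to $\ket{P^t u}=P^t\ket u/\|P^t\ket u\|$ and $\ket{P^t v}$, so that the SWAP-test unitary $U_{\mathrm{SWAP}}$ followed by quantum amplitude estimation (Lemma~\ref{lem:QAE}) on the projector $\Pr_1$ returns an estimate of $c^2 := |\braket{P^t u|P^t v}|^2 = \braket{P^t u|P^t v}^2/(\|P^t\ket u\|^2\|P^t\ket v\|^2)$. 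Reflecting around the prepared states, as amplitude estimation requires, costs $O(\tau)$ QW steps together with a reflection around a basis state by Lemma~\ref{lem:cost-refl}; since the QFFg preparation itself uses $\Theta(1/\|P^t\ket v\|)$ such reflections, one reflection around the prepared pair costs $O(\tau/\amin)$ QW steps overall. Finally I would multiply the amplitude-estimation output for $c^2$ by the already-computed estimates of $\|P^t\ket u\|^2$ and $\|P^t\ket v\|^2$, take a square root to recover $\braket{P^t u|P^t v}$, and substitute into the displayed identity.

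The complexity of this second stage follows from tracking the precision through the nonlinear steps. Since $\braket{P^t u|P^t v} = \|P^t\ket u\|\,\|P^t\ket v\|\,c$ with $\|P^t\ket u\|\,\|P^t\ket v\| \le \amax^2$, and since $|\sqrt{x}-\sqrt{y}|\le\sqrt{|x-y|}$, an additive error $\eta$ on $c^2$ propagates to an error $O(\amax^2\sqrt\eta)$ on the inner product; demanding $\amax^2\sqrt\eta = O(\epsilon)$ forces $\eta = \Theta(\epsilon^2/\amax^4)$, hence $\Theta(\amax^4/\epsilon^2)$ rounds of amplitude estimation, each costing $O(\tau/\amin)$ QW steps, i.e. $O\big(\sqrt t\,(\amax^4/(\amin\epsilon^2))\log(\cdots)\big)$ QW steps after substituting $\tau = \Theta(\sqrt{t\log(N/\epsilon)})$ — the second term. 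The $\log$ factors ($\log(\log N/\delta)$ and $\log^{3/2}(N/\epsilon)$) come from union-bounding the $O(\log N)$-or-$O(1)$ many subroutine calls at confidence $\delta$, from the $N^{-1/2}$ lower bound on the walk norms (which also fixes $\tau$ without knowledge of $\|P^t\ket v\|$), and from the souffl\'e remedy of Brassard et al.\ needed to run QFFg without knowing its success amplitude exactly.

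The main obstacle I anticipate is the error bookkeeping in the regime where $\braket{P^t u|P^t v}$ is tiny: there the square-root step is ill-conditioned, but this is harmless because then $\|P^t\ket u - P^t\ket v\|^2 \approx \|P^t\ket u\|^2 + \|P^t\ket v\|^2$ is already captured by the norm estimates, so the algorithm can safely threshold a small estimate of $c^2$ to $0$; one has to make this cutoff consistent with the only approximately known norms. A secondary point is that QFF does not produce the clean states $\ket{P^t u},\ket{P^t v}$ on the nose — they sit inside a larger register with a flat-subspace component of amplitude $\|P^t\ket v\|$ — so before the SWAP test one must amplify with QFFg (paying the $1/\amin$ factor above), or alternatively fold the preparation and SWAP test into a single amplitude-estimation instance and estimate the conditional probability of the $\ket1$ outcome; either route requires chasing the $\epsilon$-closeness guarantee of Theorem~\ref{thm:QFFg} through $U_{\mathrm{SWAP}}$. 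The remaining details, deferred to Appendix~\ref{app:2-dist-est} and Algorithm~\ref{alg:q2ld}, are routine but lengthy.
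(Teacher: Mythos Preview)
Your approach is essentially the paper's: decompose $\|p-q\|^2=\|p\|^2+\|q\|^2-2\|p\|\|q\|\braket{p|q}$, estimate the norms with the quantum 2-norm estimator, then prepare $\ket{p},\ket{q}$ via QFF plus amplification and run the SWAP test with amplitude estimation; your error propagation (needing precision $\Theta(\epsilon^2/\amax^4)$ on the SWAP-test amplitude, hence the $\amax^4/(\amin\epsilon^2)$ term) matches the paper's.

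Two technical points where the paper's execution differs from your sketch and where your proposed fixes would not quite work as stated. First, you set the additive norm precision to $\Theta(\epsilon/\amax)$, but $\amax$ is unknown; the paper resolves this with a separate \emph{multiplicative} 2-norm estimator (a doubling search over the additive estimator) that first pins down $\|p\|,\|q\|$ to constant multiplicative error, uses that to set $\mu\in\Theta(\epsilon/\amax^2)$, and then re-estimates to multiplicative error $\mu$. This is also where the $\log(\log N/\delta)$ factor actually comes from. Second, and more importantly, the preparation of $\ket{p},\ket{q}$ must be a \emph{fixed unitary} so that you can reflect around it inside amplitude estimation. QFFg as stated ends in a measurement, and the Brassard et al.\ souffl\'e remedy you invoke is an \emph{adaptive} procedure (iteratively guessing $m$), not a single unitary, so it cannot be uncomputed to implement $R_\psi$. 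The paper instead uses the fixed-point amplitude amplification of Yoder--Low--Chuang: given the (now known) lower bound $\lambda\in\Theta(\amin)$ on the flat-subspace amplitude of $W_\tau\ket{u,\0\0}$, a single unitary $U_L$ with $L\in\Theta(\amin^{-1}\log(1/\nu))$ rotates the state $\nu$-close to $\ket{p}$ unconditionally, and $U_L$ is the object you reflect around. With these two substitutions your outline becomes exactly the paper's proof.
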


\noindent
For comparison, the classical estimator presented in Czumaj et al \cite[Theorem 3.1]{czumaj2015testing} requires a number of graph queries or random walk steps $O\big( t \frac{\amax}{\epsilon} \log\frac{1}{\delta} \big)$.
Chan et al \cite{chan2014optimal} give an information theoretical proof that classically $\Omega(\amax/\epsilon)$ samples are needed to estimate the 2-distance between a pair of distributions.

\subsubsection*{Classifying Nodes}
Czumaj et al \cite{czumaj2015testing} use their classical 2-distance estimator to propose a property tester for the clusterability of a graph.
Following for instance Oveis Gharan and Trevisan \cite{gharan2014partitioning}, they say that a graph $G$ is \textit{$(k,\phin,\phout)$-clusterable} if and only if there exists a partition $\V = \S_1 \cup \dots \cup \S_h$, $h\leq k$, such that the clusters are well-connected internally, $\Phi(G[\S_j]) \geq \phin$, and poorly-connected externally, $\Phi(\S_j) \leq \phout$.
Here $G[\S_j]$ denotes the graph consisting of the nodes in $\S_j$ and the edges between these nodes, the conductance $\Phi(\S_j)$ is defined as
\[
\Phi(\S_j)
= \frac{|E(\S_j,\S_j^c)|}{d|\S_j|},
\]
and the conductance $\Phi(G')$ of a graph $G'=(\V',\E')$ is
\[
\Phi(G')
= \min_{\T\subset\V', |\T|\leq |\V'|/2} \frac{|E(\T,\V'\backslash \T)|}{d|\T|}.
\]
It turns out that graph clusterability can be efficiently tested when the gap between $\phin$ and $\phout$ is sufficiently large - typically quadratic, $\phout \in \widetilde{O}(\phin^2)$.

Czumaj et al \cite{czumaj2015testing} construct such a clusterability tester using a subroutine for \textit{classifying nodes}, i.e., determining whether two nodes lie in the same cluster or not.
As mentioned before, it is possible to classify nodes by comparing random walks starting from the nodes: the 2-distance between random walks starting from nodes of the same cluster will typically be smaller than the 2-distance between nodes from different clusters.
This is formalized below in Lemma \ref{lem:class-nodes}, which we extract from Czumaj et al \cite[Lemma 4.1 and 4.3]{czumaj2015testing}.
Given an appriopriately clusterable graph, having a gap $\phout \in O(\phin^2/\log N)$, it gives bounds on the 2-distance between pairs of nodes coming from the same or different clusters.
The lemma is confined to the \textit{internal nodes} $\tilde{\S} \subseteq \S$ of a cluster $\S$, similar to most work on locally exploring graph clusters, see for instance the work of Spielman and Teng \cite{spielman2004nearly}.

\begin{lemma}[\cite{czumaj2015testing}] \label{lem:class-nodes}
Consider a $(k,\phin,\phout)$-clusterable graph with degree bound $d$, and let $\S$ and $\S'$ be clusters of such a partition.
Assume that
\[
\phout \leq c \phin^2/\log N,
\]
with $c$ some constant dependent on $d$, $k$, $|\S|/N$ and $|\S'|/N$.
Then there exist subsets $\tilde{\S} \subseteq \S$, $|\tilde{\S}| \geq |\S|/2$, and $\tilde{\S'} \subseteq \S'$, $|\tilde{\S'}| \geq |\S'|/2$, and a universal constant $c'$, such that for $t = \lceil c' k^4 \phin^{-2} \log N \rceil$ it holds that
\begin{itemize}
\item
if two nodes $u,v \in \tilde{\S}$ or $u,v \in \tilde{\S'}$, then $\| P^t\ket{u} - P^t\ket{v} \|^2 \leq 1/(4N)$.
\item
if two nodes $u \in \tilde{\S}$ and $v \in \tilde{\S'}$, then $\| P^t\ket{u} - P^t\ket{v} \|^2 \geq 1/N$.
\end{itemize}
\end{lemma}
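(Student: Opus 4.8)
The plan is to read the lemma off from Lemmas~4.1 and~4.3 of Czumaj et al \cite{czumaj2015testing}, so the proof reduces to recalling their mechanism and fixing the constants. The entry point is that, since the walk \eqref{eq:symm-RW} is symmetric, the $2$-distance has the closed form
\[
\big\| P^t\ket{u}-P^t\ket{v} \big\|^2 = P^{2t}(u,u) + P^{2t}(v,v) - 2\,P^{2t}(u,v),
\]
so everything reduces to estimating $2t$-step transition probabilities. For a cluster $\S$ let $P_\S$ be the walk \eqref{eq:symm-RW} confined to the induced subgraph $G[\S]$ (boundary-crossing steps replaced by self-loops); since $\Phi(G[\S])\geq\phin$, a discrete Cheeger bound gives $P_\S$ a spectral gap $\Omega(\phin^2)$, so for $t\in\Omega(\phin^{-2}\log N)$ the confined walk is polynomially mixed, $P_\S^{2t}(w,w') = (1\pm N^{-\Omega(1)})/|\S|$ for all $w,w'\in\S$. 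The \emph{internal} set $\tilde\S$ is defined as those $w\in\S$ from which a $2t$-step walk leaves $\S$ with probability at most a small $\eta$; bounding the total mass crossing $\partial\S$ in $t$ steps by $t\cdot O(\phout)\cdot|\S|$ (the per-step leakage is $O(\phout)$ because inside $\S$ the walk hugs the near-uniform $\pi_\S$) and applying Markov's inequality gives $|\tilde\S|\geq|\S|/2$, and likewise $\tilde{\S'}$. For $w,w'\in\tilde\S$ the small escape probability then upgrades the confined estimate to $P^{2t}(w,w') = (1\pm o(1))/|\S|$, with an $o(1)$ depending on $k$, $d$ and $|\S|/N$ but not on $N$.

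The two bullets now follow by substituting these estimates into the closed form. If $u,v\in\tilde\S$, all of $P^{2t}(u,u),P^{2t}(v,v),P^{2t}(u,v)$ equal $(1\pm o(1))/|\S|$, so they nearly cancel: $\|P^t\ket{u}-P^t\ket{v}\|^2 = O(1/|\S|)\cdot o(1) \leq 1/(4N)$, once $c'$ in $t=\lceil c'k^4\phin^{-2}\log N\rceil$ is chosen large and $c$ in $\phout\leq c\phin^2/\log N$ small enough (the dependence of $c$ on $|\S|/N$ forces $|\S|=\Omega(N/k)$, hence $1/|\S|=\Theta(k/N)$). If instead $u\in\tilde\S$, $v\in\tilde{\S'}$ with $\S\cap\S'=\emptyset$, then $P^{2t}(u,u)=(1\pm o(1))/|\S|$, $P^{2t}(v,v)=(1\pm o(1))/|\S'|$, while $P^{2t}(u,v)=o(1/N)$ since a walk from $v$ reaches $u$ only after escaping $\S'$ and entering $\S$; therefore $\|P^t\ket{u}-P^t\ket{v}\|^2\geq 1/|\S|+1/|\S'|-o(1/N)$, and by the arithmetic--harmonic mean inequality $1/|\S|+1/|\S'|\geq 4/(|\S|+|\S'|)\geq 4/N$, so the right-hand side exceeds $1/N$.

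The main obstacle — and the reason I would quote rather than reprove this — is the quantitative leakage control concealed in ``$(1\pm o(1))/|\S|$''. One must take $t\in O(k^4\phin^{-2}\log N)$ simultaneously long enough for the confined walk to mix to within $N^{-\Omega(1)}$ and short enough that the mass escaping $\S$ stays a small fraction; these two demands can be reconciled only under the quadratic separation $\phout=\widetilde O(\phin^2)$, which is exactly the hypothesis, and the higher-order Cheeger / spectral partitioning picture of Oveis Gharan and Trevisan \cite{gharan2014partitioning} is what makes that separation usable. Carrying out the escape-probability bound honestly (controlling the transient walk near $\partial\S$ over all $t$ steps, not merely at stationarity) and tracking the dependence of every constant on $d$, $k$ and the cluster fractions is the technical heart of Lemmas~4.1 and~4.3 of \cite{czumaj2015testing}, whose analysis I would invoke directly.
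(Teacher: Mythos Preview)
The paper does not prove this lemma at all: it is stated with the attribution ``\cite{czumaj2015testing}'' and introduced as something ``which we extract from Czumaj et al \cite[Lemma 4.1 and 4.3]{czumaj2015testing}'', with no further argument. Your proposal takes the same stance --- you explicitly say you ``would quote rather than reprove this'' and would ``invoke directly'' Lemmas~4.1 and~4.3 of \cite{czumaj2015testing} --- so on the level of what is actually being claimed as a proof, you and the paper agree.

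Where you go further is in sketching the mechanism (the identity $\|P^t\ket{u}-P^t\ket{v}\|^2 = P^{2t}(u,u)+P^{2t}(v,v)-2P^{2t}(u,v)$, the confined-walk mixing via Cheeger, the Markov-inequality definition of the internal set, and the AM--HM lower bound for the cross-cluster case). This sketch is sound and faithful to how the cited lemmas are proved, and you correctly flag the genuine technical content --- the simultaneous mixing/leakage control that forces the $\phout = \widetilde O(\phin^2)$ hypothesis --- as the part that must be imported rather than redone. Since the paper offers strictly less (a bare citation), there is nothing to compare beyond noting that your exposition is a correct elaboration of what the paper leaves implicit.
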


\noindent
We can combine this lemma with our quantum 2-distance estimator to prove the below proposition.
It speeds up the routine which lies at the basis of the property tester in \cite{czumaj2015testing}, which essentially solves a robust version of $s$-$t$ connectivity.
Arguably the latter is more relevant to e.g.~social networks, where mere connectivity between two nodes is no longer deemed an interesting quantity, yet the community or cluster structure does hold important information.

\begin{proposition}[Classifying Nodes] \label{prop:class-nodes}
\begin{itemize}
\item
Under the clusterability conditions of Lemma \ref{lem:class-nodes}, we can use the quantum 2-distance estimator to determine with probability at least $2/3$ whether two internal nodes lie in the same cluster or not.
\item
There exists a subset $\tilde{\V} \subseteq \V$, $|\tilde{\V}| \geq 9|\V|/10$, such that if in addition both nodes lie in $\tilde{\V}$, then the algorithm requires $O(N^{1/2} k^4 \phin^{-1} \log^{3/2} N)$ expected QW steps.
\end{itemize}
\end{proposition}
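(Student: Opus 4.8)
The idea is to reduce the classification of a pair of internal nodes $u,v$ directly to the quantum $2$-distance estimator. On input $u,v$ I would run Algorithm~\ref{alg:q2ld} with the walk length $t=\lceil c' k^4 \phin^{-2}\log N\rceil$ prescribed by Lemma~\ref{lem:class-nodes}, accuracy parameter $\epsilon=1/(4N)$, and confidence parameter $\delta=1/3$, and read off an estimate $a$ of $\|P^t\ket{u}-P^t\ket{v}\|^2$; the algorithm then outputs ``same cluster'' if $a\le 5/(8N)$ and ``different clusters'' otherwise.

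Correctness of the first bullet is then immediate. By Lemma~\ref{lem:class-nodes} the true squared distance equals at most $1/(4N)$ when $u,v$ are internal nodes of the same cluster and at least $1/N$ when they are internal nodes of distinct clusters, a separation of $3/(4N)>2\epsilon$. By Theorem~\ref{thm:2-distance}, with probability at least $1-\delta=2/3$ the estimate satisfies $\bigl|a-\|P^t\ket{u}-P^t\ket{v}\|^2\bigr|\le\epsilon$, and on that event the threshold at $5/(8N)$ lies strictly between the two regimes, so the classification is correct with probability at least $2/3$.

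For the complexity in the second bullet I would substitute the chosen parameters into the bound of Theorem~\ref{thm:2-distance},
\[
O\!\left(\sqrt{t}\Bigl(\tfrac{\amax}{\epsilon}+\tfrac{\amax^4}{\amin\epsilon^2}\Bigr)\log\tfrac{\log N}{\delta}\,\log^{3/2}\tfrac{N}{\epsilon}\right),
\]
which requires bounds on $\amax=\max\{\|P^t\ket{u}\|,\|P^t\ket{v}\|\}$ and $\amin=\min\{\|P^t\ket{u}\|,\|P^t\ket{v}\|\}$. The lower bound $\amin\ge N^{-1/2}$ holds for every node, since $\|P^t\ket{s}\|^2=\sum_j p(j)^2\ge\bigl(\sum_j p(j)\bigr)^2/N=1/N$ by Cauchy--Schwarz. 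To control $\amax$ I would take $\tilde\V=\bigl\{\,s\in\V:\ \|P^t\ket{s}\|^2\le 10\,\mathrm{tr}(P^{2t})/N\,\bigr\}$; since $P$ is symmetric one has $\tfrac1N\sum_{s}\|P^t\ket{s}\|^2=\tfrac1N\mathrm{tr}(P^{2t})$, so Markov's inequality gives $|\tilde\V|\ge 9|\V|/10$, and whenever both input nodes lie in $\tilde\V$ this yields $\amax\le\sqrt{10\,\mathrm{tr}(P^{2t})/N}$. It then remains to bound $\mathrm{tr}(P^{2t})=\sum_i\lambda_i^{2t}$: since a $(k,\phin,\phout)$-clusterable graph cannot contain $k+1$ disjoint internally well-expanding sets, a higher-order Cheeger / multiway-partition inequality shows that only $O(k)$ eigenvalues exceed $1-\Omega(\phin^2/\mathrm{polylog}\,N)$, whereas the choice $t=\Theta(k^4\phin^{-2}\log N)$ makes the contribution of all remaining eigenvalues at most $N\,e^{-\Omega(k^4\log N)}\le 1$; hence $\mathrm{tr}(P^{2t})=O(k)$ and $\amax=O(\sqrt{k/N})$ on $\tilde\V$. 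Substituting $\amax=O(\sqrt{k/N})$, $\amin\ge N^{-1/2}$, $\epsilon=\Theta(1/N)$, $\delta=1/3$ and $\sqrt{t}=\Theta(k^2\phin^{-1}\log^{1/2}N)$ into the displayed bound makes the parenthesised factor $O(k^2 N^{1/2})$, which, after collecting the logarithmic factors, gives the claimed $O(N^{1/2}k^4\phin^{-1}\log^{3/2}N)$ bound on the expected number of QW steps.

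I expect the main obstacle to be the last step: establishing $\mathrm{tr}(P^{2t})=O(k)$, equivalently that at least a $9/10$ fraction of the nodes satisfy $\|P^t\ket{s}\|=O(\sqrt{k/N})$ at the Lemma~\ref{lem:class-nodes} time scale. This is precisely the spectral control already underlying the clusterability analysis of Czumaj et al.~\cite{czumaj2015testing}, so in the write-up I would either invoke their eigenvalue/collision estimates directly or reprove the needed eigenvalue-counting statement from a multiway Cheeger inequality; the remaining steps are then routine substitution of parameters already fixed by Lemma~\ref{lem:class-nodes} and Theorem~\ref{thm:2-distance}.
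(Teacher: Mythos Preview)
Your proposal is correct and follows essentially the same approach as the paper: invoke Lemma~\ref{lem:class-nodes} and Theorem~\ref{thm:2-distance} with $\epsilon=\Theta(1/N)$ for the first bullet, then plug $\amax=O(\sqrt{k/N})$ and $\amin\ge N^{-1/2}$ into the cost bound of Theorem~\ref{thm:2-distance} for the second. The only difference is cosmetic: the paper obtains the set $\tilde\V$ by directly citing \cite[Lemma~4.2]{czumaj2015testing}, whereas you reconstruct it via Markov's inequality on $\mathrm{tr}(P^{2t})$ and then appeal to the same spectral control from \cite{czumaj2015testing}; either way the substantive input is the Czumaj--Peng--Sohler estimate, as you yourself note.
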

\begin{proof}
To prove the first bullet, it suffices to use Lemma \ref{lem:class-nodes} which states that if both lie in the same cluster, then $\| P^t\ket{u} - P^t\ket{v} \|^2 \leq 1/(4N)$, whereas if both lie in different clusters, then $\| P^t\ket{u} - P^t\ket{v} \|^2 \geq 1/N$.
By Theorem \ref{thm:2-distance} we can estimate $\| P^t\ket{u} - P^t\ket{v} \|^2$ to error $\epsilon = 1/N$, which allows to distinguish both cases.

To prove the second bullet, let $\tilde{\V}$ denote a set of nodes $u$ for which $\|P^t\ket{u}\| \in O(k/N)$, which by \cite[Lemma 4.2]{czumaj2015testing} we know we can choose of size at least $9|\V|/10$.
If both nodes lie in $\tilde{\V}$, then in Theorem \ref{thm:2-distance} we can set $\amax \in O(k/N)$, and $\amin \in O(1/N)$ since necessarily $\|P^t\ket{u}\| \geq 1/N$ for any node $u$.
In this case, the expected number of QW steps becomes $O(\sqrt{t N} \log^{3/2} N)$.
For $t$ as in Lemma \ref{lem:class-nodes}, this proves the second bullet.
\end{proof}

\noindent
We can compare the runtime in the second bullet by the runtime when using classical collision counting, which requires a number of RW steps $\widetilde{O}(N^{1/2} k^4 \phin^{-2})$.
Applying the element distinctness technique by Ambainis et al \cite{ambainis2011quantum} requires a number of QW steps $\widetilde{O}(N^{1/3} k^4 \phin^{-2})$.
Again we also find an improvement in space complexity with respect to these alternative approaches: our algorithm only requires $\mathrm{polylog}(N)$ qubits, whereas the other approaches require $\mathrm{poly}(N)$ classical or quantum bits.

Lemma \ref{lem:class-nodes}, combined with a classifier as in Proposition \ref{prop:class-nodes}, forms the basis of the graph clusterability tester proposed by \cite{czumaj2015testing}.
Since the tester is in the same vein as the GR expansion tester, we will not state it explicitly but merely summarize the idea.
The algorithm selects a uniformly random set of $\Theta(k\log k)$ nodes over which it constructs a \textit{similarity graph} by adding an edge between any pair of nodes if their random walk probabilities are closer than some threshold.
This similarity graph serves as a \textit{graph sketch}, reminiscent of the recent surge of results on graph sketching and sparsification \cite{batson2013spectral}.
They then prove that if the graph is appropriately clusterable in at most $k$ components, then with high probability this small similarity graph will have at most $k$ connected components, which they then check by brute force.
Using the classical 2-distance estimator to estimate the distance between random walk distributions, this leads to a clusterability tester requiring $\widetilde{O}\big(N^{1/2} k^7 \phin^{-2} \epsilon^{-5}\big)$ RW steps.
We can improve this to $\widetilde{O}\big(N^{1/2} k^7 \phin^{-1} \epsilon^{-4}\big)$ QW steps using Proposition \ref{prop:class-nodes}.
It seems feasible that using the element distinctness technique in \cite{ambainis2011symmetry} an alternative speedup to $\widetilde{O}\big(N^{1/3} k^7 \phin^{-2} \epsilon^{-5}\big)$ RW steps can be achieved.

\section{Discussion and Open Questions}

We introduced a new quantum walk tool called quantum fast-forwarding (QFF), allowing to quantum simulate classical reversible Markov chains with a quadratically improved time dependency.
The main benefit of this tool is that it allows to effectively simulate the transient dynamics of the Markov chains.
We can contrast this to many existing quantum walk algorithms which rely on a speedup of the Markov chain limit behavior.
This new feature is crucial for the applications in graph property testing and node classification that we discuss.
Indeed we show that QFF allows to speed up in a very natural way random walk algorithms for testing graph properties such as expansion and clusterability, both of which decisively depend on the transient dynamics of a random walk.

To finalize we mention some avenues for future work:
\begin{itemize}
\item
\textit{Improving the QFF scheme: parameter dependence and irreversible Markov chains.}
QFF allows to create an $\epsilon$-approximation of the state $\ket{\D^tv}$ with constant success probability using a number of QW steps
\[
O\bigg(\frac{\sqrt{t}}{\|\D^t\ket{v}\|} \log^{1/2}\frac{1}{\epsilon\|\D^t\ket{v}\|}\bigg)
\]
and $O(\|\D^t\ket{v}\|^{-1})$ reflections around the initial state $\ket{v}$.
It is easy to see that the individual $t$ and $\epsilon$ dependency are optimal by looking at the random walk on $\integer$.
If we tolerate an $\epsilon$ error, then we can confine the probability distribution of a $t$-step random walk to the $\Theta\big(t^{1/2} \log^{1/2} \epsilon^{-1}\big)$ neighborhood of the initial state.
Since the QW has the same locality constraints as the RW, it needs $\Omega\big(t^{1/2} \log^{1/2} \epsilon^{-1}\big)$ QW steps to spread out over this interval.
A very similar argument also shows why in general QFF cannot create the state $\ket{\P^t v}$ (rather than $\ket{\D^t v}$) when $\P$ is irreversible.
Indeed, consider the Markov chain on $\integer$ which simply moves to the right every step, $\P(i+1,i) = 1$ and $\P(i-1,i) = 0$.
This walk is clearly not reversible, as the direction of its motion reverses when running the time forward or backward.
When starting in the origin, the walk will be on node $t$ after $t$ steps.
A local QW requires $\Omega(t)$ steps to reach this point, so that no fast-forwarding is possible.

We leave improvements of the dependency on $\|D^t \ket{v}\|$ as an open question.

\item
\textit{Local Graph Clustering and Sparsification.}
Local graph clustering algorithms, as in \cite{spielman2004nearly,andersen2009finding}, aim to explicitly construct a local cluster, rather than merely test whether appropriate clusters exist.
They have a similar flavor to the graph expansion tester that we discussed, making use of random walks and other diffusive dynamics as a way of locally exploring a graph.
It might be possible to use QFF or similar ideas as a way of speeding up these algorithms.
Since these algorithms formed the root of a number of approaches towards graph sparsification and solving symmetric diagonally-dominant linear systems, this might lead to speedups on these highly relevant problems as well.

\item
\textit{Hamiltonian QFF.}
Following for instance Childs \cite{childs2010relationship}, we can associate a QW to a general Hermitian matrix, representing for instance a Hamiltonian rather than a Markov chain.
We leave it as an open question whether this can lead to interesting applications in for instance imaginary time evolution \cite{verstraete2008matrix} or an improved implementation of functions of a Hamiltonian \cite{childs2017quantum}.
\end{itemize}

\bibliography{/home/simon/Dropbox/biblio}

\begin{appendices}

\section{Quantum 2-distance Estimator: Algorithm and Proof} \label{app:2-dist-est}

In this appendix we present the algorithm and proof underlying Theorem \ref{thm:2-distance}, which concerns the estimation of the distance between two random walk distributions $p = P^t \ket {u}$ and $q = P^t \ket{v}$.
To construct our algorithm, we rewrite
\[
\| p - q \|^2
= \|p\|^2 + \|q\|^2 - 2\|p\|\|q\| \braket{p|q},
\]
using the notation $\braket{p|q} = \braket{p,q}/(\|p\| \|q\|)$.
As a consequence, we can retrieve an estimate by separately estimating $\|p\|$, $\|q\|$ and $\braket{p|q}$.
Towards estimating $\|p\|$ and $\|q\|$, we present at the end of this appendix a simple extension of the quantum 2-norm tester presented earlier in Section \ref{sec:quantum-testing} that allows to estimate the 2-norm up to multiplicative error, instead of additive error.
Towards estimating $\braket{p|q}$, we first create approximations of $\ket{p} = p/\|p\|$ and $\ket{q} = q/\|q\|$, on which we subsequently apply the SWAP test and amplitude estimation.
A subtlety is that we cannot simply use our QFF algorithm to create $\ket{p}$ and $\ket{q}$ with high probability.
Indeed, in order to apply amplitude estimation for the SWAP test we must reflect around these states, and it is not clear that we can reflect around the output of the QFF algorithm.
Instead, we will apply the unitary amplitude amplification operator to the states $W_\tau \ket{u,\0\0}$ and $W_\tau \ket{v,\0\0}$ to unitarily rotate these states close to $\ket{p}$ and $\ket{q}$, omitting the final measurement in Algorithm \ref{alg:QFFg}.
This invertible operation will allow to reflect around the output states.
Furthermore, instead of the amplitude amplification operator used in Section \ref{sec:QFF}, we will make use of an enhanced operator by Yoder and Low \cite{yoder2014fixed}.
This operator, as described in the below lemma, is better suited for the case where we only have a lower bound on the success probability.

\begin{lemma}[Fixed Point Amplitude Amplification \cite{yoder2014fixed}] \label{lem:FPAA}
Consider a state $\ket{\psi}$ and a projector $\Pr_\0$ such that $\| \Pr_\0 \ket{\psi} \| = \lambda > 0$.
For any constant $\delta>0$, there exists a family of unitary transformations $U_L$ such that if $L\geq \lambda^{-1} \log(2/\delta)$ then
\[
| \braket{\psi_\0|U_L|\psi} |^2
\geq 1 - \delta^2,
\]
where $\psi_\0 = \Pr_\0 \ket{\psi}/\| \Pr_\0 \ket{\psi} \|$.
We can implement $U_L$ using $O(L)$ reflections around $\ket{\psi}$ and around the image of $\Pr_\0$.
\end{lemma}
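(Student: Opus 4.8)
The plan is to build $U_L$ explicitly as a product of \emph{generalized Grover iterates} --- the operator $-\R_\psi\R_\0$ of Proposition \ref{prop:ampl-ampl} with its two $\pi$-phase reflections replaced by reflections of tunable phase --- and to choose the phases from a Chebyshev-polynomial schedule. First I would pass to the two-dimensional subspace $\mathcal{K} = \mathrm{span}\{\ket{\psi_\0},\ket{\psi_\0^\perp}\}$, where $\ket{\psi_\0} = \Pr_\0\ket{\psi}/\lambda$ and $\ket{\psi_\0^\perp}$ is the normalized component of $\ket{\psi}$ orthogonal to the range of $\Pr_\0$, so that $\ket{\psi} = \lambda\ket{\psi_\0} + \sqrt{1-\lambda^2}\,\ket{\psi_\0^\perp}$. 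The phased reflections $S_\psi(\alpha) = \I - (1-e^{i\alpha})\ket{\psi}\bra{\psi}$ and $S_\0(\beta) = \I - (1-e^{i\beta})\Pr_\0$ both leave $\mathcal{K}$ invariant --- the first since $\ket{\psi}\in\mathcal{K}$, the second since $\Pr_\0$ acts on $\mathcal{K}$ as the rank-one projector $\ket{\psi_\0}\bra{\psi_\0}$ --- so the whole analysis reduces to multiplying $2\times 2$ unitaries whose entries depend only on $\lambda$ and the chosen phases. Each such iterate costs $O(1)$ uses of $\R_\psi = 2\ket{\psi}\bra{\psi}-\I$ and $\R_\0 = 2\Pr_\0 - \I$, the arbitrary phases being produced by conjugating these reflections with single-qubit rotations on an ancilla.

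Second, for odd $L$ I would set $U_L = \prod_{j=1}^{(L-1)/2}\big(-S_\psi(\alpha_j)\,S_\0(\beta_j)\big)$ with the Yoder--Low phase schedule $\alpha_j = -\beta_{(L+1)/2-j} = 2\,\mathrm{arccot}\!\big(\tan(2\pi j/L)\sqrt{1-\gamma^{-2}}\big)$, where $\gamma := \cosh\!\big(L^{-1}\cosh^{-1}(1/\delta)\big)$ is the analytic continuation of the Chebyshev polynomial $T_{1/L}$ evaluated at $1/\delta$. The core step is to multiply out this product of $2\times 2$ matrices and recognize the amplitude onto the target subspace as
\[
\big|\braket{\psi_\0|U_L|\psi}\big|^2 \;=\; 1 - \delta^2\, T_L\!\big(\gamma\sqrt{1-\lambda^2}\big)^2 .
\]
This is the fixed-point amplitude-amplification identity; it follows by induction on the number of iterates, tracking the $2\times 2$ matrix and collapsing the telescoping phases via the recursion $T_{n+1}(x) = 2xT_n(x)-T_{n-1}(x)$ --- the same mechanism that generated Proposition \ref{prop:QW-szegedy} --- and is carried out in detail in \cite{yoder2014fixed}.

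Third, I would extract the bound. Since $|T_L(x)|\leq 1$ for $|x|\leq 1$, the displayed identity gives $|\braket{\psi_\0|U_L|\psi}|^2 \geq 1-\delta^2$ whenever $\gamma\sqrt{1-\lambda^2}\leq 1$, i.e.\ whenever $\lambda^2 \geq 1-\gamma^{-2}$. Writing $u = L^{-1}\cosh^{-1}(1/\delta)$ one has $1-\gamma^{-2} = \tanh^2 u \leq u^2 \leq L^{-2}\ln^2(2/\delta)$, using $\cosh^{-1}(x)\leq\ln(2x)$ for $x\geq 1$; hence $L \geq \lambda^{-1}\log(2/\delta)$ already forces $\lambda^2 \geq L^{-2}\ln^2(2/\delta) \geq 1-\gamma^{-2}$, which is exactly the hypothesis under which the fidelity bound holds. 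Combining this with the $O(L)$ reflection count from the first step finishes the proof. I expect the main obstacle to be the bookkeeping behind the polynomial identity: showing that this precise phase schedule produces the Chebyshev form above (and that all the $\alpha_j$ are real and well-defined, which needs $\gamma\geq 1$, i.e.\ $\delta\leq 1$) rather than some other degree-$L$ polynomial. Since that verification is exactly the content of \cite{yoder2014fixed}, an acceptable route is to invoke that reference for the identity and supply only the elementary estimate on $1-\gamma^{-2}$ and the reflection count.
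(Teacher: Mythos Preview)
The paper does not supply a proof of this lemma at all: it is stated with the citation \cite{yoder2014fixed} and used as a black box in Appendix~\ref{app:2-dist-est}. Your proposal therefore goes well beyond the paper, giving a faithful and essentially correct sketch of the Yoder--Low construction itself --- the reduction to the two-dimensional invariant subspace, the phased-reflection iterates, the Chebyshev identity $|\braket{\psi_\0|U_L|\psi}|^2 = 1 - \delta^2\,T_L\big(\gamma\sqrt{1-\lambda^2}\big)^2$, and the elementary estimate $1-\gamma^{-2}=\tanh^2 u\leq u^2\leq L^{-2}\ln^2(2/\delta)$ that converts the hypothesis $L\geq\lambda^{-1}\log(2/\delta)$ into the fidelity bound. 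Since the paper's ``proof'' is simply the citation, your write-up is strictly more informative; the only caveat is that the core polynomial identity is, as you note, ultimately delegated back to \cite{yoder2014fixed}, so in the end both routes rest on the same external reference.
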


Using the appropriate operator $U_L$, we can therefore retrieve approximations $\ket{\psi_u} = U_L W_\tau \ket{u,\0\0} \approx \ket{p}$ and $\ket{\psi_v} = U_L W_\tau \ket{v,\0\0} \approx \ket{q}$.
We can now apply the SWAP test to these states, combined with amplitude amplification, to retrieve an estimation of $\braket{p|q}$.
To see this, note that
\[
\Pr_1 (U_{\mathrm{SWAP}}\ket{0} \ket{\psi_u} \ket{\psi_v} )
= \frac{1}{2} \ket{1} (\ket{\psi_u}\ket{\psi_v} - \ket{\psi_v}\ket{\psi_u}).
\]
As a consequence we can apply quantum amplitude estimation on the state $U_{\mathrm{SWAP}}\ket{0} \ket{\psi_u} \ket{\psi_v}$ with respect to the projector $\Pr_1$ to estimate the quantity
\[
\frac{1}{2} \big\| \ket{\psi_u}\ket{\psi_v} - \ket{\psi_v}\ket{\psi_u} \big\|^2 \\
= 1 - |\braket{\psi_u|\psi_v}|^2
\approx 1 - |\braket{p|q}|^2.
\]
Combined with the former estimates of $\|p\|$ and $\|q\|$ this leads to an estimate of the 2-distance we were looking for.
We formalize this in the following algorithm and theorem.

\begin{algorithm}
\caption{Quantum 2-distance Estimator} \label{alg:q2ld}
\normalsize
\textbf{Input:} parameters $N$ and $d$; query access to $N$-node graph $G$ with degree bound $d$; starting vertices $u$ and $v$; running time $t$; accuracy parameter $\epsilon$; confidence parameter $\delta$ \\
\textbf{Do:}
\begin{algorithmic}[1]
\State
use Algorithm \ref{alg:q2lt-m} to create estimates $\alpha$ and $\beta$ of $\|P^t\ket{u}\|$ resp.~$\|P^t\ket{v}\|$ \newline
\-\ to multiplicative error $1/4$, with probability $1-\delta/4$
\State
set $\mu \in O(\epsilon \max(\alpha,\beta)^{-2})$
\State
use Algorithm \ref{alg:q2lt-m} to create new estimates $\alpha$ and $\beta$ of $\|P^t\ket{u}\|$ resp.~$\|P^t\ket{v}\|$ \newline
\-\ to multiplicative error $\mu$, with probability $1-\delta/4$
\State
set $L \in \Omega(\min(\alpha,\beta)^{-1} \log \min(\alpha,\beta)^{-1})$ and $\tau \in \Omega(\sqrt{t \ln(N/\mu)})$
\State
apply $W_\tau$, $U_L$ and $U_{\mathrm{SWAP}}$ to create the state
\[
\ket{\psi}
= U_{\mathrm{SWAP}} \ket{0} \big( U_LW_\tau\ket{u,\0\0} \big) \big( U_LW_\tau\ket{v,\0\0} \big)
\]
\State
use amplitude estimation to create an estimate $\gamma$ of $\| \Pr_1 \ket{\psi} \|$ to error $\mu$, \newline
\-\ with probability $1-\delta/2$
\end{algorithmic}
\textbf{Output:} estimate $a = \alpha^2 + \beta^2 - 2\alpha\beta \sqrt{1-\gamma^2/2}$
\end{algorithm}

\begin{theorem}[Quantum 2-distance Estimator]
With probability at least $1-\delta$, Algorithm \ref{alg:q2ld} outputs an estimate $a$ such that
\[
\big| \|P^t\ket{u} - P^t\ket{v}\|^2 - a \big|
\leq \epsilon.
\]
With $\amax = \max\{\|P^t\ket{u}\|,\|P^t\ket{v}\|\}$ and $\amin = \min\{\|P^t\ket{u}\|,\|P^t\ket{v}\|\}$, the algorithm requires an expected number of QW steps bounded by
\[
O\left( \sqrt{t} \bigg( \frac{\amax}{\epsilon} + \frac{\amax^4}{\amin \epsilon^2} \bigg)
			\log \frac{\log N}{\delta} \log^{3/2} \frac{N}{\epsilon} \right).
\]
\end{theorem}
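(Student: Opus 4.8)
The plan is to read off the three quantities in the identity
\[
\|P^t\ket u - P^t\ket v\|^2 = \|P^t\ket u\|^2 + \|P^t\ket v\|^2 - 2\|P^t\ket u\|\,\|P^t\ket v\|\,\braket{p|q},
\]
with $p=P^t\ket u$, $q=P^t\ket v$ and $\braket{p|q}=\braket{p,q}/(\|p\|\|q\|)$, and to recombine them as the output $a=\alpha^2+\beta^2-2\alpha\beta\sqrt{1-\gamma^2/2}$ of Algorithm \ref{alg:q2ld}. I would estimate $\|p\|,\|q\|$ with the multiplicative-error variant of the 2-norm estimator (Algorithm \ref{alg:q2lt-m}), obtained from Theorem \ref{thm:ql2t} by tuning the amplitude-estimation precision to the running estimate, and estimate $\braket{p|q}$ by first unitarily preparing states $\ket{\psi_u}\approx\ket p$, $\ket{\psi_v}\approx\ket q$ with the QFF operator $W_\tau$ followed by the fixed-point amplitude amplification $U_L$ of Lemma \ref{lem:FPAA}, then running a SWAP test ($U_{\mathrm{SWAP}}$) and amplitude estimation on $\Pr_1$ to obtain an estimate $\gamma$ from which $|\braket{p|q}|^2$, hence the cross term, is recovered as in the output line. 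The reason $W_\tau$ alone does not suffice, and $U_L$ is needed, is that amplitude estimation for the SWAP test requires reflecting around $\ket{\psi_u},\ket{\psi_v}$, which is possible only when these are produced by an invertible operation rather than by the measurement in Algorithm \ref{alg:QFFg}.

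The first concrete step is the two-stage estimation of the norms. A coarse run of Algorithm \ref{alg:q2lt-m} to multiplicative error $1/4$ yields $\alpha=\Theta(\|p\|)$, $\beta=\Theta(\|q\|)$, and hence reveals $\max(\alpha,\beta)=\Theta(\amax)$ and $\min(\alpha,\beta)=\Theta(\amin)$. This is needed because every remaining parameter depends on $\amax$ or $\amin$: I would set the inner-product precision $\mu$ as a function of $\epsilon/\max(\alpha,\beta)^2$, set $L=\Theta\!\big(\min(\alpha,\beta)^{-1}\log\min(\alpha,\beta)^{-1}\big)$ so that, since the flat component of $W_\tau\ket{u,\0\0}$ has norm $\Omega(\amin)$ by the proof of Theorem \ref{thm:QFF-success}, Lemma \ref{lem:FPAA} rotates $W_\tau\ket{u,\0\0}$ to within $O(\mu)$ of $\ket p$ (and likewise for $v$), and set $\tau=\Theta(\sqrt{t\ln(N/\mu)})$ so that $\Pr_\0 W_\tau$ reproduces $P^t$ to the same accuracy. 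A second, fine run of Algorithm \ref{alg:q2lt-m} then gives $\alpha,\beta$ to multiplicative error $\mu$, and amplitude estimation on $\ket\psi=U_{\mathrm{SWAP}}\ket0\ket{\psi_u}\ket{\psi_v}$ with projector $\Pr_1$ gives $\gamma$ to additive error $\mu$.

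The error analysis is the crux, and the square root is where I expect the main obstacle. All $O(1)$ sub-routines succeed simultaneously with probability at least $1-\delta$ by a union bound over their failure budgets ($\delta/4,\delta/4,\delta/2$). Propagating the precisions, the $\alpha^2+\beta^2$ part is fine once $\alpha,\beta$ carry multiplicative error $O(\epsilon\amax^{-2})$; the delicate term is $2\alpha\beta\sqrt{1-\gamma^2/2}$, a square root of a quantity that can vanish when $p\perp q$. Using only $|\sqrt x-\sqrt y|\le\sqrt{|x-y|}$, an additive error $\mu$ on $\gamma^2$ becomes $O(\sqrt\mu)$ on $\sqrt{1-\gamma^2/2}$ and hence $O(\amax^2\sqrt\mu)$ on the whole cross term, so keeping this below $\epsilon$ forces $\mu=\Theta(\epsilon^2\amax^{-4})$ up to logarithmic factors; this square-root sensitivity is precisely what produces the $\amax^4/(\amin\epsilon^2)$ contribution in the final bound, and it also dictates the order of the estimations (one must learn $\amax$ before one can fix $\mu$).

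It remains to sum the costs, which holds in expectation because $\tau,L,\mu$ are fixed by the coarse estimates $\alpha,\beta$, accurate with high probability. Implementing $W_\tau$ uses $O(\tau)=\widetilde O(\sqrt t)$ QW steps; $U_L$ uses $O(L)$ reflections around $W_\tau\ket{\cdot}$, each $O(\tau)$ QW steps by Lemma \ref{lem:cost-refl}; the SWAP-test amplitude estimation (Lemma \ref{lem:QAE}) uses $O(\mu^{-1}\log\delta^{-1})$ reflections around $\ket\psi$, each of cost $O(L\tau)$, totalling $\widetilde O\big(\sqrt t\,\amax^4\amin^{-1}\epsilon^{-2}\big)$; and the two norm estimations contribute $\widetilde O\big(\sqrt t\,\amax\epsilon^{-1}\big)$, giving the additive $\amax/\epsilon$ term. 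Folding in the $\log^{1/2}(N/\mu)=O(\log^{1/2}(N/\epsilon))$ from $\tau$, the $\log\min(\alpha,\beta)^{-1}=O(\log N)$ from $L$, and the $\log(\log N/\delta)$ from boosting the confidence of the $O(\log N)$ internal amplitude-estimation rounds, yields the stated expected number of QW steps.
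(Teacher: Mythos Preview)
Your proof follows the same route as the paper's: the polarization identity, a coarse-then-fine multiplicative norm estimation to fix the parameters, unitary state preparation via $U_LW_\tau$ (so that reflections around $\ket{\psi_u},\ket{\psi_v}$ are available for the SWAP-test amplitude estimation), and the identification of the square root $\sqrt{1-\gamma^2/2}$ as the sensitivity bottleneck that forces the fine precision down to $\Theta(\epsilon^2/\amax^4)$ and produces the $\amax^4/(\amin\epsilon^2)$ term in the cost.

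One small point of bookkeeping: the paper keeps \emph{two} precision scales separate --- $\mu\in\Theta(\epsilon/\amax^2)$ for the fine norm estimates and for the target additive error on $\braket{p|q}$, and a finer $\nu=\Theta(\mu^2)=\Theta(\epsilon^2/\amax^4)$ for the state preparation $U_L$ and the amplitude-estimation call on $\gamma$ (since error $\nu$ on $\gamma$ becomes $O(\sqrt\nu)$ after the square root). In your write-up the symbol $\mu$ is used for both, which is why you first ``set $\mu$ as a function of $\epsilon/\max(\alpha,\beta)^2$'' and then later conclude ``$\mu=\Theta(\epsilon^2\amax^{-4})$''. This does not break the argument --- running the norm estimation at the unnecessarily fine $\nu$ level still fits inside the stated bound --- but distinguishing $\mu$ from $\nu$ as the paper does is what makes the cheaper $\widetilde O(\sqrt t\,\amax/\epsilon)$ norm-estimation contribution appear as a genuinely separate additive term.
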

\begin{proof}
We prove the theorem for
\[
\mu
= \frac{1}{26} \min\Big( 1, \frac{9\epsilon}{16\max(\alpha,\beta)^2} \Big), \quad
L
= \Big\lceil \frac{1}{\lambda} \log \frac{2}{\nu} \Big\rceil, \quad
\tau
= \Big\lceil \sqrt{2t} \ln^{1/2} \frac{4}{\lambda \nu} \Big\rceil,
\]
with $\lambda = \min(\alpha,\beta)/(1+\nu)$ and $\nu = \mu^2/11$.
We will denote $p = P^t\ket{u}$, $q = P^t\ket{v}$, $\ket{p} = p/\|p\|$, $\ket{q} = q/\|q\|$, $\amax^2 = \max(\|p\|,\|q\|)$ and $\amin = \min(\|p\|,\|q\|)$.
The algorithm estimates the quantity $\| p - q\|^2 = \|p\|^2 + \|q\|^2 - 2\|p\| \|q\| \braket{p|q}$ by separately estimating $\|p\|$, $\|q\|$ and $\braket{p|q}$ to error $O(\epsilon/\amax^2)$.

After the first step, we retrieve with probability at least $1-\delta/4$ estimates $\alpha$ and $\beta$ such that
\[
\frac{3}{4} \|p\| \leq \alpha \leq \frac{5}{4} \|p\|, \qquad
\frac{3}{4} \|q\| \leq \beta \leq \frac{5}{4} \|q\|.
\]
This proves that the parameter
\begin{equation} \label{eq:bound-mu}
\mu
= \frac{1}{26} \min\bigg(1,\frac{\epsilon}{(4\max(\alpha,\beta)/3)^2} \bigg)
\leq \frac{1}{26} \min\bigg(1,\frac{\epsilon}{\amax^2} \bigg),
\end{equation}
and $\mu \in \Theta(\min(1,\epsilon/\amax^2))$.
In step 3 we then create new estimates of $\|p\|$ and $\|q\|$ to multiplicative error $\mu$.
The combined success probability of both steps is $(1-\delta/4)^2 \geq 1-\delta/2$.
Following Theorem \ref{thm:ql2t-m} these steps require an expected number of QW steps in
\[
O\left( \frac{\sqrt{t} \amax}{\epsilon} \log \frac{\log N}{\delta}
			\log^{1/2} \frac{N}{\epsilon} \right).
\]

In the following steps of the algorithm we estimate $\braket{p|q} = \frac{\braket{p,q}}{\|p\| \|q\|}$ to additive error $\mu$ by combining QFF, amplitude amplification, the SWAP test and amplitude estimation.
Thereto we first rewrite
\[
\braket{p|q}
= \sqrt{1 - \frac{\| \ket{p}\ket{q} - \ket{q}\ket{p} \|^2}{2}},
\]
showing that we can use an estimate on $\| \ket{p}\ket{q} - \ket{q}\ket{p} \|$ to estimate $\braket{p|q}$.
Indeed, it is easily seen from a function plot that if we create an estimate $\kappa \in [0,\sqrt{2}]$ such that $\big| \| \ket{p}\ket{q} - \ket{q}\ket{p} \| - \kappa \big| \leq \mu^2$, then the estimate $\sqrt{1 - \smash[b]{\kappa^2/2}}$ will be $\mu$-close:
\begin{equation} \label{eq:ineq-nu}
\Big| \sqrt{1 - \smash[b]{\kappa^2/2}} - \braket{p|q} \Big|
\leq \mu.
\end{equation}
We now create an estimate of $\| \ket{p}\ket{q} - \ket{q}\ket{p} \|$.
By Lemma \ref{lem:FPAA} and Theorem \ref{thm:QFFg}, and our choice of $L$ and $\tau$, it holds that
\begin{align*}
\| U_L W_\tau \ket{u,\0\0} - \ket{p,\0\0} \|
&\leq (1-\nu^2) \| W_\tau \ket{u,\0\0}/\|W_\tau \ket{u,\0\0}\| - \ket{p,\0\0} \| + \nu \\
&\leq (1-\nu^2) \nu + \nu \leq 2\nu,
\end{align*}
with $\nu = \mu^2/11$, and similarly for $U_L W_\tau \ket{v,\0\0}$.
If we set $\ket{\psi_u} = U_L W_\tau \ket{u,\0\0}$ and $\ket{\psi_v} = U_L W_\tau \ket{v,\0\0}$, then this implies that
\[
\big| \| \ket{\psi_u}\ket{\psi_v} - \ket{\psi_v}\ket{\psi_u} \|
		 - \| \ket{p}\ket{q} - \ket{q}\ket{p} \| \big|
\leq 8\nu (1 + 2\nu).
\]
Now we can apply amplitude estimation, as in Lemma \ref{lem:QAE}, to the state $U_{\mathrm{SWAP}} \ket{0}\ket{\psi_u}\ket{\psi_v}$ and projector $\Pr_1$ with success probability $1-\delta/2$ and error $\nu$.
If successful this returns an estimate $\gamma$ of $\| \ket{\psi_u}\ket{\psi_v} - \ket{\psi_v}\ket{\psi_u} \|$ to error $\nu$.
Combined with the above inequality this shows that
\[
\big| \| \ket{p}\ket{q} - \ket{q}\ket{p} \| - \gamma \big|
\leq \nu + 8\nu (1 + 2\nu)
\leq \mu^2.
\]
By \eqref{eq:ineq-nu} this leads to the promised bound $\big| \sqrt{1 - \smash[b]{\gamma^2/2}} - \braket{p|q} \big| \leq \mu$.

Implementing $W_\tau$, $U_L$ and $U_{\mathrm{SWAP}}$ requires a number of QW steps $O(\tau) + O(L)$, bounded by
\[
O\left( \frac{\sqrt{t}}{\amin} \log \frac{\amax}{\epsilon\amin}
			\log^{1/2} \frac{N \amax}{\epsilon} \right).
\]
Applying amplitude estimation with success probability $1-\delta/2$ and error $\nu \in \Theta(\epsilon^2/\amax^4)$ requires $O\big(\frac{\amax^4}{\epsilon^2} \log \frac{1}{\delta}\big)$ reflections around the state $U_{\mathrm{SWAP}} \ket{0}\ket{\psi_u}\ket{\psi_v}$.
We can implement each such reflection using the same number of QW steps required to implement the operators $W_\tau$, $U_L$ and $U_{\mathrm{SWAP}}$.
This leads to a total number of QW steps bounded by
\[
O\left( \frac{\sqrt{t} \amax^4}{\amin \epsilon^2} \log \frac{1}{\delta} \log \frac{\amax}{\epsilon\amin}
			\log^{1/2} \frac{N \amax}{\epsilon} \right).
\]

Combined with the first approximation part, we find estimates $\alpha$, $\beta$ and $\gamma$ such that $|\alpha - \|p\|| \leq \mu\|p\|$, $|\beta - \|q\|| \leq \mu \|q\|$ and $|\gamma - \braket{p|q}| \leq \mu$.
This allows to prove the claimed error of the estimate
\begin{align*}
\big| \alpha^2 + \beta^2 - 2\alpha \beta \gamma - \| p - q \|^2 \big|
&\leq \mu(2+\mu)(\|p\|^2 + \|q\|^2) \\
&\qquad + 2\|p\| \|q\| \big[\mu(2+\mu) (\braket{p|q}+\mu) + (1+\mu)^2 \mu \big] \\
&\leq 3\mu(\|p\|^2 + \|q\|^2) + 20\mu \|p\| \|q\| \\
&\leq 26 \mu \max(\|p\|,\|q\|)^2
\leq \epsilon,
\end{align*}
using the bound \eqref{eq:bound-mu}.
The total success probability can be bounded by $(1-\delta/2)^2 \geq 1 - \delta$, and the expected number of QW steps by
\[
O\left( \sqrt{t} \bigg( \frac{\amax}{\epsilon} + \frac{\amax^4}{\amin \epsilon^2} \bigg)
			\log \frac{\log N}{\delta} \log^{3/2} \frac{N}{\epsilon} \right).
\]
\end{proof}

\subsection*{2-norm Estimator to Multiplicative Error}

In the above estimator for the 2-distance we wish to estimate $\| P^t\ket{u}\|$ to some multiplicative error $\epsilon$, without having a bound on $\| P^t\ket{u}\|$.
We present such an estimator in the below algorithm and theorem.

\begin{algorithm}
\caption{Quantum Multiplicative 2-norm Estimator} \label{alg:q2lt-m}
\normalsize
\textbf{Input:} parameters $N$ and $d$; query access to $N$-node graph $G$ with degree bound $d$; starting vertex $u$; running time $t$; accuracy parameter $\epsilon$; confidence parameter $\delta$ \\
\textbf{Do:}
\begin{algorithmic}[1]
\For{$k = 1 \dots T \in O(\log N)$}
\State
use Algorithm \ref{alg:ql2t} to create estimate $\alpha$ of $\|P^t\ket{u}\|$ to error $\epsilon_k = \epsilon 2^{-k-2}$,\newline
\-\ \hspace{5mm} with probability $1-\delta'$ for $\delta' \in O(\delta \log^{-1} N)$
\State
if $\alpha \geq (1+\epsilon) 2^{-k}$, abort \textbf{for}-loop
\EndFor
\end{algorithmic}
\textbf{Output:} $\alpha$
\end{algorithm}

\begin{theorem}[Quantum Multiplicative 2-norm Estimator] \label{thm:ql2t-m}
With probability at least $1-\delta$, Algorithm \ref{alg:q2ld} outputs an estimate $\alpha$ such that
\[
\big| \|P^t\ket{u}\| - \alpha \big|
\leq \epsilon \|P^t\ket{u}\|.
\]
The algorithm requires an expected number of QW steps bounded by
\[
O\left( \frac{\sqrt{t}}{\epsilon \|p\|} \log \frac{\log N}{\delta} \log^{1/2} \frac{N}{\epsilon} \right).
\]
\end{theorem}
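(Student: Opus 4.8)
The plan is to show that the halving search over the exponent $k$ pins down the dyadic scale of $\|P^t\ket{u}\|$, and that the additive estimate produced at that scale automatically carries the claimed multiplicative accuracy. Write $p = P^t\ket{u}$; since $N^{-1/2}\le\|p\|\le 1$, the scale $2^{-k}$ with $2^{-k}\approx\|p\|$ is reached within $T\in O(\log N)$ iterations.

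First I would analyse a single iteration assuming the call to Algorithm~\ref{alg:ql2t} succeeds, i.e.\ $|\alpha-\|p\||\le\epsilon_k=\epsilon 2^{-k-2}$. Two facts drive everything: (i) if $2^{-k}\ge\|p\|$ then $\alpha\le\|p\|+\epsilon 2^{-k-2}<(1+\epsilon)2^{-k}$, so the loop does not abort; (ii) if $2^{-k}\le\|p\|/(1+\tfrac{5}{4}\epsilon)$ then $\alpha\ge\|p\|-\epsilon 2^{-k-2}\ge(1+\epsilon)2^{-k}$, so the loop aborts. Hence the loop aborts at some $k^\star$, and at the abort the combination $\alpha\ge(1+\epsilon)2^{-k^\star}$ with $|\alpha-\|p\||\le\epsilon 2^{-k^\star-2}$ forces $2^{-k^\star}<\|p\|$, so $\epsilon_{k^\star}<\epsilon\|p\|$ and thus $|\alpha-\|p\||<\epsilon\|p\|$, the desired multiplicative bound. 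By (ii) the abort happens no later than $k=\lceil\tfrac12\log_2 N+O(1)\rceil$, which fixes $T\in O(\log N)$.

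For the $1-\delta$ guarantee I would union-bound over the at most $T$ invocations of Algorithm~\ref{alg:ql2t}, each failing with probability $\le\delta'\in O(\delta/\log N)$; conditioned on all of them succeeding, the single-iteration analysis applies. For the running time, Theorem~\ref{thm:ql2t} gives that the $k$-th call costs $O\big(\epsilon_k^{-1}\sqrt t\,\log(1/\delta')\log^{1/2}(N/\epsilon_k)\big)=O\big(2^{k}\epsilon^{-1}\sqrt t\,\log(1/\delta')\log^{1/2}(N/\epsilon)\big)$ QW steps, using $\|p\|\ge N^{-1/2}$ and $2^k\le 2^T=O(\sqrt N)$ to absorb the logarithmic factor. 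When every call succeeds one has $2^{k^\star}=\Theta(1/\|p\|)$ and the per-iteration costs grow geometrically, so the total over the $k^\star$ iterations is dominated by the last term, $O\big((\epsilon\|p\|)^{-1}\sqrt t\,\log(1/\delta')\log^{1/2}(N/\epsilon)\big)$; since $\log(1/\delta')\in O(\log(\log N/\delta))$ this matches the claim.

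The step I expect to be the real obstacle is upgrading this to a bound on the \emph{expected} number of QW steps. A faulty intermediate estimate can overshoot the correct scale, and in the worst case the loop runs all $T$ iterations, the last of which costs $\Theta(\sqrt{tN}/\epsilon)$ QW steps --- far above the target when $\|p\|$ is a constant. The remedy is that overshooting the level forced to abort by (ii) by $j$ further iterations requires $j$ independent faulty calls, an event of probability $\le(\delta')^{j}$, so the expected work at level $k^\star+j$ decays like $(2\delta')^{j}$ times the $2^{k^\star}$-cost; summing this geometric tail (using $\delta'<1/2$) keeps the expectation at $O\big(\tfrac{\sqrt t}{\epsilon\|p\|}\log\tfrac{\log N}{\delta}\log^{1/2}\tfrac{N}{\epsilon}\big)$. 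Making this tail bound precise, and checking the edge cases $\|p\|\approx1$ and $\|p\|\approx N^{-1/2}$, is the only part that needs genuine care rather than routine estimation.
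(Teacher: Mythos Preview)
Your proposal is correct and follows essentially the same approach as the paper: a dyadic search that halts once the additive estimate clears the threshold $(1+\epsilon)2^{-k}$, with correctness deduced from the abort condition, the $1-\delta$ guarantee obtained by union bound over $O(\log N)$ calls, and the cost summed geometrically up to the abort level $2^{k^\star}=\Theta(\|p\|^{-1})$. Your tail argument for the expectation (overshooting by $j$ levels requires $j$ independent failures, giving a $(2\delta')^j$-decaying contribution) is in fact a bit more explicit than the paper's, which simply notes that beyond level $\lceil\log\|p\|^{-1}\rceil$ any successful call aborts and bounds the expected overshoot by $(1-\delta)^{-1}\le 2$ extra iterations before summing.
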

\begin{proof}
We will prove the theorem for $T = \lceil \frac{1}{2} \log N \rceil$ and $\delta' = \delta/T$.
We do so by showing that with probability at least $1-\delta$ the loop aborts such that the value of $\alpha$ forms an estimate of $\|p\|$ to multiplicative error $\epsilon$, where we denote $p = P^t\ket{u}$.
We first assume that every call to Algorithm \ref{alg:ql2t} is successful, the probability of which we will bound afterwards.
Let $a_k$ be the value of $\alpha$ in the $k$-th iteration, so that $|\|p\| - a_k| \leq \epsilon_k$.
If the loop is stopped at the $k$-th iteration then $a_k \geq (1+\epsilon) 2^{-k}$ or equivalently $\epsilon_k \leq \frac{\epsilon}{1+\epsilon} a_k$.
Combined with the fact that $a_k \leq \|p\| + \epsilon_k$ this shows that $\epsilon_k \leq \frac{\epsilon}{1+\epsilon} (\|p\| + \epsilon_k)$ or equivalently $\epsilon_k \leq \epsilon \|p\|$, so that we find an estimate with multiplicative error $\epsilon$.

If the first $\big\lceil \log \|p\|^{-1} \big\rceil$ calls to the 2-norm estimator are successful, then the algorithm stops and outputs a correct estimate.
We can bound this number of calls by $T = \big\lceil \frac{1}{2}\log N \big\rceil$ using the fact that $\|p\| \geq N^{-1/2}$.
The probability that this happens, i.e., that none of the first $\big\lceil \log \|p\|^{-1} \big\rceil$ implementations of the 2-norm tester fails, is at least $1 - \big\lceil \log \|p\|^{-1} \big\rceil \delta' \geq 1-\delta$ if we set $\delta' = \delta/T$.
This proves the success probability of the algorithm.

To bound the runtime, we first note that the $k$-th iteration runs the 2-norm tester with error $\epsilon_k = \epsilon 2^{-k}$ and success probability $1 - \delta'$, which by Theorem \ref{thm:ql2t} requires a number of QW steps
\[
O\left( \frac{2^k \sqrt{t}}{\epsilon} \log \frac{\log N}{\delta} \log^{1/2} \frac{2^k N}{\epsilon} \right).
\]
Now we bound the expected number of iterations.
If the algorithm succeeds, then it aborts after $\big\lceil \log \|p\|^{-1} \big\rceil$ iterations.
If this does not happen, then either it aborts earlier, resulting in a number of iterations smaller than $\big\lceil \log \|p\|^{-1} \big\rceil$, or it aborts later.
However, after $\big\lceil \log \|p\|^{-1} \big\rceil$ iterations, any successful call to the 2-norm tester will abort the algorithm, which happens per iteration with probability at least $1-\delta$.
In such case the expected number of iterations can be bounded by $(1-\delta)^{-1} \leq 2$ under the assumption that $\delta \leq 1/2$.
In any case we see that the expected number of iterations is $O(\log \|p\|^{-1})$.
Now we can use the fact that $\sum_{k=0}^{b} 2^k \log^{1/2} 2^k \in O(2^b \sqrt{b}) \in O(\|p\|^{-1} \log^{1/2} \|p\|)$ for $b \in O(\log \|p\|^{-1})$ to bound the total expected number of QW steps by
\[
O\left( \frac{\sqrt{t}}{\epsilon \|p\|} \log \frac{\log N}{\delta} \log^{1/2} \frac{N}{\epsilon} \right).
\]
This finalizes the proof.
\end{proof}

\end{appendices}

\end{document}